\newcommand{\protocol}[3][0.7\columnwidth]{
	\begin{boxedminipage}[t]{#1}
		\begin{center}
		\scriptsize{\textbf{#2}}
		\end{center}
		\procedure[mode=text, 
		linenumbering,
		codesize=\footnotesize]{}{
			#3
		}
	\end{boxedminipage}
}
\theoremstyle{definition}
\newtheorem{theorem}{Theorem}
\newtheorem{lemma}{Lemma}
\def\Snospace~{\S{}}
\let\old@lstKV@SwitchCases\lstKV@SwitchCases
\def\lstKV@SwitchCases#1#2#3{}
\let\lstKV@SwitchCases\old@lstKV@SwitchCases
    \def\lst@PlaceNumber{\lst@linebgrd}%
\def\lst@PlaceNumber{\llap{\normalfont
                \lst@numberstyle{\thelstnumber}\kern\lst@numbersep}\lst@linebgrd}\\%
\def\lst@PlaceNumber{\rlap{\normalfont
                \kern\linewidth \kern\lst@numbersep
                \lst@numberstyle{\thelstnumber}}\lst@linebgrd}%
\definecolor{auburn}{rgb}{0.43, 0.21, 0.1}
\definecolor{codegreen}{rgb}{0,0.6,0}
\definecolor{codegray}{rgb}{0.5,0.5,0.5}
\definecolor{codepurple}{rgb}{0.58,0,0.82}
\definecolor{linkcolor}{rgb}{0.7,0,0}
\definecolor{Gray}{gray}{0.85}
\lstdefinelanguage
   [x64]{Assembler}     
   [x86masm]{Assembler} 
   {morekeywords={CDQE,CQO,CMPSQ,CMPXCHG16B,JRCXZ,LODSQ,MOVSXD, %
                  POPFQ,PUSHFQ,SCASQ,STOSQ,IRETQ,RDTSCP,SWAPGS, %
                  rax,rdx,rcx,rbx,rsi,rdi,rsp,rbp, %
                  r8,r8d,r8w,r8b,r9,r9d,r9w,r9b, %
                  r10,r10d,r10w,r10b,r11,r11d,r11w,r11b, %
                  r12,r12d,r12w,r12b,r13,r13d,r13w,r13b, %
                  r14,r14d,r14w,r14b,r15,r15d,r15w,r15b}} 
\lstdefinestyle{codestyle}{
  basicstyle=\footnotesize\ttfamily,
  numbers=none,
  escapeinside={<@}{@>},
}
\newcommand{\name}{\textsc{Agora}\xspace}
\newcommand{\confllvm}{ConfLLVM\xspace}
\newcommand{\confverify}{ConfVERIFY\xspace}
\newcommand{\changes}[1]{\ignore{{[#1]}}}
\newcommand{\xchanges}[1]{#1}
\newcommand{\ndsschanges}[1]{#1}
\newcommand{\spchanges}[1]{#1}
\newif\ifexample\examplefalse
\begin{document}

\title{Agora: Trust Less and Open More in Verification for Confidential Computing}

\author{Hongbo Chen}
\orcid{0000-0001-9922-4351}
\affiliation{%
  \institution{Indiana University Bloomington}
  \city{Bloomington}
  \country{USA}
}
\email{hc50@iu.edu}

\author{Quan Zhou}
\orcid{0009-0003-3497-7848}
\affiliation{%
  \institution{The Pennsylvania State University}
  \city{University Park}
  \country{USA}
}
\email{qfz5074@psu.edu}

\author{Sen Yang}
\orcid{0000-0002-8866-2097}
\affiliation{%
  \institution{Yale University}
  \city{New Haven}
  \country{USA}
}
\email{sen.yang@yale.edu}

\author{Sixuan Dang}
\orcid{0000-0002-3241-9530}
\affiliation{%
  \institution{Duke University}
  \city{Durham}
  \country{USA}
}
\email{sixuan.dang@duke.edu}

\author{Xing Han}
\orcid{0009-0004-9907-0988}
\affiliation{%
  \institution{Hong Kong University of Science and Technology}
  \city{Hong Kong}
  \country{China}
}
\email{xhanbg@connect.ust.hk}

\author{Danfeng Zhang}
\orcid{0000-0003-1942-6872}
\affiliation{%
  \institution{Duke University}
  \city{Durham}
  \country{USA}
}
\email{danfeng.zhang@duke.edu}

\author{Fan Zhang}
\orcid{0000-0002-8525-4514}
\affiliation{%
  \institution{Yale University}
  \city{New Haven}
  \country{USA}
}
\email{f.zhang@yale.edu}

\author{XiaoFeng Wang}
\orcid{0000-0002-0607-4946}
\affiliation{%
  \institution{Nanyang Technological University}
  \city{Singapore}
  \country{Singapore}
}
\email{xiaofeng.wang@ntu.edu.sg}

\begin{abstract}
\spchanges{Confidential computing (CC), designed for security-critical scenarios, uses remote attestation to guarantee code integrity on cloud servers.
However, CC alone cannot provide assurance of high-level security properties (e.g., no data leak) on the code.
In this paper, we introduce a novel framework, \name, scrupulously designed to provide a trustworthy and open verification platform for CC.}
To prompt trustworthiness, we observe that certain verification tasks can be delegated to untrusted entities, while the corresponding (smaller) validators are securely housed within the trusted computing base (TCB).
Moreover, through a novel blockchain-based bounty task manager, it also utilizes crowdsourcing to remove trust in complex theorem provers. These synergistic techniques successfully ameliorate the TCB size burden associated with two procedures: binary analysis and theorem proving. To prompt openness, \name supports a versatile assertion language that allows verification of various security policies. Moreover, the design of \name enables untrusted parties to participate in any complex processes out of \name's TCB.
By implementing verification workflows for software-based fault isolation, information flow control, and side-channel mitigation policies, our evaluation demonstrates the efficacy of \name.
\end{abstract}

\begin{CCSXML}
<ccs2012>
   <concept>
       <concept_id>10011007.10011074.10011099.10011692</concept_id>
       <concept_desc>Software and its engineering~Formal software verification</concept_desc>
       <concept_significance>500</concept_significance>
       </concept>
   <concept>
       <concept_id>10002978.10003022.10003023</concept_id>
       <concept_desc>Security and privacy~Software security engineering</concept_desc>
       <concept_significance>500</concept_significance>
       </concept>
 </ccs2012>
\end{CCSXML}

\ccsdesc[500]{Software and its engineering~Formal software verification}
\ccsdesc[500]{Security and privacy~Software security engineering}

\keywords{Program verification, static analysis, trusted computing base, confidential computing, smart contract.}


\maketitle


\definecolor{myblue}{RGB}{70,130,180}
\definecolor{mydeepblue}{RGB}{65,105,225}
\definecolor{myburgundy}{RGB}{110,10,30}
\definecolor{mygreen}{RGB}{0,105,148}
\definecolor{mygrey}{RGB}{180, 180, 200}
\definecolor{idealfun}{RGB}{165,42,42}
\definecolor{check}{RGB}{11,141,10}
\definecolor{auburn}{rgb}{0.43, 0.21, 0.1}
\definecolor{codegreen}{rgb}{0,0.6,0}
\definecolor{codegray}{rgb}{0.5,0.5,0.5}
\definecolor{codepurple}{rgb}{0.58,0,0.82}
\definecolor{myviolet}{rgb}{0.58,0,0.82}
\definecolor{mypink}{RGB}{227,115,131}

\newcommand{\fanz}[1]{{\textcolor{purple}{[FZ: #1]}}}
\newcommand{\danfeng}[1]{{\textcolor{check}{DZ: #1}}}
\newcommand{\hongbo}[1]{{\textcolor{blue}{HC: #1}}}
\newcommand{\quan}[1]{{\textcolor{codepurple}{QZ: #1}}}
\newcommand{\sen}[1]{{\textcolor{olive}{SY: #1}}}
\newcommand{\done}{\textcolor{red}{\checkmark}}

\algnewcommand\SmallBlock{\small}

\newcommand{\BBH}{\text{BBH}\xspace}
\newcommand{\BTM}{\text{BTM}\xspace}
\newcommand{\skbtm}{\mathsf{sk_{BTM}}\xspace}
\newcommand{\pkbtm}{\mathsf{pk_{BTM}}\xspace}
\newcommand{\addrbtm}{\mathsf{addr_{BTM}}\xspace}
\newcommand{\addrbbh}{\mathsf{addr_{BBH}}\xspace}
\newcommand{\addrbbhsc}{\mathsf{addr_{BBH}^{SC}}\xspace}
\newcommand{\addrbbhbtm}{\mathsf{addr_{BBH}^{BTM}}\xspace}
\newcommand{\basicreward}{\mathsf{R_{basic}}}
\newcommand{\bugreward}{\mathsf{R_{bug}}}
\newcommand{\query}{\textproc{query}}
\newcommand{\determine}{\textproc{checkIsGenuine}}
\newcommand{\validate}{\textproc{validate}}
\newcommand{\mapping}{\textproc{mapping}}
\newcommand{\verifySignature}{\textproc{verifySig}}
\newcommand{\verifyAttestation}{\textproc{verifyAttest}}
\newcommand{\attest}{\textproc{attest}}
\newcommand{\pay}{\textproc{pay}}
\newcommand{\update}{\textproc{update}}
\newcommand{\unsat}{\texttt{UNSAT}\xspace}
\newcommand{\sat}{\texttt{SAT}\xspace}
\newcommand{\btmhash}{H_\text{BTM}}
\newcommand{\bundlehash}{H_\text{B}}
\newcommand{\functionhash}{H_\text{F}}
\newcommand{\blocktime}{\mathsf{time_{block}}}
\newcommand{\smartcontract}{\text{smart contract}\xspace}
\newcommand{\verifiedtasks}{\mathsf{verifiedTasks}}
\newcommand{\functions}{\mathsf{functions}}
\newcommand{\updatedfunctions}{\mathsf{updatedFunctions}}
\newcommand{\buggytasks}{\mathsf{buggyTasks}}
\newcommand{\taskbundles}{\mathsf{taskBundles}}

\newcommand{\bnf}[1]{\langle {\tt {#1}} \rangle}
\newcommand{\addr}[1]{\ifmmode{\tt 0x#1}\else ${\tt 0x#1}$\fi}
\newcommand{\choice}{\; \mid \;}
\newcommand{\bugbountyhunter}{BBH}
\newcommand{\mytilde}{\raise.17ex\hbox{$\scriptstyle\mathtt{\sim}$}}

\newcommand{\ignore}[1]{\iffalse{#1}\fi}

\newcommand{\parhead}[1]{\vspace{5pt}\noindent {\it {#1}.}}

\NewDocumentCommand{\vc}{o}{%
  \IfValueTF{#1}{%
    \IfStrEqCase{#1}{%
      {first}{\textit{assertion\xspace}}%
      {plural}{assertions\xspace}%
      {s}{assertions\xspace}%
      {initcap}{Assertion\xspace}%
    }[\textbf{Unknown option.\xspace}]
  }{
    assertion\xspace}%
}

\NewDocumentCommand{\vcgen}{o}{%
  \IfValueTF{#1}{%
    \IfStrEqCase{#1}{%
      {first}{\textit{\vc[first]generator\xspace}}%
      {plural}{\vc~generators\xspace}%
      {s}{\vc~generators\xspace}%
      {initcap}{\vc[initcap] Generator\xspace}%
    }[\textbf{Unknown option.}]
  }{%
    \vc generator\xspace%
  }%
}

\NewDocumentCommand{\regulation}{o}{%
  \IfValueTF{#1}{%
    \IfStrEqCase{#1}{%
      {first}{\textit{obligation\xspace}}%
      {plural}{obligations\xspace}%
      {s}{obligations\xspace}%
      {initcap}{Obligation\xspace}%
    }[\textbf{Unknown option.}]
  }{%
    obligation\xspace%
  }%
}

\section{Introduction}
\label{sec:intro}

\def\ccstory{\true}
\ifdefined\ccstory{


\spchanges{
Confidential computing (CC), powered by hardware Trusted Execution Environments (TEEs), has gained widespread adoption in CPUs and GPUs across modern cloud platforms.
However, their security primitives fail to address users' evolving demand in protecting data-in-use.
While remote attestation provides cryptographic proof of an TEE program's integrity through hash validation, it fails to establish meaningful security assurance (e.g., the CC application meets data confidentiality requirements).
Consequently, users are forced to place trust in a cryptographic hash value rather than being presented with concrete evidence proving the TEE program's security properties.
}

Static code analysis and program verification serve as the bedrock 
of the development of reliable and secure software, providing a defensive shield for critical system components. For example,
general verification frameworks have been developed and employed to verify diverse security policies~\cite{angr,kirchner2015framac,cruanes2013etb}.
However, the implicit trust placed in these heavyweight tools is against the design goal of TEEs: to reduce the Trusted Computing Base (TCB). Blindly trusting the verification frameworks also presents significant security risks. For example,
the verification stack, comprising program analyses and theorem provers, has exhibited various vulnerabilities despite extensive testing.
Sophisticated analyses have been reported with errors, such as the adversarial examples~\cite{sun2024validating} and vulnerabilities (e.g., CVE-2023-2163, CVE-2024-45020) in the Linux eBPF verifier.
Similarly, SMT solvers have faced both soundness issues~\cite{mansur2020detecting, park2021generative} and security vulnerabilities, as evidenced by CVE-2020-19725 and CVE-2024-37794.
%

\spchanges{Moreover, the monopolistic entities in the \emph{closed} ecosystem (with program analyses, theorem provers and so on) prevent contributions from third parties. For example, when a research group develops a new SMT solver that surpasses well-known solvers for certain kinds of constraints, how can they quickly earn the trust of users to participate in the program verification ecosystem?
Those issues are also seen in
application stores like Apple App Store and Google Play, which integrate a review process for the applications (e.g., security and privacy policy compliance)~\cite{AppRevie20, Preparey94}.}
Previous research has highlighted issues in closed verification ecosystems maintained by prominent companies~\cite{lalaine, koch2022keeping}.

One alternative approach is to develop lightweight verification tools tailored for \emph{specific} security policies, which usually bear a smaller TCB, plus some level of independence on closed ecosystems (e.g., via implementing a customized solver).
However, they are incapable of adapting to new policies due to policy disparities, such as underlying assumptions and domain-specific proving techniques and solvers.
For example, as recent developments in TEE support paradigms such as virtual machine and WebAssembly, we can also reuse existing policy-specific verifiers (e.g., VeriWASM~\cite{veriwasm} for software-based fault isolation). But the extensibility limitation persists: VeriWASM~\cite{veriwasm} is built on a tailored abstract interpretation for software-based fault isolation (SFI). Hence, adapting it to new policies, such as information flow control (IFC), or new proving techniques, such as symbolic execution, is infeasible.
Consequently, achieving comprehensive security—typically requiring the verification of multiple policies—often involves stacking disparate policy-specific verifiers. Unfortunately, this practice can lead to an inflated TCB, increasing system complexity and potentially introducing new vulnerabilities.


In this paper, we present \name, a verification framework that strikes the best features of existing approaches. Compared with general verification frameworks, \name is a \emph{more open} verification service that requires \emph{less trust} from users.
``Open more'' implies that an inclusive verification ecosystem welcomes participation from \emph{any} individual.
``Trust less'' implies that trust can be removed from the verification ecosystem participants, including the dictatorial parties in existing ecosystems.
Meanwhile, unlike policy-specific verifiers, the service also maintains compatibility with versatile policies and provides a \emph{comparable} TCB.

Achieving both trustworthiness and openness in one system is challenging:
supporting the verification of diverse policies inherently conflicts with ``trust less''.
General verification frameworks often include a series of static analyzers (e.g., alias analysis and dependency analysis), proving techniques (e.g., abstract interpretation and symbolic execution) and SMT solvers as their trusted components, significantly enlarging the TCB. To address this challenge, 
\begin{itemize}
    \item 
    \name separates verification into \emph{generating} policy-related facts from a program, as well as \emph{checking} the correctness of those facts and the overall security.
    This separation enables the delegation of complex program analyses to the untrusted domain, thereby reducing the TCB size.
    The trusted domain retains only a lightweight validator responsible for checking those untrusted facts against both the target program's semantics and policy requirements.
    This was made possible by incorporating an interfacing language which is simple but expressive enough to encode a variety of policy specifications.
    \item 
    \spchanges{\name involves a novel smart contract-based approach for theorem proving.}
    This system crowdsources theorem proving tasks and validates the results through a protocol to gauge the participants' contribution to verification (\autoref{sec:btm}).
    This design removes constraint solving from the TCB, leaving only a minimal component that checks the validity of a constraint solution from untrusted parties.
    Additionally, smart contract naturally allows verification results to be audited by end users, providing the required transparency.
\end{itemize}
}\else{
\remarks{Below is the original version of the introduction. We are merging the CC use scenario into this version as one candidate of the new introduction section.}

Static code analysis and program verification serve as the bedrock of developing reliable and secure software, providing essential safeguards for security-sensitive applications.
\ignore{\quan{... for security sensitive use scenarios such as confidential computing? And then expand with a few sentences about CC here?}}
\spchanges{
Various verification techniques are widely employed to prove high-level security properties, such as data confidentiality and controlled information flow.
A prominent example is confidential computing (CC), an emerging paradigm that leverages hardware-based Trusted Execution Environments (TEEs) for secure data processing on the cloud.
While TEEs ensure the integrity of the enclave program by checking measurement hashes via remote attestation,
the lack of high-level security guarantees in CC ecosystem can be resolved by a verification service.
}

\spchanges{
General program verification frameworks have been developed and employed to verify diverse security policies~\cite{angr,kirchner2015framac,cruanes2013etb}.
However, the implicit trust placed in these verification tools presents significant security risks.
The verification stack, comprising program analyzers and SMT solvers, has exhibited various vulnerabilities despite extensive testing.
SMT solvers have faced both soundness issues~\cite{mansur2020detecting, park2021generative} and security vulnerabilities, as evidenced by CVE-2020-19725 and CVE-2024-37794.
Similarly, sophisticated analyzers have also led to errors, such as the adversarial examples~\cite{sun2024validating} and vulnerabilities (e.g., CVE-2023-2163, CVE-2024-45020) in the Linux eBPF verifier.
%
}
\ignore{General verification frameworks, such as~\cite{binsec, bap}, are 
designed to verify a wide range of safety and security properties. Yet, in order to establish trust in the verified software, end users have to \emph{blindly trust} the correctness of the frameworks, analyses/verifiers built on top of them, as well as well-known SMT solvers underpinning the frameworks. The unconditional trust is usually justified by the fact that those components are developed by reputable parties and are well-tested. 
However, for example, even well-tested SMT solvers are prone to numerous bugs, including soundness issues~\cite{mansur2020detecting, park2021generative, bringolf2022finding, winterer2020unusual, winterer2020validating}.}
\ignore{
Moreover, the monopolistic entities in the \emph{closed} ecosystem prevent contributions from third parties. For example, when a research group develops a new SMT solver that surpasses well-known solvers for certain kinds of constraints, how can they quickly earn the trust of users to participate in the program verification ecosystem?
}
\ignore{Those issues are also seen in
application stores like Apple App Store and Google Play, which integrate a review process for the applications (e.g., security and privacy policy compliance)~\cite{AppRevie20, Preparey94}.}

\ignore{
Previous research has highlighted issues in closed verification ecosystems maintained by prominent companies~\cite{koch2022keeping}.
One natural solution is to provide a more open verification framework, embracing the intelligence from the general public.
\quan{I think it should be okay to directly go into the next paragraph.}
}


One alternative approach is to develop lightweight verification tools tailored for \emph{specific} security policies \spchanges{or application scenarios.
They usually bear smaller TCB and are more trustworthy.}
However, they face significant limitations in adaptability and integration with other policies or scenarios due to their specificities, including underlying assumptions and domain-specific proving techniques.
For example, VeriWASM~\cite{veriwasm} verifies software-based fault isolation (SFI) compliance via abstract interpretation,
but adapting it to new policies, such as information flow control (IFC), or alternative analysis techniques, such as symbolic execution, requires substantial redevelopment effort.
\spchanges{
Consequently, achieving comprehensive security—typically requiring the verification of multiple policies—often involves stacking disparate verifiers. Unfortunately, this practice can lead to an inflated TCB, increasing system complexity and potentially introducing new vulnerabilities.
}

\spchanges{
Recognizing the critical need to \textit{trust less} in verification, we introduce \name.
Our core design principle is to delegate complex, error-prone tasks to untrusted entities while maintaining lightweight validators within the TCB.
This approach leverages a fundamental insight in verification: the complexity of program analysis and constraint solving far exceeds that of result validation.
By shifting the complex logic burden outside the TCB and realizing straightforward results validation, \name effectively reduces the attack surface within the verification process.
\name successfully combines the advantages of both general-purpose and policy-specific verifiers while introducing a unique benefit: its architecture inherently supports a \textit{more open} ecosystem.
As validators process untrusted inputs securely, \name enables external contributors—including developers and researchers—to participate in program analysis and constraint-solving without compromising security integrity.
\ignore{In light of this observation, we put forth \name, a verification framework that strikes the best features of existing verification systems.
Compared with general verification frameworks, \name is a \emph{more open} verification service that requires \emph{less trust} from users.
``Open more'' implies that an inclusive verification ecosystem welcomes participation from \emph{any} individual.
``Trust less'' implies that trust can be removed from the verification ecosystem participants, including the dictatorial parties in existing ecosystems.
Meanwhile, unlike policy-specific verifiers, the service also maintains compatibility with versatile policies and provides a \emph{compariable} TCB.}
}
\ignore{
\quan{To delete} Such design allows manual inspection of the source code and the verification process and makes formal verification of the system easier in the future.}
\ignore{These goals are particularly important to confidential computing (CC), where controlling TCB size is vital~\cite{intel:sgxexplained}.}
\ignore{
\hongbo{to delete this paragraph} In this paper, we put forth a \emph{trustworthy, open, versatile, and auditable} binary verification service that addresses the shortcomings of existing tools and ecosystems \xchanges{and provide a new infrastructure for confidential computing}.
Firstly, trustworthiness is anchored in the foundations of program analysis and verification. 
Specifically, the verification service should \xchanges{require less trust from users} and possess a compact codebase with straightforward logic, enabling the majority of the service to undergo manual inspection, and even formal verification in the foreseeable future.
Secondly, we strive to construct an open verification ecosystem that encourages contributions from entities with varied technical backgrounds toward code verification.
Thirdly, a compact codebase should not limit the service's extensibility to different policies.
\changes{Fourthly, the verification service should never steal or expose proprietary algorithms in the binary.}
Lastly, the verification service must be publicly auditable, allowing everyone to inspect the verification process independently.
These features are especially vital in confidential computing, where remote attestation alone cannot build trust in enclave programs \changes{and software vendors keep the binary with IP confidential}, as it only verifies the enclaves' identity~\cite{intel:sgxexplained}, rather than its functionality. 
}

The dual objectives of trustworthiness and openness present significant challenges in verification system design.
Supporting the verification of diverse policies inherently conflicts with ``trust less''.
General verifiers often include a series of static analyzers (e.g., alias analysis and dependency analysis), proving techniques (e.g., abstract interpretation and symbolic execution), and SMT solvers as trusted components, significantly enlarging the TCB. To tackle these,
\begin{itemize}
    \item \spchanges{\name introduces a strategic bifurcation.} It separates verification into \emph{generating} policy-related facts from a program, as well as \emph{checking} the correctness of those facts and the overall security.
    This separation enables the delegation of complex program analyses to the untrusted domain (\autoref{sec:proof-checking}), thereby reducing the TCB size.
    The trusted domain retains only a lightweight validator responsible for checking those untrusted facts against both the target program's semantics and policy requirements.

\item 
\spchanges{\name involves a novel smart contract-based approach for theorem proving.}
This system crowdsources theorem proving tasks and validates the results through a protocol to gauge the participants' contribution to verification (\autoref{sec:btm}).
This design removes constraint solving from the TCB, leaving only a minimal component that checks the validity of a constraint solution from untrusted parties.

\end{itemize}

\ignore{
\hongbo{to delete this paragraph} In pursuit of the goals above,
\hongbo{may need a new name} we introduce \name, a novel binary verification service.
As a service, it conducts the verification pipeline while at the same time publishing verification results with evidence for users to query and audit.
A meticulously crafted proof system (\autoref{sec:proof-checking}) enables \name to delegate binary analysis to the untrusted domain, accepting versatile policies while shrinking the TCB size.
On the other hand, smart contracts are coupled with trusted execution environments (TEE) (\autoref{sec:btm}) to remove trusted third parties and outsource constraint solving tasks to untrusted participants, encouraging experimental algorithms but requiring less trust via trustworthy infrastructures.
Auditable verification results are documented and published via the blockchain.
}
}
\fi

To exemplify the adaptability of \name, we integrate three policies \spchanges{
that are particularly relevant to CC, having established counterparts in prior work~\cite{sgx:occlum, pobf, sgx:tsgx, sgx:deflection}}: software-based fault isolation (SFI), information flow control (IFC), and mitigation for load value injection (LVI) attack~\cite{lvi} (\autoref{sec:casestudy}). 
Notably, this integration only adds 625, 335, and 77 lines of policy-specific code, respectively.
\name maintains a TCB size comparable with each policy-specific verifier but significantly smaller than existing general verification frameworks.
\spchanges{
\name aligns with the principles in CC: TCB minimization and transparency enhancement.
Additionally, it alleviates the requirements for code disclosure to users and the burden on users to replicate complex verification procedures.
This improvement in usability and security makes \name well-suited for CC applications.}

In summary, \name make the following contributions:
\begin{itemize}
    \item The design of a binary verifier that offloads complex analysis to untrusted entities (\autoref{sec:proof-checking}) and supports versatile policies (\autoref{sec:casestudy}).
    To the best of our knowledge, \name is the first open and trustworthy verification framework that enables participation from any party without bloating TCB size.
    
    \item A bug bounty protocol that decentralizes constraint solving and promotes auditability (\autoref{sec:btm}). The protocol validates results from diverse but untrusted solving methods, providing auditable evidence to users.
    
    \item The implementation of \name and policy verifiers on top of it accommodating different policies (\autoref{sec:implementation}). We will release the code once the paper is accepted.

    
    \item An evaluation of \name's reduction in TCB size, performance enhancements, and estimated cost. (\autoref{sec:evaluation}).
\end{itemize}

\section{Background}
\label{sec:background}




\subsubsection*{Software-Based Fault Isolation}
SFI is a security mechanism devised to curb harmful behavior emanating from misbehaving code~\cite{efficientsfi}.
Particularly, by earmarking safe externals and a fault domain that encapsulates data and code territories, SFI regulates that memory reads/writes occur within the data region and control flow transfers only land in the code region or safe externals~\cite{sfi:principles}.
SFI can be enforced by dynamic binary translation or an inlined reference monitor,
the latter is often favored due to its efficiency, versatility, and verifiability~\cite{sfi:efficient, sfi:portable, sfi:strato}.
WebAssembly (WASM) operates within sandboxes.
Some code generators such as Lucet compile WASM code into native binaries to enhance performance, producing binaries fortified with SFI.

\subsubsection*{Information Flow Analysis}
Information flow analysis tracks private information and data derived from it in a computer system and prevents it from unintentionally leaking to the public domain. 
Various information flow analysis tools have been developed~\cite{myers1999jflow, pottier2002information, johnson2015exploring}. Information flow analysis assumes partial or complete labeling on the secrecy of variables, with the goal of proving that information marked or inferred to be secret cannot affect the values of public variables, formalized as the non-interference property~\cite{goguen1982security}.

\subsubsection*{Smart Contract and Blockchain}
A blockchain is a decentralized ledger that chronicles every transaction that transpired across its network.
This ledger is replicated and distributed among all network participants.
Smart contracts are programs housed within the blockchain, set to execute automatically once certain preset conditions are met.
They leverage the fault tolerance and high availability afforded by the blockchain, permitting individual nodes to interact with the deployed smart contracts and verify their states.
Systems like Ethereum~\cite{wood2014ethereum} have showcased the versatility of smart contracts across various applications.

\subsubsection*{Bug Bounty Program}
Software vendors reward individuals who identify bugs or vulnerabilities in their products.
Stimulated by the bounty reward, programmers and hackers often become bug bounty hunters (BBHs), incentivized to submit bug and vulnerability reports.
Upon confirmation of these bugs by the companies' engineering teams, BBHs receive their rewards.
Large vendors like Google and Microsoft manage their bounty programs, while others may entrust the program to third-party platforms like HackerOne~\cite{hackerone}.



\section{Related Work}
\label{sec:related}


\subsubsection*{Program Verification}
Despite rigorous formalization and careful implementation, verification tools are often built on diverse building blocks, such as program analysis frameworks and SMT solvers.
Efforts to formalize communication interfaces within verification 
pipelines~\cite{angr,kirchner2015framac,cruanes2013etb} have emerged, recognizing the need for modularization in evolving verification landscapes that integrate diverse components with versatile techniques.
Such interfaces can be realized by formal proof formats~\cite{wetzler2014drat,chihani2017semantic}, policy specifications~\cite{schneider2000enforceable,aktug2008conspec}, and/or intermediate representations of semantics~\cite{lattner2004llvm,dullien2009reil}.
However, existing frameworks assume a \emph{closed ecosystem} where all components involved in the pipeline are \emph{trusted}, resulting in large TCB sizes.

One exception is the seminal work of proof-carrying code (PCC), introduced by Necula and Lee~\cite{pcc-original}, which exemplifies a systematic approach wherein code producers accompany machine code with a formal proof of safety properties~\cite{necula1996safe,colby2000certifying, pcc:authentication, pcc-foundational, pcc-foundational:hoare, pcc-foundational:open}.
This verification paradigm removes trust on the formal proof generated by the code producer;
so the code consumer is only required to check the correctness of the accompanied proof, which is usually significantly easier than generating one (e.g., loop invariants).
\ndsschanges{PCC and \name share the insight of outsourcing complex tasks outside TCB but checking them inside TCB. One difference is that the code consumer in PCC needs to run a verification condition generator (VCGen)\footnote{Foundational PCC~\cite{pcc-foundational, pcc-foundational:hoare, pcc-foundational:open} shifts trust on VCGen to foundational mathematical logic, and hence, further reduces TCB of conventional PCC. However, it only applies to type-based PCC instantiations.}, which often consists of nontrivial program analyses for nontrivial policy. For example, one instantiation of PCC, the Cedilla Java compiler~\cite{colby2000certifying} contains a trusted VCGen with 23 KLOC (thousand lines of code). In comparison, the counterpart in \name, called assertion generator, is out of TCB (see~\autoref{sec:workflow}).
}
Moreover, PCC's implementation is typically tied to specific policies and even the kind of program analysis technique being used, hindering the verification of versatile policies in one integrated system.
\ignore{
\quan{A few other angles to look at: 1) The burden of proving policy compliance in the PCC paradigm is still on the software authors; 2) Although it happens only once at load time, the proof checking induces overhead, especially with two things to consider under CC: a) there may be more than one policy to be verified and b) each run of the program may need individual load (e.g., CC in FaaS). The above should be sufficient(?) to motivate an improved design of PCC, which would allow outside proof contributors (this might be contrived?), and allow for a true verify-once-use-many-times scenario (with the help of blockchain techniques).} \quan{Even for foundational-PCC, the underlying math logic would require the policy-compliance proofs to be encoded into logic formulae/constraints, which in turn requires a \textit{trusted} backend checker/solver to verify safety. Our design removes such needed trust.}
}

\ignore{
\quan{\textit{below is old version}} While PCC is aligned with our goal of trust less and open more\quan{this is not the same claim as later}, it still falls short in a few important aspects.
First, the consumer-side proof checkers still need to run nontrivial code analysis, such as Floyd-style verification condition generation~\cite{floyd1993}. 
For example, in the Cedilla Systems implementation of PCC~\cite{colby2000certifying}, the proof checker contains 23K\quan{clarify this, 16K VCGen + smaller checker} lines of C code, without counting the size of the backend SMT solvers.
Second, PCC lacks a common interface for varying policies. In fact, PCC's applications are typically tied to specific policies and even the kind of program analysis technique being used, hindering the verification of versatile policies in one integrated system. 
For instance, both the code producer and consumer need to compute Floyd-style verification conditions to verify safe assembly extensions of ML programs~\cite{pcc-original}.
Replacing the proof generation methods on the producer side, with say symbolic execution, fails to validate the proof. Third, the openness of PCC is limited: as proofs are written into the binary, proof generation is limited to only the code producer instead of any untrusted participant; relying on well-known SMT solvers also prohibits the participation of untrusted participants. \quan{\textit{end of old version}} 
}

\ignore{
BAP~\cite{bap} is a notable binary analysis platform that facilitates the creation of custom analyses via a domain-specific language. 
Another tool, angr~\cite{angr}, specializes in symbolic execution.
Besides, renowned platforms dedicated to reverse engineering and binary analysis, such as Radare2~\cite{radare2}, IDA Pro~\cite{idapro}, and Ghidra~\cite{ghidra}, have made significant contributions \quan{vague, add what contributions}\hongbo{delete this sentence?} to malware analysis.
Notably, these analyzers often possess a large TCB due to the intricacies of code analysis.
In contrast, \name delegates intricate code analysis to untrusted \vcgen[plural], minimizing the TCB.
}

\ignore{
\subsubsection*{SFI Enforcement and WebAssembly Protections} 
\hongbo{May need to move or remove this part}\danfeng{we can remove it} Previous work realized SFI in various systems~\cite{sfi:principles}, including in the context of confidential computing~\cite{sgx:deflection, sgx:occlum}.
WebAssembly (WASM) has gained significant attention in recent years.
Designed with memory safety and control-flow integrity~\cite {wasm:security}, WASM's protective features can fail when translating its code into machine-readable formats. 
For instance, Cranelift, the compiler utilized by Lucet and wasmtime, is known to generate binaries with sandbox escape vulnerabilities~\cite{cve:2021-32629, cve:2023-26489}.
A formally verified sandboxing compiler for WASM has been implemented to strengthen WASM binary compilation~\cite{wasm:provably-safe}.
Moreover, several verification methodologies for SFI-compliance in WASM are also proposed~\cite{wasm:zerocost, veriwasm}.
\name's \vcgen for a WASM SFI policy is built on top of their artifact.
While verifiers specifically targeting SFI exist~\cite{nacl-sandbox, rocksalt, armor,veriwasm}, they all assume a closed verification ecosystem, and their adaptability to diverse policies can be constrained, a limitation that \name addresses distinctively.
}

\subsubsection*{Smart Contracts and Bug Bounty Programs}
The adoption of bug bounty programs spans across sectors, fostering a collaborative environment where bug bounty hunters engage in identifying and reporting bugs for monetary rewards.
Research has delved into bug bounty policy designs to augment participation, exploring the viewpoints of both bounty platforms and hunters~\cite{finifter2013empirical,mingyi15empirical,votipka2018hackers,akgul2023bug}.
Recognizing the advantages of automation and transparency of smart contracts, researchers also pursued the integration of bug bounty platforms with smart contracts~\cite{tramer2017sealed, breidenbach2018enter}.
However, existing designs accept only submissions of bugs while ignoring reports of ``no-bug''.
Thus, they cannot assure verification results for binaries, which is crucial for verifiers.

\section{System Overview}
\label{sec:workflow}

\ignore{
\subsection{Design Guidelines}
\label{sec:philosophy}
\danfeng{given the space constraint, I'm not sure if we should keep this part. The points are highlighted in the introduction already.}

\name is predicated on the principles of ``open more'' and ``trust less'' while maintaining compatibility with versatile policies and providing assurance of verification results.
These guidelines (\textbf{G$\mathbf{i}$}'s) underpin our system's design:

\begin{itemize}
    \item [\textbf{G1}] \textit{Compact TCB}.
    \name minimizes the TCB by delegating intricate program analyses and constraint solving to untrusted entities.
    Trusted yet lightweight validators in the system render malicious input ineffective.
    This streamlined approach eases both manual inspection and the potential for future formal verification.

    \item [\textbf{G2}] \textit{Elimination of Dictatorial Parties}.
    To preclude the possibility of manipulated verification results, \name integrates TEE and blockchain technologies.
    This combination facilitates a decentralized verification service, obviating reliance on any single dictatorial entity.


    \item [\textbf{G3}] \textit{Support for Versatile Policies}.
    \name is equipped with a flexible and expressive \vc language to represent arbitrary program execution semantics.
    We model three distinct policies to demonstrate its versatility.

    \item [\textbf{G4}] \textit{Open to Untrusted Contributions}.
    \name embraces contributions from untrusted participants.
    The system's front-end \vc[s] are sourced from untrusted analyzers, which incorporates diverse binary analysis technologies.
    Conversely, back-end constraint solving results are crowdsourced from BBHs, encouraging them to develop innovative algorithms.

    \item [\textbf{G5}] \textit{Auditable Verification Pipeline}.
    Ensuring transparency and reproducibility, \name leverages TEE for remote attestation and blockchain for immutable record-keeping.
    This allows public auditing and enables users to re-verify results to confirm policy compliance via the published evidence.

\end{itemize}

\ignore{
\textit{\textbf{Trustworthy}: \name bears a compact TCB}.
By delegating intricate program analyses, disassembling, and constraint solving to untrusted entities, \name significantly narrows its TCB, easing future formal verification.
For instance, drawing inspiration from PCC~\cite{pcc-original}, \name separates \emph{untrusted} proof provider (employing sophisticated program analyses to formulate proofs) from \emph{trusted} proof validator, which assesses the correctness of proofs.
Besides, rather than trusting complex SMT solvers, we offload constraint solving to untrusted BBHs.

\textit{\textbf{Versatile}: various policies can be expressed and verified}.
\name supports a flexible and expressive proof language (an extension of the first-order predicate logic), harnessing constraint solving for enhanced adaptability to check proofs.
As evidence of its versatility, we model an SFI policy-checking paradigm for WASM and extend the language to incorporate a side-channel mitigation policy.

\changes{
\textit{\textbf{IP Protected}: the intellectual properties in proprietary code can be protected}.
Since \name only requires proof related to the policy, most semantics information irrelevant to the policy is truncated (above 80\% for WASM SFI).
Moreover, when TEEs are enabled in crowdsourced verification, no semantics information is revealed in the pipeline.}

\textit{\textbf{Open}: \name champions an open verification ecosystem}.
We permit active engagement across both the initiation and culmination phases of the workflow.
Front-end proofs are sourced from untrusted generators, who can incorporate diverse binary analysis technologies as long as the proofs remain valid.
Conversely, back-end constraint-solving results are crowdsourced from a myriad of BBHs, unrestricted in their methodological approaches.
BBHs might leverage established SMT solvers, devise proprietary algorithms, or even undertake manual resolutions.

\textit{\textbf{Auditable}:  transparency and continuous auditability underpin the verification process}.
Every component of \name is designed for public scrutiny.
TEE furnishes an execution environment conducive to remote attestation, reassuring participants of the service's authenticity.
Further, by synergizing with blockchain, \name ensures verification transparency for all binary users.
This approach not only enables public examination of the verification outcomes but also encourages potential future submissions to challenge and invalidate formerly certified code.
\hongbo{End of dropping.}
}

}




\begin{figure}[tb]
    \centering
    \includegraphics[width=0.7\textwidth]{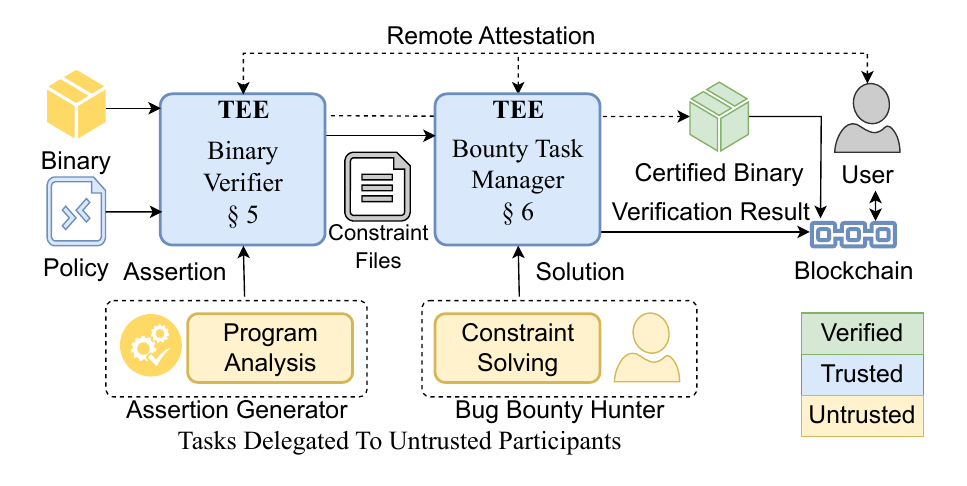}
    \Description{Description placeholder}
    \caption{System architecture of \name.}
    \label{fig:arch}
\end{figure}


\subsubsection*{High-level Workflow}
\name integrates two core components: a \textit{Binary Verifier} (BV) and a \textit{Bounty Task Manager} (BTM), offloading program analysis and constraint solving, respectively.
The workflow illustrated in \autoref{fig:arch} commences with the input binary and \vc.
The \vc is synthesized by an untrusted \vcgen conducting program analysis according to the selected policy.
The BV then validates the \vc[s] for correctness and formulates verification constraints based on the policy specification.
The constraint sets are output to files, which can be verified by constraint solvers.
However, as \name excludes SMT solvers from the TCB, the BV dispatches the constraint files to the BTM for further inspection.
The BTM maintains a bug bounty protocol, accepting submissions of solving results for the constraint sets from BBHs.
If a BBH submits a bug to the BTM, the BTM marks the binary as insecure after validation.
Finally, \name documents the verification results with evidence and publishes the certified binary on the blockchain, as the crowdsourced solutions are received via the bug bounty protocol. 
Users can thus obtain the certified program from the blockchain. \name also enables users to fetch the results and even reproduce the entire verification process by accessing the evidence on the blockchain.

\ignore{
As depicted in \autoref{fig:arch},
the workflow commences with a binary declaring security policies, and the proof of each policy is prepared by an untrusted proof generator.
After disassembling the binary and parsing the proof, the BV first preprocesses (\ding{172}) all inputs to proper internal formats.
Then, the proof validator scrutinizes the correctness of each proof (\ding{173}), and the policy checker formulates verification conditions that entail policy compliance (\ding{174}).
Verification conditions are further formed as SMT2 constraints, which are solved by untrusted reference solvers.
The BV early-rejects the binary if satisfiable.
Otherwise, it dispatched the constraint files to the BTM.
\changes{depending on the vendor's requirement on code confidentiality.}
The BTM further inspects the constraints and maintains a bug bounty protocol, accepting submissions of verification results from BBHs.
The BTM finally updates the verification outcomes after validation. The BV rebuilds and signs a binary from the verified disassembled code.
End users can query the verification service for the outcome.
}

\subsubsection*{Threat Model}
\autoref{fig:arch} highlights the verified, trusted, and untrusted entities.
We predominantly regard the BV and the BTM as trusted, shielded within hardware TEEs~\cite{intel:sgx-brief, amd:sev-snp, intel:tdx}.
However, trust does not extend to cloud providers or system administrators.
Our assumptions align with standards on blockchain and smart contracts: the smart contracts remain invulnerable, and the foundational blockchain is robustly secure~\cite{dembo2020everything}.
We require participation from a sufficient number of BBHs in the bug bounty program, assuming the presence of honest BBHs who comply with protocols and duly report any discerned vulnerability.
We also assume honest BBHs are capable of finding bugs.


Potential threats encompass malicious input from the binary, the \vc generation, and BBHs.
As the users simply make queries and conduct remote attestation, they cannot affect the verification results and are excluded from the discussion.
We acknowledge the capabilities of software vendors, who could potentially manipulate input binaries, such as embedding vulnerabilities and altering machine code.
Similarly, \vcgen[plural] may yield flawed or even meticulously crafted malicious \vc[s].
The BV handles these threats.
As the certified binary is signed by \name and the blockchain storage is append-only, the software vendor cannot modify it without re-verification.
Besides, BBHs may conceal bugs or even submit fake ones, which is also ruled out by
our protocol (see \autoref{sec:btm}).
Although we consider denial of service (DoS), \name does not cover side-channel attacks such as those in~\cite{xiong2021survey, wang2017leaky, li2021cipherleaks}.

\section{Binary Verification}
\label{sec:proof-checking}


\ignore{
\begin{figure}[tb]
    \centering
    \includegraphics[width=0.99\columnwidth]{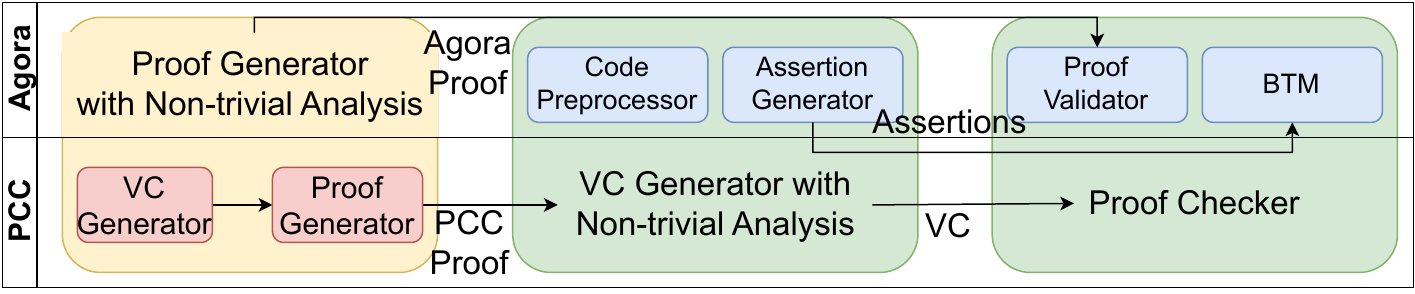}
    \caption{Comparison between the components of Agora and PCC}
    \label{fig:agorapcc}
\end{figure}
}


Creating an open and trustworthy verification framework presents several challenges. 
We highlight several innovations in \name's binary verifier (\autoref{fig:bv}): 

\ignore{
The proof-carrying code methodology effectively reduces the trust required in critical verification framework components, thereby shrinking the TCB size.
However, the consumer-side proof checkers still need to run nontrivial code analysis, such as Floyd-style verification condition generation~\cite{floyd1993}. 
For example, in the Cedilla Systems implementation of PCC~\cite{colby2000certifying}, the proof checker contains 23K lines of C code, without counting the size of the backend SMT solvers (still not fulfilling \textbf{G1}).
Moreover, PCC's applications, while methodically versatile, are typically tied to a specific policies, hindering the verification of versatile policies in one integrated system (violating \textbf{G3}). 
Besides, the choice of verification techniques is rigorously limited (violating \textbf{G4}).
For instance, both the code producer and consumer need to compute Floyd-style verification conditions to verify safe assembly extensions of ML programs~\cite{pcc-original}.
Replacing the proof generation methods on the producer side, with say symbolic execution, fails to validate the proof. Also, the proof in PCC programs are written into the binary, limiting proof generation to only the code producer (another violation of \textbf{G4}). 
}

\ignore{
Verification frameworks built on standardized interfaces, on the other hand, provide a more open solution, facilitating systematic communications between arbitrary verification components.
This enables the support for diverse policies (satisfying \textbf{G3}) and verification techniques.
However, such frameworks usually necessitate trusting all components involved in the verification pipeline, conflicting with the targets of a small TCB size (violating \textbf{G1}).
Also, there is no channel of incorporating contributions from \emph{untrusted} participants in providing verification proof (violating \textbf{G4}).}

\ignore{
As the advantages and shortcomings of each technique of the solutions above are clearly laid out, it's natural to consider picking up the good sides of the two methods by combining them together, i.e., using PCC's workflow while supporting versatile policies via standardized interfaces.
However, simple integration does not work.
On the one hand, previous modularized frameworks regard all components as trusted (e.g., a trusted proof generator generates proofs and send it into a trusted theorem prover without having to validate the proof), so it's unclear whether those standardized interfaces can be reused in the context of PCC, as the proof received via this interface is not trusted and must be rigorously validated. 
On the other hand, even if such a proof checker can be realized for the standardized interface, the TCB size paid for adopting this interface may still be very large.
This situation deteriorates as versatile policies require support for more expressiveness.
Therefore, naively trying to integrate them leaves us with a ``robbing Peter to pay Paul'' dilemma. \quan{Hongbo and I think this paragraph is not well-written, because the shortcomings of combining the two methods seems dodgy.}
}



\ignore{
\name is inspired by Proof-Carrying Code (PCC)~\cite{pcc-original}, wherein the source code or binary code is untrusted, and its safety is established by a proof alongside the code. In PCC, a code provider publishes code with its safety proof, and a code consumer uses the proof and runs verification locally to establish trust in the associated code. Notably, PCC removes trust in both code and proof, and it reduces the cost of verification as checking a proof is usually much simpler than generating a proof (e.g., loop invariants). 
However, PCC falls short of building an open verification system of security properties for a few reasons:
\quan{Address some reviewer comments here. Particularly, I want to address: why is \name different from previous pipeline workflows, e.g., Frama-C, ETB, etc. \textit{Standardized interfacing} seems to be a major standpoint of (several) previous tools. Our work does not advance more in that direction. If thinking carefully, we are basically \textit{PCC with a standardized interface to support various policies} - on the verification front-end.}
\hongbo{new storyline: PCC is not versatile; standardized interfacing can handle various policies; their TCBs are still way too large (although PCC is better); \name achieves versatility with smaller TCB size, benefiting from delegation and open.}

\hongbo{NOTE the bullets are substituted by the following paragraph.}

\begin{itemize}
    \setlength\itemsep{0em}
    \item While the general PCC idea~\cite{pcc-original} can be applied to a variety of policies, its instantiation is usually tied to a specific policy, or even to a specific verification method.
    For instance, both the code producer and consumer need to compute Floyd-style verification conditions~\cite{floyd1993} to verify safe assembly extensions of ML programs~\cite{pcc-original}. Replacing the verification methods on the consumer side, with say symbolic execution, fails to validate the proof.

    \item \hongbo{Openness: PCC couples proof-gen and verifier} For each policy, the \vcgen is coupled with a corresponding verifier, where the correspondence is built on a series of typing rules \quan{not necessarily} \xchanges{a series of predefined protocols, e.g., typing rules.?}.
    However, \xchanges{the protocols vary largely between different instantiations.}

    \item PCC fully trusts proof validators running on the consumer end. Usually, proof validators still need to run nontrivial code analysis, such as Floyd-style verification conditions, resulting in a large TCB (e.g., 23K lines of C in the Cedilla Systems implementation of PCC~\cite{colby2000certifying}, excluding third-party libraries such as SMT solvers). \hongbo{May mention the constraint solving part is not trusted in our framework?}

    \item While prior work \cite{necula1996safe, pcc-original,colby2000certifying, tal, pcc:network, pcc:authentication, pcc:authorization} has verified various safety policies under the PCC approach, it is unclear if PCC is applicable to nontrivial security policies, such as SFI~\cite{sfi:principles}.
\end{itemize}
}

\ignore{

\xchanges{However, a naive combination of these two techniques falls short in meeting all goals in \autoref{sec:philosophy}. 
While the general PCC methodology~\cite{pcc-original} can be applied to a variety of policies, its instantiation is usually tied to a specific policy, or even to a specific verification method, hindering the versatility of policies (\textbf{G3}).
For instance, both the code producer and consumer need to compute Floyd-style verification conditions~\cite{floyd1993} to verify safe assembly extensions of ML programs~\cite{pcc-original}. Replacing the verification methods on the consumer side, with say symbolic execution, fails to validate the proof. For each policy, the proof generator is coupled with a corresponding verifier, where the correspondence is built on a series of predefined logical formulae. 
To address this problem, prior works~\cite{cruanes2013etb,kirchner2015framac} have proposed standardized interfaces to make systematic communication between arbitrary verification components possible, enabling the support for different policies and verification methods \hongbo{do these methods meet O2?}.
However, such frameworks usually require trust in all components involved in the verification pipeline, conflicting with a small TCB size (\textbf{G1}).
Nevertheless, proof validators running on the consumer end are fully trusted in PCC, extending the TCB size.
Usually, proof validators still need to run nontrivial code analysis, such as Floyd-style verification condition generation, resulting in a large TCB (e.g., 23K lines of C in the Cedilla Systems implementation of PCC~\cite{colby2000certifying}, without counting the size of the backend SMT solvers).}

Proof generation separates binary analysis from the verification process, thereby facilitating flexible proof generation and supporting a variety of policies~\cite{pcc-original}.
This adaptability implies that security properties can be proven using diverse program analysis methodologies customized for specific policies.
While generating such proof is usually intricate, validating that the proof is consistent with the given binary is considerably simpler.
This enables the offloading proof generation outside the TCB, as the BV can validate the proof.
\danfeng{The following still sounds straightforward. What are the real challenges?}
However, the original PCC proves the safety policy based on a set of logic/type inference rules.
Extending these policies can be strenuous, as new policies necessitate new rule sets.
We also want to avoid such endless patching to the rules.
By leveraging constraint solving as the backend, \name supports flexible proof generation and a variety of policies.
Thus, the BV needs to operate on a carefully crafted proof language to back this versatility.
}


\begin{itemize}
    \item \name supports untrusted \emph{\ndsschanges{\vcgen}[s]} 
    that emit \ndsschanges{\vc[s]} written in its \vc language (\autoref{sec:grammar}). The assertion language is versatile (i.e., agnostic of verification technique and security policy), striking a balance between expressiveness and simplicity: it can encode various security policies (\autoref{sec:casestudy}) while still allowing  
    a lightweight validator (\autoref{sec:proof-processing}) to check the correctness of those untrusted \vc[s].
    \item \name removes the trust in complex theorem provers.
    The \vc[s], together with the proof \regulation[s] generated for policy compliance check (\autoref{sec:policy}), form a verification constraint set,
    which is directed into the BTM (\autoref{sec:btm}) for further verification.
\end{itemize}

\begin{figure}[tb]
    \centering
    \includegraphics[width=0.7\textwidth]{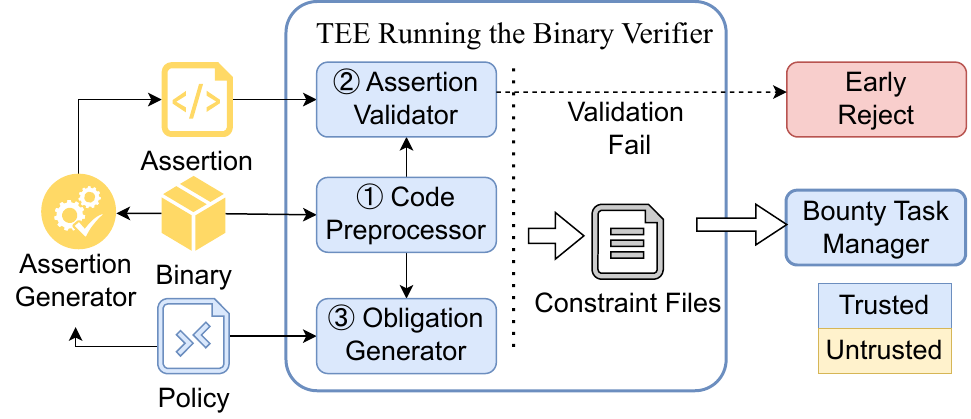}
    \Description{Description placeholder}
    \caption{Overview of the Binary Verifier's workflow.
    }
    \vspace{-2ex}
    \label{fig:bv}
\end{figure}

\ignore{
\subsubsection*{Comparison with PCC}
While Agora is inspired by PCC to remove trust on code producers, it achieves openness and trustworthiness that are absent in PCC:}
\ignore{
\begin{itemize}
    \item By removing trust on VCGen, Agora substantially reduces TCB size.
    \item Through its VC language, Agora decouples VCGen from a specific policy or program analysis technique. While PCC has the same VCGen deployed on both code consumer and producer, Agora incorporates \textit{any} VCGen generating Agora VC, promoting openness.
    \item By outsourcing constraint solving, Agora also promotes openness and trustworthiness on the verification backend.
\end{itemize}
}

\ignore{One insight of our design is that, unlike PCC, \name relieves the use of type systems which strongly couples with heavyweight compilation toolchains. Instead, the basic building blocks of \name, namely operational semantics and safety predicates, are expressed in first-order logic, which makes it generic to support a large class of policies. For simplicity, our prototype does not model self-modifying code.
Our current modeling of x86-64 ISA, although partially implemented, is to enough verify the two policies discussed in the paper. Extending it with more comprehensive semantics is rather straightforward.}
\if0{
\begin{align*}
    \langle \mathcal{B}, (\mathcal{R, S}), \mathcal{P} \rangle 
        \xRightarrow{{\tt dis. \& lift \& ssa}}    &\langle \mathcal{IR}_{ssa}, (\mathcal{R, S}), \mathcal{P} \rangle       \\
    \xRightarrow{{\tt validation}}     &\langle \mathcal{IR}_{ssa}, (\mathcal{R}_{ssa}^{val}, \mathcal{S}), \mathcal{P} \rangle              \\
    \xRightarrow{{\tt constraint-gen}} &\langle \mathcal{(F, A)} \rangle
\end{align*}
}\fi

\subsection{Policy Specification and Proof Obligation}
\label{sec:policy}

Following convention, a policy specification in \name is formalized as a set of \textit{security conditions} and \textit{axioms} on abstract or concrete states of a program.
The security condition set represents the requirements for certain instructions or code patterns. 
For example, for control-flow restriction in SFI, one of the security conditions would specify that all $\tt jmp$ instructions must target valid code locations.

For each instruction or code pattern whose security condition (denoted as a Boolean formula $\mathcal{S}$) is not empty, we denote the concrete or abstract program states at that point as $\mathcal{F}$. There are two kinds of security policies regarding how to generate \emph{proof obligations} (i.e., a logical statement that must hold to ensure policy compliance): \emph{may} and \emph{must} policies\footnote{Inspired by may or must data-flow analysis in compiler optimization.}. 
In a \emph{may} policy, proof obligation is a disjunction of $\mathcal{S}$ and $\mathcal{F}$ (e.g., $\mathcal{F}\rightarrow \mathcal{S}$, which is the same as $\neg \mathcal{F} \lor \mathcal{S}$). Hence, removing program state $\mathcal{F}$ from proof obligation is \emph{sound} in this case. In a \emph{must} policy, however, proof obligation is a conjunction of $\mathcal{S}$ and $\mathcal{F}$ (e.g., $\mathcal{F}\land \mathcal{S}$). Hence, no program state in $\mathcal{F}$ should be removed from proof obligation in order to guarantee soundness. We discuss how may and must policies affect assertion validation in \autoref{sec:proof-processing}, and show SFI as a may policy in \autoref{sec:example-sfi} and IFC as a must policy in \autoref{sec:example-ifc}. 



On the other hand, axioms are predicates assumed to always be true, which can represent properties and relations of a specific binary or the runtime environment (see~\autoref{sec:sfi-cfi}). 

\subsection{Code Preprocessing}
\label{sec:ssa}
Similar to other binary analysis tools~\cite{angr, song2008bitblaze}, we use an intermediate representation (IR) to encode the semantics of instructions. 
Notably, all machine state changes (e.g., set/clear flags and stack/instruction pointer adjustments) are explicitly written into the IR. 
For example, the \emph{Code Preprocessor} takes the code $\tt 0x1000: cmp~\%rsi,\%rax$ and
 produces 
\[{\tt rip} = \addr{1003}, {\tt cf} = ({\tt rax} < {\tt rsi}), {\tt zf} = ({\tt rax} = {\tt rsi})...\]
where we omit a few other modified flags for simplicity.

To aid subsequent analysis steps, the Code Preprocessor performs basic static single-assignment (SSA) transformation and control-flow graph (CFG) construction on the IR.
For example, the SSA form of the above example can be
\begin{gather*}
{\tt rip.1} = \addr{1003}, {\tt cf.1} = ({\tt rax.1} < {\tt rsi.0}), 
{\tt zf.1} = ({\tt rax.1} = {\tt rsi.0})...
\end{gather*}












\subsection{Assertion Language}
\label{sec:grammar}

\name's assertion language is defined in \autoref{fig:proof}.
This language should be versatile to support various policies and verification techniques. Meanwhile, it should be well-structured and \emph{simple} enough to ease \vc validation.

In a nutshell, a line of \vc is associated with one instruction in the assembly code, stating a policy-specific fact (written as a Boolean expression on machine state) that holds after the execution of the instruction.
Boolean expressions ($\bnf{bool\_exp}$)
and arithmetic expressions ($\bnf{arith\_exp}$) are mostly standard: they include relations and computations on constants ($\bnf{num}$), general-purpose registers ($\bnf{reg}$) and flags ($\bnf{flag}$).
Moreover, for expressiveness, the \vc language models the stack frame ($\bnf{stack}$) of the function being examined with $[({\tt rsp} \mid {\tt rbp})\pm \bnf{num}]$ and a size prefix ($\mathtt{q}$: qword, $\mathtt{d}$: dword, $\mathtt{w}$: word, $\mathtt{b}$: byte).
%
The $\tt ite$ operator is a conditional operation on both arithmetic and Boolean expressions.
Note that the assertion language intentionally omits the quantifiers on variables (e.g., registers), as the quantifiers are determined by the policy specification (\autoref{sec:casestudy}).

\begin{figure}
  \centering
  \setlength{\abovecaptionskip}{0.cm}
  \footnotesize
  \begin{align*}
    \bnf{\vc}           \; \rightarrow \; & \bnf{num} : \; \bnf{bool\_exp}                                       \\
    \bnf{bool\_exp}     \; \rightarrow \; & {\tt true} \choice {\tt false} \choice {\tt not} \; \bnf{bool\_exp} \choice \bnf{flag} \choice  \\
                                          & \bnf{bool\_exp} \; \bnf{bool\_binop}  \; \bnf{bool\_exp}  \choice                               \\
                                          & \bnf{arith\_exp} \; \bnf{arith\_cmpop} \; \bnf{arith\_exp} \choice                              \\
                                          & {\tt ite} \; \bnf{bool\_exp} \; \bnf{bool\_exp} \; \bnf{bool\_exp} \choice                      \\
                                          & \bnf{predicate} \; ( \bnf{arith\_exp} ) +                                                       \\
    \bnf{arith\_exp}    \; \rightarrow \; & \bnf{num} \choice \bnf{reg} \choice \bnf{stack} \choice \bnf{symbol} \choice                    \\
                                          & \bnf{arith\_exp} \; \bnf{arith\_binop} \; \bnf{arith\_exp} \choice                              \\
                                          & {\tt ite} \; \bnf{bool\_exp} \; \bnf{arith\_exp} \;  \bnf{arith\_exp}                           \\
    \bnf{var}           \; \rightarrow \; & \bnf{reg} \choice \bnf{flag} \choice \bnf{stack} \choice \bnf{seg} \choice \bnf{symbol}         \\
    \bnf{num}           \; \rightarrow \; & {\tt 0} \choice {\tt 1} \choice {\tt 2} \choice {\tt 3} \choice ...                             \\
    \bnf{reg}           \; \rightarrow \; & {\tt rsp} \choice {\tt rbp} \choice {\tt rax} \choice {\tt rbx} \choice {\tt rcx} \choice ...   \\
    \bnf{flag}          \; \rightarrow \; & {\tt cf}  \choice {\tt zf}  \choice {\tt of}  \choice {\tt sf}  \choice ...                     \\
    \bnf{stack}         \; \rightarrow \; & ({\tt q} \choice {\tt d} \choice {\tt w} \choice {\tt b} ) 
                                                                                   \;  \; ({\tt rsp} \choice {\tt rbp}) [ \pm \bnf{num} ]   \\
    \bnf{symbol}        \; \rightarrow \; & {\tt GB} \choice {\tt GT} \choice {\tt GTS} \choice {\tt GTSAddr} \choice ...                   \\
    \bnf{predicate}     \; \rightarrow \; & {\tt FnPtr} \choice {\tt JmpOff} \choice {\tt JmpTgt} \choice ...                               \\
    \bnf{arith\_binop}  \; \rightarrow \; & + \choice - \choice \times \choice \ll \choice \gg \choice \& \choice ...                       \\
    \bnf{arith\_cmpop}  \; \rightarrow \; & > \choice \geq \choice < \choice \leq \choice =                                                 \\
    \bnf{bool\_binop}   \; \rightarrow \; & \land \choice \lor \choice \to                                                          \\
  \end{align*}
  \vspace{-5ex}
  \Description{Description placeholder}
  \caption{\name's \vc grammar.
  }
  \vspace{-2ex}
  \label{fig:proof}
\end{figure}




A more interesting feature of the \vc language is its support for policy-related symbols ($\bnf{symbol}$) and predicates ($\bnf{predicate}$), to express abstract properties on program state.
For example, each binary compiled by the Lucet compiler stores function pointer entries with 16-byte alignment in the binary’s data section (table $\tt guest\_table\_0$), and we can check that each pointer stored in the table is a valid function pointer. In the assertions, we 
introduce a special symbol $\tt GT$ and facts on the symbol, such as (${\tt GT} = \addr{2000}$) where \addr{2000} is the table location in the analyzed binary. 
Moreover, the \vcgen introduces a predicate $\tt FnPtr$ and a derivation rule that each loaded value $v$ from ${\tt base + offset}$, where 
$\tt base = GT$ and offset obeys a specific pattern, makes $v$
a valid function pointer (i.e., ${\tt FnPtr}~v$). We elaborate the details in~\autoref{sec:sfi-cfi}.


\ignore{
\quan{This paragraph to be removed due to the excessive discussion on PCC.}
\hongbo{TCB components, proof size, proof simplicity, proof generality, decoupling.}
\ndsschanges{
While \name's design guidelines align PCC to remove trust on code producers, it further decouples the proof language from type inference rules either from VCGen in conventional PCC or from typed assembly language in FPCC.
Such design benefits \name in several aspects.
Through \name's \vc grammar, assertions required from untrusted analyzers are all semantic, which can be achieved with modest modifications on current analyzers.
Therefore, \name is more adaptive to existing analyzers, opening to contributions from the verification community.
}
}

\ndsschanges{Note that due to the goal of ``trust less'', we have to balance expressiveness and simplicity of the \vc language.
%
As a result, \name does not support policies involving fine-grained heap memory reasoning and nested quantifiers, for example. Yet, we found the \vc language is powerful enough for verification of nontrivial policies (e.g., SFI and IFC) without bloating the TCB.}

\subsection{Assertion Validation}
\label{sec:proof-processing}

\changes{Explain quantifiers. Local validation $\simeq$ quantifier-free; global validation $\simeq$ quantified}
\vc[initcap] validation should be simple and lightweight to offload the complexity in binary analysis (i.e., \vc generation).
For each \vc line in \autoref{fig:proof}, the \emph{\vc[initcap] Validator} 
first performs an SSA transformation based on the SSA IR provided by the Code Preprocessor.
For the example in $\tt 0x1000: cmp~\%rsi,\%rax$ (see \autoref{sec:ssa}), where  ${\tt cf}$, ${\tt rax}$, and ${\tt rsi}$ are transformed into  ${\tt cf.1}$, ${\tt rax.1}$, and ${\tt rsi.0}$, respectively; an \vc consists of ${\tt cf} = ({\tt rax} < {\tt rsi})$ is translated into ${\tt cf.1} = ({\tt rax.1} < {\tt rsi.0})$. For simplicity, all IR and \vc[s] hereon are assumed in their SSA forms.

For both may and must policies defined in \autoref{sec:policy}, the \vc[initcap] Validator \emph{validates} the correctness
of the \vc in a two-stage (instruction- and function-level) fashion.
\if false
Depending on the validation outcome, the BV chooses one of the following options: \quan{These itemize are no longer needed since they are rewritten below.}
\begin{itemize}
    \setlength\itemsep{0em}
    \item reject the \vc early if a line is locally invalid,
    \item rewrite the locally validated \vc into constraint \textit{facts} and send it to the constraint checking environment,
    \item rewrite \vc[s] which cannot be locally validated into constraint \textit{checks} and request the solver to check them.
\end{itemize}
\fi
%
\if false
------ \emph{rewriting the following part} ------
We first check if any contradiction exists (from a satisfiability standpoint) between the IR and proofs:
\[
    \mathcal{IR}_{SSA}[a] \land \mathcal{R}_{SSA}[a] \rightarrow \bot
\]
\danfeng{$\bot$ means false? how are the free variables quantified here? Logically, this is the same as the negation of $(\land \mathcal{IR}_{SSA}[a]) \land (\land \mathcal{R}_{SSA}[a])$. what's the meaning of the formula?}
If a contradiction can be identified, the proof is determined to be invalid\danfeng{how can you early reject it with local semantics?} and the verification process terminates, returning a \textit{early} rejection.
Otherwise, Proof Validator checks if the IR can logically imply each of the relations in $R_{SSA}[a]$:
\[
  \mathcal{IR}_{SSA}[a] \rightarrow r, \quad r \in \mathcal{R}_{SSA}[a]
\]
Any relation $r$ that satisfies this implication is elevated to be a \textit{fact}, and thus added to the constraint environment, which includes a set of validated relations rewritten as constraints\done\danfeng{what is the constraint environment and why do we need it? we need to introduce that earlier in the paper}. 
If there exists some $r$ such that it can't be implied by $\mathcal{IR}_{SSA}[a]$, they must be checked at the constraint environment later via constraint solving. 
For example, for the two proofs at \addr{6274}, it's only possible to imply the first relation from the IR:
\begin{align*}
    {\tt rip}.6 = \addr{6289} \; \land \; &{\tt rdi}.2 = {\tt rdi}.1 \ll 4 \\
                    \rightarrow   \; &{\tt rdi}.2 = {\tt rdi}.1 \ll 4
\end{align*}
However, the second relation ${\tt rsi}.3 > \addr{1000}$, although being true due to the execution of the previous instruction, cannot be \textit{locally} validated through the above method. 
This is simply because the Proof Validator is agnostic of the state derived from the semantics irrelevant to the instruction currently being analyzed. 
Since we can't validate its correctness or reject it due to a contradiction, some further checks are necessary.
%
------ \emph{rewritten part (old version) end} ------
\fi
%
%
More precisely, we denote the IR semantics at code address $i$ as a set of propositions $\mathcal{IR}[i]$,
and the \vc[s] as another set $\mathcal{A}[i]$.
Besides, we denote a \emph{fact} set $\mathcal{F}$, initialized to be empty, to maintain all known facts being analyzed so far, and denote $\mathcal{PC}[i]$ for the path condition. Then for each $a \in \mathcal{A}[i]$:
\begin{itemize}
  \item The validator first checks if $a \in \mathcal{IR}[i]$. If so, we say $a$ is validated \emph{locally} (i.e., at the instruction level). The validated $a$ is added to $\mathcal{F}$ afterward if the policy is a \emph{may} policy. Otherwise, $\mathcal{IR}[i]$ is added to $\mathcal{F}$.
  
  \item If $a \notin \mathcal{IR}[i]$, the validator generates a function-level check of $a$ against $\mathcal{F}$ (the concrete check is policy specific). A successful check adds $a$ to $\mathcal{F}$ if the policy is a \emph{may} policy, or adds both $a$ and $\mathcal{IR}[i]$ to $\mathcal{F}$ if it is a $\emph{must}$ policy. A failed check terminates verification right away. 

  \ignore{
  \quan{I added $\mathcal{PC}$ in the formula. Seems like if $\mathcal{PC}$ is added, there's no need to make the awkward discussion anymore. Also, I think we should say that the function-level validation is done in the BTM (solvers), which goes along with the next paragraph. And that shows why the LOC is only 1k for a validator. }
  \hongbo{strange logic} \quan{That is related to the $P \rightarrow Q$ to $P \land \neg Q$ transition. I agree maybe it's too low-level, and we don't need it here}
  }
\end{itemize}


Next, we use a may policy SFI to elaborate the process. Revisiting the instruction $\tt 0x1000: cmp~\%rsi,\%rax$ with
\begin{align*}
  \mathcal{IR}[\addr{1000}] = &~ \{{\tt rip.2} = \addr{1003}, 
        {\tt cf.1} = ({\tt rax.1} < {\tt rsi.0}), {\tt zf.1} = ({\tt rax.1} = {\tt rsi.0}), ... \}
\end{align*}
Assume that before the validation, $\mathcal{F} = \{\tt rbx.1 = 1\}$ and $\mathcal{PC}[\addr{1000}] = {\tt true}$ (i.e., no branch before \addr{1000}). Consider the validation of different untrusted \vc[s]:
\begin{itemize}
    \item $ {\tt cf.1} = ({\tt rax.1} < {\tt rsi.0}) $:
    The \vc can be validated at the instruction level as it can be found in $\mathcal{IR}[\addr{1000}]$. Moreover, the \vc is added to $\mathcal{F}$ after validation.
    \item ${\tt cf.1} = ({\tt rax.1} > {\tt rsi.0})$:
    Since no match can be found in $\mathcal{IR}[\addr{1000}]$, the validator defer it to the function-level validation, checking $\mathcal{F} \land \mathcal{IR}[\addr{1000}]\rightarrow ({\tt cf.1} = ({\tt rax.1} > {\tt rsi.0}))$. Since this (universally-quantified) check 
    fails (e.g., when ${\tt rax.1}=1,{\tt rsi.0}=2$), the \vc is rejected when the whole function is checked. 
    \item ${\tt rbx.1} > {\tt 0}$:
    Similarly, the \vc is deferred to the function-level validation. However, since $\mathcal{F} \land \mathcal{IR}[\addr{1000}] \rightarrow ({\tt rbx.1} > {\tt 0})$ always holds, the \vc is validated and added to $\mathcal{F}$. 
\end{itemize}

Note that for a must policy, $\mathcal{IR}[i]$ is always added to $\mathcal{F}$ to guarantee soundness (definition in \autoref{sec:policy}). 
A must policy takes the conjunction of program facts ($\mathcal{F}$) and other conditions for security check. However, this is not required for a may policy, enabling an optimization for may policies. In practice, we found that most \vc[s] (excluding policy-specific predicates) can be validated at the instruction level. In our evaluation (\autoref{sec:evaluation}), the optimization on a may policy saves about 83\% in constraint size and accelerates the running time of the constraint solving step by 70\%.

\if false
To handle the last case, the BV does the following.
A function-level proposition $\mathcal{F}$ is constructed to maintain all known facts. 
$\mathcal{F}$ stores all locally validated relations in logical conjunction. 
Additionally, if for some address $a$, there is an $r \in \mathcal{R}_{SSA}[a]$ which cannot be locally validated, the whole $\mathcal{IR}_{SSA}[a]$ is added to $\mathcal{F}$\done\danfeng{does not sound very efficient}.
We denote another conjunction $\mathcal{C}$, to store all outstanding relations\hongbo{is outstanding relations a term?}\quan{``outstanding'' is an adjective describing the relations, it's not a term ``outstanding relations''} to be checked (from the last case). 
Those that cannot be added into $\mathcal{F}$ in the previous step are added into $\mathcal{C}$.
After all instructions of a function are done with local validation, the complete set of $\mathcal{F}$ and $\mathcal{C}$ will be formed.

Next, the two propositions are taken to check for satisfiability of non-locally validated proof by constraint solving, as shown below\danfeng{again, explain why should we check this formula}. 
\[
    {\tt SAT}(\mathcal{F} \land \neg \mathcal{C}) = 
    \begin{cases}
        sat    & \implies reject \\
        unsat  & \implies accept
    \end{cases}
\]

\hongbo{maybe we can modify the notations. $\mathcal{C}$ represents the logical conjunction of a set of propositions $C$, and we may need to separate set from conjunction notations.}
If the solution is $sat$, $\mathcal{C}$ must have at least one relation clause which cannot be checked to match the semantics of the function or $\mathcal{C}$ itself is unsatisfiable, \hongbo{maybe we should check for each c in C? C can be in consistent}, resulting in a rejection. 
Otherwise, all relations in $\mathcal{C}$ are accepted to be valid, and \quan{we performed the evaluation on the following first two variants but not the third one, as it is hard to evaluate from the current implementation}:
\begin{itemize}
  \item $\mathcal{C}$ joins $\mathcal{F}$ to become the final set of facts, or $\mathcal{F'} = \mathcal{F} \land \mathcal{C}$ \\
        \done\danfeng{why keeping both F and C? seems like there is some redundancy: all non-local proofs are redundant. Why should we keep them?}.
  \item $\mathcal{C}$ is discarded and $\mathcal{F}$ is used as the final set of facts \quan{This method is 5-10\% better than the first one, discussed in eval section}.
  \item $\mathcal{C}$ is added to into $\mathcal{F}$, removing the minimal subset of $\mathcal{F}$ required to implies $\mathcal{C}$. \\
        $\mathcal{F}' = \mathcal{F} \land \mathcal{C} - min(\mathcal{F})$ \quan{mixture of set operation and logical operation, need to be fixed.}
        \hongbo{practically, it's hard to calculate $min(\mathcal{F})$, as only binary analysis knows what facts contribute to the proof but the BV does not know.}
\end{itemize}

\fi

\subsubsection*{Stack Modeling}
To verify certain policies, it is necessary to model the stack memory contents since compiled code sometimes stores a value into a stack memory cell and later loads it back to a register.
Compared to modeling arbitrary memory (e.g., on the heap), which significantly complicates \vc validation, 
modeling stack memory is much simpler: in most x86 code, 1) stack pointers are usually adjusted through constant addition/subtraction ($\tt rsp = rsp \pm \bnf{num}$), and 2) stack is usually accessed via ${\tt rsp}$/${\tt rbp}$ with constant offsets  (e.g., $\tt mov ~ rax, [ rsp + 8]$).
Whenever these criteria are met in the analyzed function, the BV models each memory cell on the stack (e.g., $[{\tt rsp + 8}]$) as a variable and applies standard SSA transformation when its content is modified.

\subsection{Satisfiability Check}
\label{sec:satcheck}
The BV gathers all of the aforementioned \vc validation and proof obligation constraints
and translates them into a \textit{constraint file} in SMT2 format. Depending on the policy (\autoref{sec:casestudy}), variables in constraints are either universally or existentially quantified. In either case, policy compliance boils down to a satisfiablity problem: either a \sat model implies security ($\exists$-quantified), or \unsat of the negation of the constraints implies security ($\forall$-quantified).

Ordinarily, program verifiers utilizing SMT solvers completely \emph{trust} them, which fails both openness and trustworthiness goals.
For openness, promoting one ``monopolistic'' solver prevents other solvers, including novel and experimental solvers from contributing to the verification pipeline.
Such solvers might be SMT competition winners, like bitwuzla in SMT-COMP 2023, FPArith track~\cite{niemetz2023bitwuzla, smt-comp2023:FPArith}.
For trustworthiness, mature SMT solvers are complex (e.g., Z3 has 532 KLOC) but have been discovered with soundness issues and vulnerabilities.
This is further deteriorated by the fact that solvers sometimes either time out or report \texttt{UNKNOWN} for complex constraints\footnote{In practice, it is common to use multiple solvers and hope that at least one reports \sat or \unsat. However, it still rules out other solving methods as it promotes a limited set of ``oligopolistic'' solvers.
Moreover, the TCB size is not significantly reduced in this approach.}.
To tackle the challenge, \name alters the traditional verification workflow by substituting SMT solvers with the BTM in TCB, 
which we elaborate in \autoref{sec:btm}.

\ignore{
Finally, the BV generates constraints according to the policy.
The policy is a set of rules that inserts assertions based on the assembly code, and multiple policies can be launched simultaneously.
These rules can be flow-sensitive and may rely on the hint.\danfeng{cannot follow the sentence}
Therefore, policy compliance is converted to checking a series of assertions, and the validated proof is regarded as the assumptions that presumably derive the assertions.
\name negates the assertions and expects an unsatisfiable result.
The BV creates a stack frame~\cite{smt26}\done\danfeng{what is stack frame?} for each assertion.
Such frame ensures that the checking does not affect the environment after the frame is popped off\danfeng{what does pollute mean here?}.
The assumptions and assertions, encoded as well-formed \texttt{smt2} constraint format, are forwarded to the BTM for further process\danfeng{we need to provide more details here. How exactly are the constraints generated?}.
\hongbo{example?}
We also use an SMT solver here to filter out policy-incompliant binaries to reduce the cost of running a bounty program.

\begin{itemize}
    \item Assertion \& task generation.
\end{itemize}
}
\section{Case Study}
\spchanges{To demonstrate \name's flexibility, we extract specifications of three distinct security policies and migrate their verifier onto \name.
Variants with relatively less complexity of these policies have also been enforced by previous work in CC~\cite{sgx:deflection, sgx:occlum, pobf, sgx:tsgx}.
Our case study shows that \name is suitable for the quickly evolving cloud CC use scenario.}
\label{sec:casestudy}

\subsection{Verifying SFI}
\label{sec:example-sfi}

We first use \name to verify SFI for x86 binaries compiled from WASM programs using Lucet compiler.
We extract the formally verified SFI specification and verification conditions from VeriWASM~\cite{veriwasm}, and verify the same in \name.
As Deflection~\cite{sgx:deflection} verifies a weaker SFI policy, we only elaborate on VeriWASM here and leave further discussions in \autoref{sec:tcb-analysis}.
In a nutshell, VeriWASM (and our method) verifies that a secure binary obeys the following restrictions:
1) any memory access (read or write) is within the bounds of a predefined sandbox, and
2) any control-flow-related instruction (especially an indirect call/jump or a return instruction) targets a valid code location in the program.
By the SFI policy specification, for some address $i$ where the security condition is $\mathcal{S}[i]$, 
its proof obligation has the form: 
$\forall \vec{v}.~(\mathcal{F} \land \mathcal{PC}[i]) \to \mathcal{S}[i]$. Hence, SFI is a may policy.



\subsubsection{Memory Access Safety}
\label{sec:sfi-mem}
One important aspect of SFI is to ensure that all memory accesses (including reads/writes to the heap, stack, and global memory) are within the corresponding regions in a sandbox. 
Each memory region is continuous but disjoint from others. 
Hence, the task is to check: for \emph{any} memory access at address ${\tt Addr}$, ${\tt Addr\in Heap \cup Stack \cup Global}$. We elaborate on each type of memory access next.

\subsubsection*{Heap Safety} Lucet defines the heap space as a continuous region with the size of 8GB, including a 4GB default usable memory space followed by a 4GB guard zone \cite{efficientsfi}.
Lucet compiles its code so that at the start of any function, the register $\tt rdi$ holds the address of the heap base, a fixed location in the runtime. 
Hence, for each $\tt call$ instruction where $\tt rdi.x$ is the current version of register $\tt rdi$, we validate the \regulation $\tt rdi.0 = \tt rdi.x$, which if successful, will establish an invariant on the heap base.

To verify that each memory access instruction (e.g., $\tt mov$) complies with heap safety, the \regulation[initcap] Generator generates proof obligations based on the following rule:
\begin{align*}
  \tt{HEAP\text{\ding{51}}} \equiv~ 
                    &{\tt rdi.0} < {\tt Addr} \land {\tt Addr} < {\tt rdi.0 + 8{\tt G}}
\end{align*}
where ${\tt Addr}$ is the target of the verified $\tt mov$ instruction, and ${\tt rdi.0}$ is the value of ${\tt rdi}$ at the beginning of the function.

\subsubsection*{Stack Safety}
\label{appx:stack}
The safety of stack access is largely the same as a heap access, with a few minor differences.
The base-stack pointer $\tt rbp$ at function start is an anchor point of its stack frame where there is an 8KB read-only section above it (holding the return address and spilled arguments of the function), as well as a 4KB read and write region below it (holding the function's local variables). 
The stack pointer $\tt rsp$ is modified via instructions such as $\tt pop$ and $\tt push$ in the function.

To verify memory access instructions comply with stack access safety, the following rules must be checked:
\begin{align*}
    \tt{STACKR\text{\ding{51}}} \equiv~ &\tt{rbp.0} - 4\tt{K} < \tt{Addr} \land \tt{Addr} < \tt{rbp.0} + 8\tt{K} \\
    \tt{STACKW\text{\ding{51}}}\equiv~ &\tt{rbp.0} - 4\tt{K} < \tt{Addr} \land \tt{Addr} < \tt{rbp.0}
\end{align*}
where ${\tt Addr}$ is the target of the validated memory access instruction.
Following the same workflow in verifying heap access safety, we can verify stack safety. 


\if 0
\begin{lstlisting}[label={code:stack}, caption=Stack write example and its proofs.]
==== SSA-like Assembly ====
  0x95e0: push %rbp.0  // rsp.1 = rsp.0 - 0x8
                       // q[rsp.1] = rbp.0
  0x95e1: mov  %rsp.1,%rbp.1
  0x95e4: sub  $0x20,%rsp.1 // dest = %rsp.2
  0x95e8: mov  %r13.0,[%rsp.2+0x8]
  
==== SSA Proof ====
  0x95e0: rsp.1 = rsp.0 - 0x8
  0x95e4: rsp.2 = rsp.1 - 0x20
\end{lstlisting}

\autoref{code:stack} shows a function prologue and a few instructions following it. There are two stack writes in this snippet. At \addr{95e0}, the $\tt push$ instruction first decreases $\tt rsp$ by 8 bytes and writes $\tt rbp.0$ onto the stack. The proof at \addr{95e0} provides sufficient information for $\tt{STACKWRITE\text{\ding{51}}}$ to verify at this instruction. Similar for the stack write at \addr{95e8}, the write address $\tt rsp.2 + \addr{8}$ also complies with the $\tt{STACKWRITE\text{\ding{51}}}$ check.
\fi



\ifexample
\autoref{code:heap} shows a simplified code snippet of a heap write and corresponding \vc[s] generated by a proof-of-concept \vcgen. 
%
\begin{lstlisting}[label={code:heap}, escapechar=|, caption={Heap access example in SSA-like assembly with its assertions in colored background.}, float, linebackgroundcolor={
\ifnum\value{lstnumber}=3\color{purple!20}\fi
\ifnum\value{lstnumber}=5\color{purple!20}\fi
\ifnum\value{lstnumber}=8\color{purple!20}\fi
\ifnum\value{lstnumber}=10\color{purple!20}\fi}]
  ... // some calculation of %rax.1
  0xbf94: mov  $0x400000,%ecx.1
  |~~~~~~~~rcx.1 = 0x400000|
  0xbf99: cmp  %rcx.1,%rax.1
  |~~~~~~~~cf.1 = (rax.1 < rcx.1)|
  0xbf9c: jae  $0xcbb0
  0xbfa2: mov  %rdi.0,%rcx.2
  |~~~~~~~~rcx.2 = rdi.0|
  0xbfa5: add  %rax.1,%rcx.2 // dest = %rcx.3
  |~~~~~~~~rcx.3 = rcx.2 + rax.1|
  0xbfbb: mov  %esi.1, 28(%rcx.3)
  ... // argument handling, %rdi unmodified
  0xbff0: call bar
  ...
  0xcbb0: ud2
\end{lstlisting}

To understand why the memory access at \addr{bfbb} is safe, note that \addr{400000} is loaded into $\tt ecx.1$, which effectively clears the upper 32-bit of $\tt rcx.1$ and makes ${\tt ecx.1} = {\tt rcx.1} = \addr{400000}$. Given the conditional jump at \addr{bf9c}, the next instruction is executed only when ${\tt rax.1} < {\tt rcx.1}$, which is \addr{400000}. The code then stores the value of ${\tt rdi.0} + {\tt rax.1}$ to $\tt rcx.3$. Hence, the heap access at \addr{bfbb} is bounded by the heap region as its address $\tt rcx.3+28 = rdi.0 + rax.1 + 28$.
Recall that $\tt rdi.0$ is the heap base and the instruction is executed only when $\tt rax.1 < \addr{400000}$ ($\tt 4M$). 

The theorem prover verifies the memory access at \addr{bfbb} via the associated \vc[s], which is already validated per \autoref{sec:proof-processing}, as follows.
Recall that all axioms and validated \vc[s] are added to the set $\mathcal{F}$,
the constraint solver checks the following proof obligation at \addr{bfbb}: 
\begin{align*}
    \mathcal{F}\land \mathcal{PC} \rightarrow~ &{\tt rdi.0} < {\tt rcx.3} + 28 ~\land {\tt rcx.3} + 28 < {\tt rdi.0} + 8{\tt G}
\end{align*}
where $\mathcal{F} = ({\tt rcx.1}=\addr{400000})\land ({\tt cf.1}=({\tt rax.1}<{\tt rcx.1}))\land ({\tt rcx.2}={\tt rdi.0}) \land ({\tt rcx.3}={\tt rcx.2}+{\tt rax.1})$ and $\mathcal{PC}={\tt cf.1}$. The implication is obviously always correct.

Additionally, at \addr{bff0} ($\tt call~bar$), the value of the current $\tt rdi$ is checked against $\tt rdi.0$. Since $\tt rdi$ is never modified in this example, the heap base invariant across function calls is automatically verified.



\fi

\subsubsection*{Global Access Safety} 
The safety of global access is also similar to heap access, with a few changes. 
In Lucet-compiled code, the global memory region is a continuous space of 4KB. However, unlike heap memory, its base address is stored at a specific memory location: 32 bytes below the heap base.  To track which register stores the global base address, we introduce a distinguished symbol $\tt GB$.
\if false
------------ MODIFY ------------
Upon encountering an instruction of the form \verb|mov [Addr],Reg| (e.g., \verb|mov [%rdi-0x20],%rax|), which loads the global base address into a register, the BV validates the assertion ({\tt GlobalBaseLoc} {\tt Addr}) and adds it to the fact set $\mathcal{F}$.
To encode this event, Polich Checker provides a special derivation rule:
\[
  ({\tt GlobalBaseLoc}~{\tt Addr}) \rightarrow ({\tt GlobalBaseAddr}~{\tt Reg})
\]
------------ MODIFIED ------------
\fi
%
%
Upon encountering an \vc $\tt Reg=GB$ associated with an instruction $\tt mov ~[MemAddr],Reg$, the following proof obligation must be validated:
\[
  ({\tt rdi.0 - 32} = {\tt MemAddr}) \rightarrow ({\tt Reg}={\tt GB})
\]

To verify that a memory access instruction complies with global access safety, \name checks the following rule:
\begin{align*}
  \tt{GLOBAL\text{\ding{51}}} \equiv~ &{\tt GB} < {\tt Addr} \land {\tt Addr} < {\tt GB + 4{\tt K}}
\end{align*}
where ${\tt Addr}$ is the target of the validated memory access instruction. The rest of the verification process is the same as checking heap safety.
\autoref{code:global} shows a simple example of a global memory read and its safety \vc. At \addr{3fb0}, the heap base is loaded into $\tt r12.1$. The next instruction and its \vc can be validated through the implication above, letting the verifier know that $\tt rcx.1$ holds the global memory base $\tt GB$. Eventually, a global read is performed on the address $\tt rcx.2 + \addr{8}$, which is effectively ${\tt GB} + \addr{18}$. This is consistent with the $\tt{GLOBAL\text{\ding{51}}}$ rule, thus verifying the global access safety of this instruction.

\def\addrcolor{\color{ForestGreen}}
\def\asmcolor{\color{RoyalBlue}}
\def\assertcolor{\color{RubineRed}}
\begin{lstlisting}[label={code:global}, caption={Global access example and its corresponding assertions.}, float, escapechar=@, captionpos=b]
    @\aftergroup\addrcolor@       @\aftergroup\asmcolor@  -------- SSA-like Assembly -------- @\aftergroup\assertcolor@       ---- Assertions ----
    @\aftergroup\addrcolor@0x3fb0:@\aftergroup\asmcolor@  mov %rdi.0, %r12.1                  @\aftergroup\assertcolor@       r12.1 = rdi.0
    @\aftergroup\addrcolor@0x3fb3:@\aftergroup\asmcolor@  mov [%r12.1 - 0x20], %rcx.1         @\aftergroup\assertcolor@       rcx.1 = GB
    @\aftergroup\addrcolor@0x3fb8:@\aftergroup\asmcolor@  add 0x10, %rcx.1  // dest = %rcx.2  @\aftergroup\assertcolor@       rcx.2 = rcx.1 + 0x10
    @\aftergroup\addrcolor@0x3fbc:@\aftergroup\asmcolor@  mov [%rcx.2 + 0x8], %r13.1                                          
\end{lstlisting}

\if 0
The safety of stack access is largely the same as a heap access above, with a few minor differences.

In Lucet-compiled code, the stack grows down. 
The base-stack pointer $\tt rbp$ at function start is an anchor point of its stack frame where there is an 8KB read only section above it (holding the return address and spilled arguments of the function), as well as a 4KB read and write region below it (holding the function's local variables). 
The stack pointer $\tt rsp$ is modified via instructions such as $\tt pop$ and $\tt push$ in the function.

In order to check memory access instructions comply with stack access safety, the following needs to be checked:
\begin{align*}
    \tt{STACKR\text{\ding{51}}} \equiv~ &\tt{rbp.0} - 4\tt{K} < \tt{Addr} \land \tt{Addr} < \tt{rbp.0} + 8\tt{K} \\
    \tt{STACKW\text{\ding{51}}}\equiv~ &\tt{rbp.0} - 4\tt{K} < \tt{Addr} \land \tt{Addr} < \tt{rbp.0}
\end{align*}
where ${\tt Addr}$ is the target of the validated memory access instruction.
Following the same workflow in verifying heap access safety, we can verify stack safety. 


\if 0
\begin{lstlisting}[label={code:stack}, caption=Stack write example and its proofs.]
==== SSA-like Assembly ====
  0x95e0: push %rbp.0  // rsp.1 = rsp.0 - 0x8
                       // q[rsp.1] = rbp.0
  0x95e1: mov  %rsp.1,%rbp.1
  0x95e4: sub  $0x20,%rsp.1 // dest = %rsp.2
  0x95e8: mov  %r13.0,[%rsp.2+0x8]
  
==== SSA Proof ====
  0x95e0: rsp.1 = rsp.0 - 0x8
  0x95e4: rsp.2 = rsp.1 - 0x20
\end{lstlisting}

\autoref{code:stack} shows a function prologue and a few instructions following it. There are two stack writes in this snippet. At \addr{95e0}, the $\tt push$ instruction first decreases $\tt rsp$ by 8 bytes and writes $\tt rbp.0$ onto the stack. The \vc at \addr{95e0} provides sufficient information for $\tt{STACKWRITE\text{\ding{51}}}$ to verify at this instruction. Similar for the stack write at \addr{95e8}, the write address $\tt rsp.2 + \addr{8}$ also complies with the $\tt{STACKWRITE\text{\ding{51}}}$ check.
\fi

\subsubsection*{Optimization}
As discussed above, each memory access can be verified via
%
$(\tt{HEAP\text{\ding{51}}} \lor \tt{STACKR\text{\ding{51}}} \lor \tt{GLOBAL\text{\ding{51}}})$ for memory read and $(\tt{HEAP\text{\ding{51}}} \lor \tt{STACKW\text{\ding{51}}} \lor \tt{GLOBAL\text{\ding{51}}})$ for memory write. 
As an optimization, it is possible for the \vcgen to provide a \emph{hint} on which category a memory access belongs to and avoid the disjunction above. 
Note that the hints from the \vcgen are still untrusted, since given a wrong hint, the memory safety check will fail.


\fi

\subsubsection*{Optimization}
As discussed, each memory access can be verified via
%
$(\tt{HEAP\text{\ding{51}}} \lor \tt{STACKR\text{\ding{51}}} \lor \tt{GLOBAL\text{\ding{51}}})$ for memory read and $(\tt{HEAP\text{\ding{51}}} \lor \tt{STACKW\text{\ding{51}}} \lor \tt{GLOBAL\text{\ding{51}}})$ for memory write. 
As an optimization, the \vcgen can provide \emph{hints} on the accessed region and avoid the need of handling the disjunction above. 
Note that the hints are also untrusted, since given a wrong hint, the memory safety check fails.

\subsubsection{Control-flow Integrity}
\label{sec:sfi-cfi}
To ensure control-flow integrity (CFI),  the target of all control-flow transfers (i.e., jumps, calls, and returns) must point to valid code locations. Since direct calls and jumps are trivial to verify, we focus on indirect control flow transfers here. 

\subsubsection*{Indirect Call Safety}
While it is challenging to verify indirect call safety for an arbitrary binary, Lucet-compiled code follows a pattern that simplifies the verification. In particular, the target of each indirect call is retrieved from an entry in $\tt guest\_table\_0$
in its data section, and its starting address is unique for each binary. The table holds many 16-byte aligned (2-qword) entries, whose second qword holds a valid function pointer. The number of entries is stored at a distinguished address (part of the so-called $\tt lucet\_tables$), which is also unique for each binary.
Hence, verifying indirect calls boils down to checking if the call target is a valid entry in $\tt guest\_table\_0$.
%
%

\name adopts three special symbols from the policy specification and encodes their concrete values into the axioms, i.e., 
$\tt GT$ for the starting address of $\tt guest\_table\_0$, 
$\tt GTS$ for the number of entries in $\tt guest\_table\_0$,
and $\tt GTSAddr$ for the address holding $\tt GTS$.
Moreover, it inherits the predicate $\tt FnPtr$ such that the assertion $\tt FnPtr~Reg$ asserts that $\tt Reg$ holds a valid function pointer. 
The BV handles the special symbols and predicate as follows:

\begin{itemize}
    \setlength\itemsep{0em}
    \item It parses the binary being checked and link symbols $\tt GT$ and $\tt GTSAddr$ to their respective concrete values in the binary, for example, \addr{4000} and \addr{3008}: 
\begin{align*}
    \mathcal{F} = \{
                    {\tt GT}=\addr{4000}, {\tt GTSAddr}=\addr{3008}\}
\end{align*}

    \item To validate an \vc ${\tt GTS} = {\tt Reg}$ (i.e., $\tt Reg$ holds the number of entries in $\tt guest\_table\_0$) with its corresponding instruction $\tt mov ~ [MemAddr],Reg$, it checks ${\tt GTSAddr} = {\tt MemAddr}$,
    where ${\tt GTSAddr}$ is the concrete address storing the number of entries. 

    \item To validate an \vc $(\tt FnPtr~Reg)$ with its corresponding instruction $\tt mov ~ [MemAddr],Reg$, it checks
\begin{gather*}
  ({\tt GT} < {\tt MemAddr})  \: \land \: ({\tt MemAddr} < {\tt GT} + {\tt GTS} * 16) \: \land \\
   ({\tt MemAddr} \;{\tt BITAND}\; \addr{7} =  \addr{0}) \: \land \:
   (({\tt MemAddr} \;{\tt BITXOR}\; {\tt GT}) \;{\tt BITAND}\; \addr{8} = \addr{8})
\end{gather*}
where the first two clauses bound $\tt MemAddr$ within $\tt guest\_table\_0$, and the last two ensure ${\tt MemAddr}$ is the last 8 bytes of a 16-byte aligned entry.
\end{itemize}

Finally, the \regulation[initcap] Generator generates proof \regulation[s] for each indirect call instruction on $\tt Reg$ based on the following rule:
$    {\tt IND\_CALL\text{\ding{51}}} \equiv~ {\tt FnPtr~Reg}   $

\ifexample
\begin{lstlisting}[label={code:call}, escapechar=|,  caption={Indirect call example in SSA-like assembly with its assertions in colored background.}, linebackgroundcolor={
\ifnum\value{lstnumber}=3\color{purple!20}\fi
\ifnum\value{lstnumber}=5\color{purple!20}\fi
\ifnum\value{lstnumber}=7\color{purple!20}\fi
\ifnum\value{lstnumber}=10\color{purple!20}\fi
\ifnum\value{lstnumber}=12\color{purple!20}\fi}]
  ... // calculating %rcx.1
  0x6259: mov  $0x3000, %rsi.1
  |~~~~~~~~rsi.1 = 0x3000|
  0x6260: mov  8(%rsi.1), %rsi.2
  |~~~~~~~~rsi.2 = GTS|
  0x6264: cmp  %rsi.2, %rcx.1
  |~~~~~~~~cf.1 = (rsi.2 > rcx.1)|
  0x6267: jae  $0x6525
  0x626d: mov  $0x4000, %rsi.3
  |~~~~~~~~rsi.3 = 0x4000|
  0x6289: mov  8(%rsi.3, %rcx.1, 16), %rax.1
  |~~~~~~~~FnPtr rax.1|
  0x62a3: call %rax.1
  ...
  0x6525: ud2
\end{lstlisting}

\autoref{code:call} shows a simple example with indirect call, where the axioms specifies that ${\tt GT}=\addr{4000}$ and the value of $\tt{GTS}$ is stored at \addr{3008}.
To validate the \vc at \addr{6260}, the \regulation[initcap] Generator generates $(\addr{3008} = {\tt rsi.1+8})$ which is always true given the (validated) \vc at \addr{6259}.
To validate the \vc at \addr{6289} (i.e., the safety of indirect call), the \regulation[initcap] Generator first generates 
\[{\addr{4000}} < {\tt MemAddr} \land {\tt MemAddr} < {\addr{4000}} + {\tt GTS} * 16\]
where ${\tt MemAddr}={\tt rsi.3} + {\tt rcx.1}*16 + 8$. Given previously validated \vc[s] ${\tt rsi.3}=\addr{4000}$ and path condition ${\tt cf1.1}$, which is equivalent to ${\tt GTS}>{\tt rcx.1}$ based on the validated \vc[s] at \addr{6260} and \addr{6264}, it is easy to check that the condition above is always true. Moreover, ${\tt rcx.1}*16 + 8$ ensures that
the address is the last 8 bytes of a 16-byte aligned entry, hence, satisfying the last two checks of $(\tt FnPtr~Reg)$. Finally, at \addr{62a3}, the theorem prover verifies indirect call safety, which is directly implied by the checked \vc at \addr{6289}.
\fi

\subsubsection*{Indirect Jump Safety} 
Similar to indirect calls, Lucet compiles indirect jumps in a specific pattern so that verification is feasible. In this case, valid jump offset entries are stored in 4-byte aligned jump tables, baked directly into the code region, immediately after the indirect jump instructions. 

For each indirect jump, Lucet first calculates an index to its corresponding jump table (starting at $\tt JT$) and checks that the index smaller than the entry size. Next, the jump table is accessed at address $\tt JT + index * 4$ to load the offset, say $\tt off$ from $\tt JT$. Finally, an indirect jump instruction jumps to $\tt JT+\tt off$. To verify indirect jump safety, the policy specification introduces four predicates:
\begin{itemize}
    \item $\tt JT ~ Addr$, stating that $\tt Addr$ starts at \emph{a} jump table.
    \item $\tt JTS ~ Addr ~ s
    $, stating that for the jump table at $\tt Addr$, its number of entries is $\tt s$.
    \item $\tt JmpOff ~ Addr ~ off$, stating that $\tt off$ is a valid offset in the jump table at $\tt Addr$.
    \item $\tt JmpTgt ~ v$, stating that $v$ is a valid jump target.
\end{itemize}

Assume a binary with two jump tables at $\addr{64f3}$ and at $\addr{64c0}$ with 9 and 3 entries respectively, the BV first establishes the initial set of $\mathcal{F}$ as
\begin{align*}
    \mathcal{F} = \{
        &{\tt JT}~\addr{64f3},
        {\tt JTS}~\addr{64f3}~{\tt 9},  
        {\tt JT}~\addr{66c0},
        {\tt JTS}~\addr{66c0}~{\tt 3}
    \}
\end{align*}

To validate an \vc ${\tt JmpOff ~\addr{64f3}} ~{\tt Reg}$ (i.e., $\tt Reg$ holds a valid offset for the jump table at \addr{64f3}), a derivation rule handles the instruction $\tt mov~[MemAddr], Reg$:
\begin{gather*}
    ({\tt JT}~\addr{64f3}) \: \land \: (\addr{64f3} \leq {\tt MemAddr}) \: \land \: 
    ({\tt MemAddr} < \addr{64f3} + ({\tt JTS}~\addr{64f3}) * {\tt 4}) \: \land \: \\
    (({\tt MemAddr} - \addr{64f3}) ~ {\tt BITAND} ~ \addr{3} = \addr{0})
\end{gather*}
where the first clause verifies that \addr{64f3} is indeed a jump table head. The next two clauses ensure that the address is within the designated table. The last clause checks that the address is 4-byte aligned from $\tt JT$.

To validate the next \vc ${\tt JmpTgt ~ Dst}$ on an instruction $\tt add ~ Src, Dst$ equivalent to $\tt Dst' = Src + Dst$, another derivation rule is provided:
\begin{align*}
    &({\tt JT} ~ {\tt Src}) \land ({\tt JmpOff} ~ {\tt Src} ~ {\tt Dst})
\end{align*}

When encountering $\tt Jmp ~ Reg$ instructions, the Obligation Generator generates proof obligations according to the rule:
\[
    {\tt IND\_JUMP\text{\ding{51}}} \equiv {\tt JmpTgt ~ Reg}
\]

\if 0
\begin{lstlisting}[label={code:jump}, caption=Indirect jump example and its proofs]
==== SSA-like Assembly ====
  ... // calculating %eax.1
  0x64d7: cmp  $0x9, %eax.1
  0x64da: jae  0x6517
  0x64e0: mov  %eax.1, %edi.1 // zext to %rdi.1
  0x64e2: mov  $0x64f3, %rsi.1
  0x64e9: mov  [%rsi.1 + %rdi.1 * 0x4], %rdi.2
  0x64ee: add  %rdi.2, %rsi.1 // dest = %rsi.2
  0x64f1: jmp  %rsi.2
  0x64f3: ... // the jump table
  ...
  0x6517: ud2

==== SSA Proof ====
  0x64d7: cf.1 := (eax.1 < 0x9)
  0x64e0: rdi.1 := eax.1
  0x64e2: rsi.1 := 0x64f3
  0x64e9: JmpOff 0x64f3 rdi.2
  0x64ee: JmpTgt rsi.2
\end{lstlisting}
\fi

\subsubsection*{Return Address Safety}
The rule to be checked at a return site (assuming the current version of $\tt rsp$ is $\tt rsp.i$) is simply
\[
    {\tt RETURN\text{\ding{51}}} \equiv {\tt rbp.0 = rsp.i}
\]
The same methodology used in checking stack safety can be applied, as both are mainly handling $\tt rsp$ and $\tt rbp$.

\if0
\subsubsection{Comparison with VeriWASM}
\label{sec:cmp-veriwasm}

We briefly introduce how VeriWASM~\cite{veriwasm} verifies SFI compliance for Lucet WASM code, and perform a qualitative comparison between VeriWASM and \name with the same capability (i.e., SFI for Lucet binaries) in mind.

VeriWASM leverages \emph{abstract interpretation} to derive \emph{abstract} machine states represented in an abstract domain. In particular, for each kind of safety specification (i.e., rules of heap, stack, global, call target, jump target), VeriWASM uses one \emph{unique} abstract domain to encode special values being tracked. For example, the abstraction domain for heap safety consists of three values: $\tt LinearMemBase$ for the heap base, $\tt Bounded$ for a value less than 4GB, and $\tt Unknown$ for any value. Note that the abstract domain design is strongly coupled with one safety rule, and each safety rule requires a separate iteration of abstract interpretation. 
On the contrary, \name's support for a generic \vc language allows it to verify all safety specifications, and even different kinds of policies, in similar ways and in one run. 
Moreover, the open ecosystem permits untrusted parties to generate the \vc in diverse ways.
A quantitative comparison of the TCB is provided in \autoref{sec:tcb-analysis}.
\fi

\subsection{Verifying Information Flow Safety}
\label{sec:example-ifc}

Next, we demonstrate how to offload the majority of an information flow analysis to untrusted parties, while ensuring information flow security using \name. One complexity of an information flow analysis comes from reasoning about memory, which tracks data flow via memory alias. For example, a static information flow analysis called PIDGIN~\cite{johnson2015exploring} consists of approximately 22.7 KLOC (7.5 KLOC of which for alias analysis) on top of the WALA framework~\cite{wala}.

\if 0
A workflow overview of \confllvm is shown in~\autoref{fig:confllvm-workflow}.
\danfeng{we can slightly simplify the description here. Instead of listing two kinds of code, just say the process with information flow checking and memory allocation. Then say during inf. flow analysis, a small set is trusted.}
The \textit{untrusted} compiler parses \textit{untrusted}, user-defined function signature annotations (i.e., $\tt private$ type qualifiers on arguments and return values for private data, an example shown in \autoref{code:confllvm-sig}) and uses type inference to propagate taint inside function bodies.
It generates untrusted binary objects for further use if the type inference produces no information flow violation.
\danfeng{We don't need to discuss \confverify here. move to a later place. This whole subsection is on how to verify one particular information flow analysis, ConfLLVM.}

\begin{lstlisting}[label={code:confllvm-sig}, caption={An untrusted function and its corresponding assembly in \confllvm.}, float=t]
private int add(private int x, int y) {
    return x + y;
}

magic_sequence_add:
  0x11e8: 1101111#unique_function_identifier
add:
  0x11f0: push %rbp
  ...
\end{lstlisting}

\confllvm achieves noninterference with memory isolation between public and private data (\autoref{fig:confllvm-mem})\danfeng{make sure that you explained how does it identify public and private data in the previous paragraph.}.
The compiler defines two disjoint memory regions, which are masked by two 4GB-aligned segment registers $\tt fs$ and $\tt gs$, to hold public and private data, respectively.
Both regions contain their own stack, heap and global spaces.
All data annotated or inferred as $\tt private$ are stored in the $\tt gs$ memory segment, through the address format $\tt gs:[ \%exx+disp]$, where $\tt \%exx$ are 32-bit general purpose registers (GPR) and $\tt disp$ are 32-bit displacements.
Similarly, public data are stored in the corresponding $\tt fs$ segment.
Additionally, when \confllvm compiles source code into binary, each function signature's type annotation is encoded into a 64-bit sequence, named \textit{magic sequence}, preceding the function. 
For the $\tt add$ function in \autoref{code:confllvm-sig}, the first bit represents that the return value is private. 
The following 6 bits reflect the annotation\danfeng{sensitivity?} of the 6 arguments of the function, where the arguments $\tt x$ and $\tt y$ matches the bits 1 and 0, respectively. 
Since there are only two arguments, the other 4 bits are conservatively set as 1.
Moreover, the remaining bits of the magic sequence are set to be a unique value, which does not appear anywhere else in the binary.
This design allows: 1) correct information flow propagation being handled between callers and callees (i.e., a private local variable in the caller cannot be passed into the callee through a public argument, and a private return of the callee cannot be assigned to a public local variable in the caller), and 2) correct control flow transfer through checking the unique identifier when a callee returns.  

\begin{figure}
    \centering
    \begin{subfigure}{.22\textwidth}
        \centering
        \includegraphics[width=\textwidth]{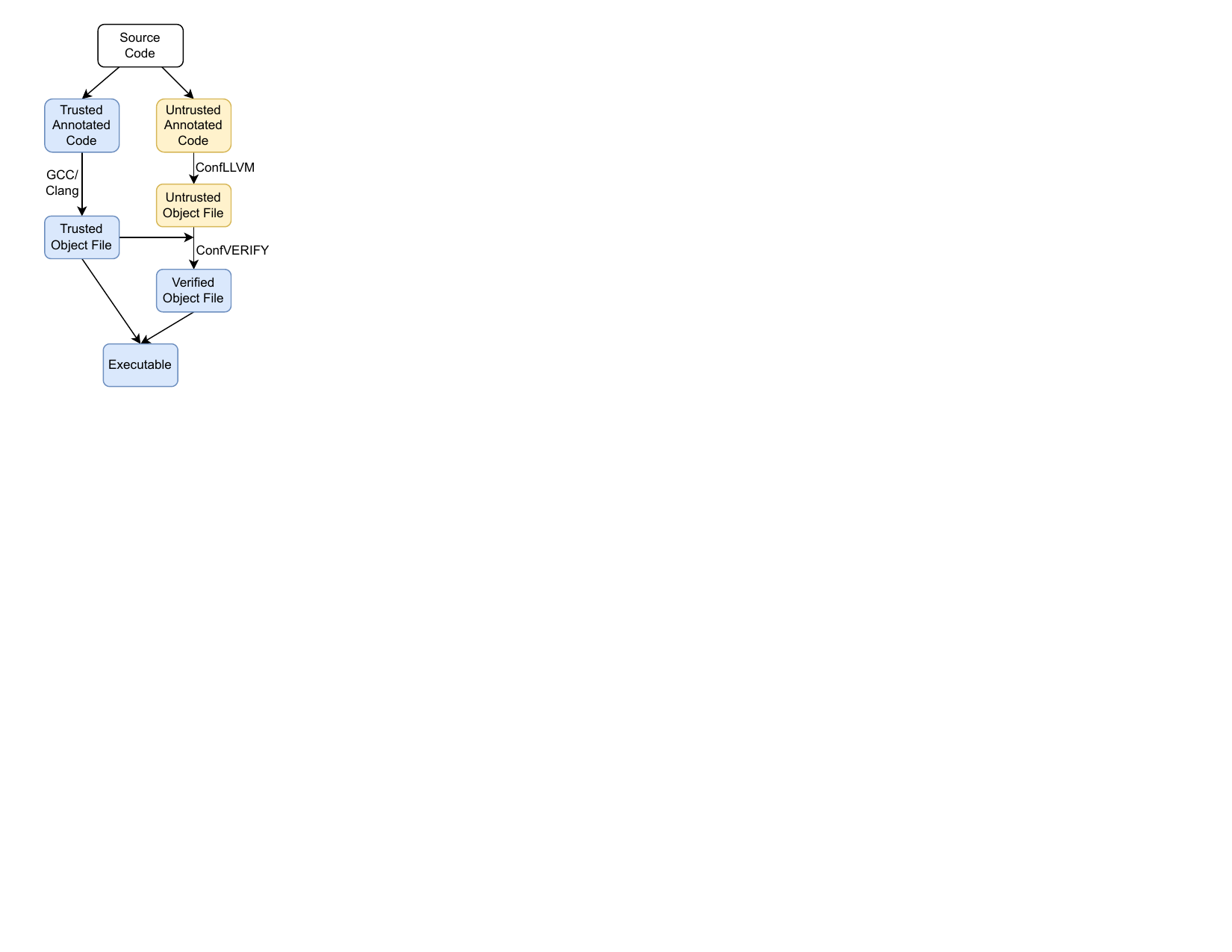}
        \caption{Workflow Overview}
        \label{fig:confllvm-workflow}
    \end{subfigure}
    \hspace{10pt}
    \begin{subfigure}{.16\textwidth}
        \centering
        \includegraphics[width=\textwidth]{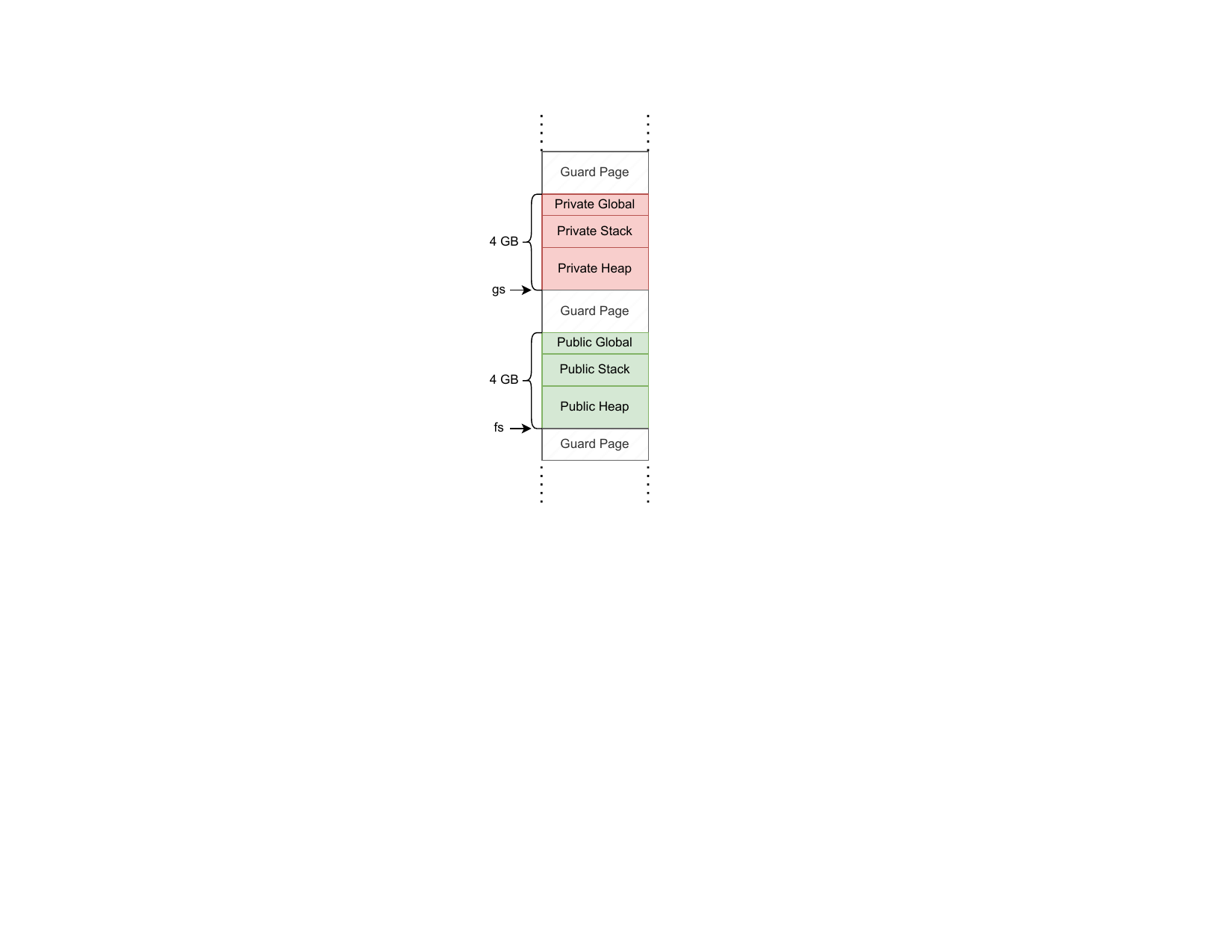}
        \caption{Memory Layout}
        \label{fig:confllvm-mem}
    \end{subfigure}
    \hspace{10pt}
    \caption{Illustration of \confllvm}
\end{figure}
\fi

\begin{figure}[tb]
    \centering
    \includegraphics[width=0.6\columnwidth]{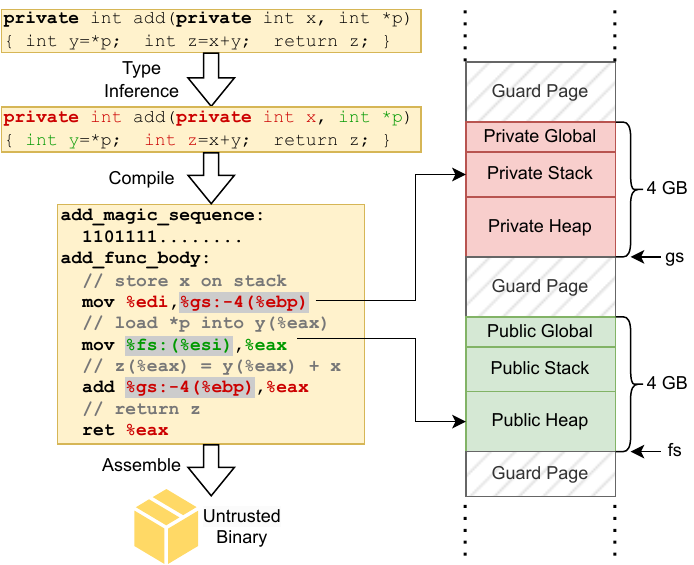}
    \Description{Description placeholder}
    \caption{Illustration of \confllvm's workflow.}
    \label{fig:confllvm}
\end{figure}

\name leverages an LLVM-based information flow analyzer, \confllvm~\cite{confllvm}, as the assertion generator and extracts its formally verified specification.
\confllvm's verifier targets untrusted x86-64 binaries compiled from its toolchain, as illustrated in \autoref{fig:confllvm}.
The input is a C program where each function comes with security annotations (i.e., $\tt private$ type qualifier) on arguments and return values.
During compilation, 
\begin{itemize}
\item Each function's annotation is encoded into a 64-bit \textit{magic sequence} preceding the function. The first 7 bits of a magic sequence encodes the secrecy of return value and 6 register-passed arguments. The rest bits are irrelevant for IFC. For the $\tt add$ function in \autoref{fig:confllvm}, the first 7 bits are 1101111 as the return value and the first parameter are private thus set as 1. The other 4 bits corresponding to unused arguments are conservatively set as 1. 

\item \confllvm uses static taint analysis to propagate taints within each function, and then reallocates all memory objects (in stack, heap and global space) into two disjoint memory regions according to objects' sensitivity. 
The memory regions are masked by two 4GB-aligned segment registers $\tt fs$ and $\tt gs$, so all tainted memory objects are stored and loaded from the private ($\tt gs$ masked) memory segment, through the address format $\tt gs:[exx+disp]$, where $\tt exx$ is 32-bit general purpose register (GPR) and $\tt disp$ is 32-bit displacements. 
Public data is accessed via $\tt fs$. 
\end{itemize}

\subsubsection{Information Flow Semantics and Constraints}
Unlike SFI analysis, an information flow analysis reasons about the secrecy of registers/memory locations, rather than their values. Hence, we interpret all variables in the assertions ($\bnf{var}$ in \autoref{fig:proof}) as their sensitivity, where $\true$ and $\false$ represent secret and public respectively. Moreover, we abstract the IR semantics (\autoref{sec:ssa}) so that information flow is encoded as a logical implication ${\tt src} \Rightarrow {\tt dst}$, replacing the value-based IR semantics where ${\tt src}$ and ${\tt dst}$ are the source and destination of an instruction. All memory loads/stores to segment registers $\tt fs$ and $\tt gs$ are modeled as public and secret respectively; other memory instructions are disallowed.

IFC requires that secret information never ``flows'' to public locations (i.e., $\true\Rightarrow \false$ never occurs). As standard, a proof obligation in IFC follows the form of $\exists \vec{v}. ~ \mathcal{F} \land \mathcal{S}$, where $\vec{v}$ denotes variables whose secrecy is to be solved. Hence, IFC is a must policy, requiring a \sat model.

\subsubsection{Information Flow Control}
In \confllvm compiled binaries, both magic sequences and segment register usages are assertions in \name's framework: they claim the secrecy of a function's parameters and return value, as well as memory loads and stores. The assertions require further validation as they can be crafted to leak sensitive information, e.g., by switch a memory write from the public segment to the private segment, or to switch an argument's taint bit in the magic sequence from 0 to 1. 




To validate a magic sequence of a function $f$, we first check information flow restrictions at each of its call site: we ensure that the parameters' secrecy can flow to the corresponding taint bit of $f$. Right after the call instruction, all caller-saved registers (expect the return registers) are conservatively set to be private, whereas the return registers' status (e.g., $\tt rax$, $\tt xmm0$, etc.) are set according to $f$'s magic sequence. For instance, if a caller calls \texttt{add} in \autoref{fig:confllvm} with actual parameters $\tt edi.2$ and $\tt rsi.1$ and returns ${\tt eax.3}$, then validation requires $\mathcal{F} \land {\tt edi.2} = {\tt true} \land {\tt rsi.1} = {\tt false} \land {\tt eax.3} = {\tt true}$, where $\mathcal{F}$ is the the fact set at the call site (\autoref{sec:proof-processing}).
Moreover, at the beginning of $f$, we use the taint bits in the magic sequence to taint the secrecy labels of all corresponding parameter-passing registers, and then validate the taint bit on its return value at the return instruction (e.g., in \autoref{fig:confllvm}, we validate $\mathcal{F}\land {\tt eax.2} \Rightarrow {\tt true}$ at return).

Validating segment register usage is easier: we simply validate that the segment's secrecy can flow to the target register (for a load instruction), or the source register can flow to segment's secrecy (for a store instruction).

\subsubsection*{Policy specification and obligation generation}
\confllvm specifies IFC policy as a list of \emph{trusted} functions. These functions are used to specify secret sources (e.g., the initial allocations of a key), public sinks (e.g., returning to the public domain), as well as declassification functions that might reveal secret-dependent information (e.g., an encryption function). These functions are not checked by \name; they serve as the root of trust in IFC. 
To verify that no IFC violation occurs, the obligation generator ensures that for every public parameter of a trusted function, it only takes public values on its call site. Moreover, to control implicit flows, 
the obligation generator also checks that all branch conditions must be public. 

\if0
\subsubsection*{Comparison with \confllvm}
\quan{moving this to 9.1 and revise?}

Since the compiled machine code in the \confllvm framework is not trusted, they additionally implement a verifier, \confverify, to further verify information flow safety.
In \confllvm, the TCB includes \confverify itself and an off-the-shelf disassembler which accompanies it for CFG construction. 
On the code side, a set of trusted functions are compiled with vanilla compilers. 

On the other hand, \name's policy extension to support \confllvm generated binaries doesn't need a trusted disassembler, nor does it require a set of predefined trusted functions. And because the reuse of utility components in \name, the extension itself is even smaller than \confverify (\autoref{sec:tcb-analysis}). 

We slightly modify the \confllvm workflow when compiling binaries for \name to verify: all code are treated as untrusted and goes through \confllvm to produce one untrusted binary. 
And once the functions originally in the \confllvm trusted set are verified in \name, they establish a root of trust of annotations to verify other functions further. 
\fi

\subsection{Verifying LVI Mitigation}
\label{sec:example-lfence}
Last, we target Load Value Inject (LVI) attack which exploits a micro-architectural load buffer to inject a value after a faulting load~\cite{lvi}.
Inspired by the methodologies of modeling micro-architecture semantics~\cite{mcilroy2019spectre, cauligi2020constant},
we extend the \vc language to represent the behaviors of the load buffer.
A new $\tt flag$, $\tt LoadBuffer$ is introduced to represent if the load buffer is filled. We also extend the semantics of setting $\tt LoadBuffer$ for memory load instructions.
It is cleared by $\tt lfence$,
as all loads before this instruction retire such that the load buffer is no longer occupied. 

Practical mitigations have been integrated into GNU \texttt{binutils} (i.e., the \texttt{as} assembler) and further adopted by Intel's SGX SDK.
The policy enforces the instrumentation strategy purposed by \texttt{as}, which inserts $\tt lfence$ instructions immediately after load instructions.
Therefore, for every sensitive load instruction (e.g., those after branches), the Obligation Generator introduces proof obligations that $\tt LoadBuffer$ at load's next instruction is cleared. 

\if lvi

\subsection{Verifying LVI Mitigation}
\label{sec:example-lfence}
We now elaborate on the Lfence policy
to demonstrate the extensibility of \name, particularly with regard to side-channel mitigations.
We target Load Value Inject (LVI) attack which exploits a micro-architectural load buffer to inject a value after a faulting load~\cite{lvi}.
Inspired by the methodologies of modeling micro-architecture semantics~\cite{mcilroy2019spectre, cauligi2020constant},
we extend the proof language to represent the behaviors of the load buffer.
A new $\tt flag$, $\tt LoadBuffer$ is introduced to represent if the load buffer is filled or not, and it is set whenever there is a memory load.
It is cleared by $\tt lfence$,
as all load instructions prior to $\tt lfence$ instruction retire such that the load buffer is no longer occupied. 

Practical mitigations have been integrated into GNU \texttt{binutils} (e.g., \texttt{as} assembler)~\cite{binutils} and adopted by Intel SGX SDK~\cite{sgx:sdk}.
The Lfence policy verifies that binary applies \texttt{-mlfence-after-load}, an instrumentation strategy purposed by \texttt{as}, which inserts $\tt lfence$ instructions immediately after load instructions
Therefore, for every load instruction, Polich Checker asserts that the $\tt LoadBuffer$ at its next instruction is cleared. 

\fi
\section{Bounty Task Manager}
\label{sec:btm}

\ignore{
\hongbo{New Outline
Why we do not trust SMT solvers? 1. Alternative solution; 2. Benefits of delegating constraint solving 3. Lower bound
Now we introduce the difference between verification service and Bug bounty Program.
1. Compare the form of bugs (\sat model)
2. Problems can be solved by the smart contract: automatic validation and TTP => solved by the smart contract-based solution
3. Issues originated from verification service: requiring no-bug indication, security assurance. Plus the probability of finding bugs is low.
See \autoref{tab:comparsion} for reference.
}

\begin{enumerate}
    \item Why do we not trust SMT solvers? What are the shortcomings of it? \textit{- This is pretty much already said, first paragraph of 6.0.}
        \begin{itemize}
            \item Present the alternative solution, a \emph{TRADITIONAL?} bug hunting paradigm. 
            \item Benefits of delegating constraint solving - open and auditable
            \item Lower bound \textit{- This is trustworthiness, but I think this should go to a later point where BTM has been introduced}.
        \end{itemize}
    \item What is the problem with the simple bug hunting solution? Introducing the BTM the solve the problems.
    \item Compare the difference between a traditional bug hunting program and the BTM solution. See \autoref{tab:comparsion} for reference.
        \begin{itemize}
            \item The form of bugs (\sat model)
            \item Problems can be solved by the smart contract: automatic validation and TTP $\Rightarrow$ solved by the smart contract-based solution
            \item Present the challenges originated from verification service
                \begin{itemize}
                    \item requiring no-bug indication
                    \item security assurance
                    \item Plus the probability of finding bugs is low
                \end{itemize}
        \end{itemize}
\end{enumerate}
}

The Bounty Task Manager (BTM) in \name solves the constraints from the BV, fostering an inclusive and transparent verification environment while maintaining a minimal TCB. 
It deals with both \sat and \unsat constraint files for $\exists$- and $\forall$-quantified policies, respectively.
Given that validating \unsat results is more challenging, we elaborate on it first.

The BTM borrows the idea from bug bounty programs and delegates heavyweight theorem proving tasks to untrusted bug bounty hunters (BBHs), validating their solutions via a lightweight protocol.
This approach significantly reduces the TCB size.
Furthermore, BBHs can utilize a range of frameworks, including translations between SMT constraints and other logical representations~\cite{besson2011flexible,armand2011modular}, prompting openness.

Such an open ecosystem welcomes all solving methods, including theorem provers from industry and academia, experimental algorithms, and even manual inspection,  to contribute.



However, adopting traditional bug bounty programs directly does not work for a couple of reasons.
Their dependence on trusted third parties for bug confirmation and reward distribution conflicts with \name's goal of openness.
To mitigate this, we adapt smart contract-based bug bounty designs which leverage TEE for cheaper computations~\cite{tramer2017sealed,cheng2019ekiden}, automating the validation of bug submissions and the issuance of rewards.
Therefore, \name completely removes trusted third parties.
Moreover, by integrating with the blockchain, \name enhances its transparency to a higher level.
The \vc[s], certified binaries, SMT files, and verification results could all be published on the blockchain, allowing full replication of the verification process.
Users can also independently audit the CC applications and the smart contracts via the primitives of TEEs and blockchain.

\ignore{
\subsubsection*{Challenges}
Traditional bug bounty programs still rely on trusted third parties, since the confirmation of bugs and rewarding involves human intervention.
To remove trusted third parties, \name alters prior work built atop the smart contract~\cite{tramer2017sealed,cheng2019ekiden}.
Bugs can also be confirmed mechanically by the smart contract\hongbo{do we need to state that smart contract needs automatic validation of bugs? This may not always be true?}.
This is straightforward, as re-evaluating the constraints with the valuation in the model achieves bug validation.
Moreover, smart contracts can automatically pay rewards to BBHs after confirming submitted bugs, addressing the concerns of \BBH{}s regarding bug-reporting and compensation~\cite{akgul2023bug}.
These features base \name's constraint solving on a bug bounty program using a smart contract.\danfeng{I don' think from "This is straightforward ...." adds too much too the paper. Maybe we can just say "the adoption is mostly straightforward and we discuss more details in the implementation (or appendix?)}
}

\subsubsection*{Challenges}
More significant challenges originate from the gap between verification service and bug bounty programs:
\name must ensure verification results to users beyond accepting bug submissions and validating bugs.
Bug bounty programs fail the requirement in two important ways.
First, bug bounty programs focus solely on identifying bugs, neglecting ``no-bug'' submissions pivotal for verification.
More specifically, when BBHs do not submit anything, the BTM must discern further between two scenarios: BBHs have no interest in the task, vs. BBHs try hard but fail to find a bug (i.e., the binary is verified by BBHs).
Bug bounty programs ignore such concerns since they \emph{do not} aim to warrant any security guarantee on software. 
%
%
Therefore, the BTM needs to measure BBHs' efforts in bug finding (i.e., constraint solving) even if no bug is found (\textit{Challenge 1}).
Second, \xchanges{the BTM should provide (quantitative) assurance to the verification results (\textit{Challenge 2}), requiring the protocol to not only incentivize ample BBHs but also manage worst-case scenarios.} 


\ignore{
\done\danfeng{we have not introduced payments yet}, addressing the concerns of \BBH{}s regarding bug-reporting and compensation~\cite{akgul2023bug}. \sen{can smart contract publish verification results? we also mention this challenge in a later paragraph}
\fanz{this paragraph gives readers the impression that our innovation is to introduce smart contract-based bug bounties, which is not the case. I think in this paragraph, we should  establish why bug bounty in general is useful for us (without delving into how to run a bug bounty, with human or smart contracts). Hopefully this paragraph is not too long. the next paragraph should be merged with this one, both establishing the benefits of bug bounty.}
\quan{I think one thing we failed to establish is that the direct target of the bug hunting program are \emph{bugs in SMT solvers, not binaries to-be-verified}. That in my opinion is subtle, but should be pointed out to aid explaining the motivation of the complex task bundling and rewarding system (i.e., C2 and C3).} \hongbo{mention it in the challenge?}

\ignore{
In the last step of binary verification~\autoref{sec:satcheck}, the BV forwards \unsat constraint files to the BTM to validate them via a crowdsourcing bug bounty protocol.
This protocol allows anyone, referred to as a {\em bug bounty hunter} (\BBH{}), to find and report bugs in exchange for monetary rewards, using \textit{any} means, such as running existing SMT solvers, developing new algorithms, and even manually solving.
Here, a bug refers to a satisfiable valuation of the variables (i.e., a satisfiable model) in constraints but deemed unsatisfiable by untrusted SMT solvers.
Again, validating bugs submitted by BBHs requires a much smaller TCB than typical SMT solvers.
}

\ignore{
Validating an unsatisfiable result is hard, whereas validating a satisfiable model is easy\fanz{just ``easy''}.
Therefore, we substitute the complex and huge SMT constraint solvers with a simple model validator and related components\fanz{what is a driver in this context?} in the TCB.
In the BTM, ``bug'' refers to a satisfiable model\fanz{this definition should appear way earlier.}, and finding a bug means there is a valuation (i.e., model) satisfying all logical predicates, which implies policy incompliance.
}

\subsubsection*{Challenges}
To remove trusted third parties, previous work has employed smart contracts to build bug bounty programs, curtailing expenses of smart contract-based computation by harnessing off-chain TEEs~\cite{tramer2017sealed,zhangTown2016,cheng2019ekiden}.
However, \name requires additional capabilities beyond simply accepting bug submissions, facing several unique challenges. \done\fanz{It's not that their methods do not fit \name, it's that \name requires additional properties that a bug bounty (traditional or smart contract based) cannot provide.},
On the one hand, as a verification service, \name must deliver a stronger assurance,
whereas traditional bug bounty programs focus only on discovering potential vulnerabilities by third-party participants. In other words,
\name needs to publish trustworthy \hongbo{quantitative} verification results while the BBHs are untrusted.
More specifically, when there are no bug submissions, \name needs to differentiate further between two scenarios: BBHs have no interest in the task, vs. BBHs try hard but fail to find a bug (i.e., the application is verified), whereas traditional bug bounty programs do not care since they \emph{do not} aim to provide security guarantee to the software provider\quan{provide guarantee to software providers?}. 
%
%
Therefore, similar to {\em proof of work} mechanism~\cite{nakamoto2008bitcoin}, the BTM should appreciate BBHs' efforts in constraint solving even if no bug is found (\textbf{C1}) and use the efforts to gauge the assurance on the software being verified (\textbf{C2}).
On the other hand, another unique challenge is that the probability of finding bugs in the bounty tasks is slim, as the reference solvers have already filtered many policy-incompliant binaries (see \autoref{sec:satcheck}).
Undetected bugs indicate flaws in reference solvers.
Therefore, the bounty structure demands a careful setup to incentivize a sufficient number of BBHs (\textbf{C3}).

\hongbo{ignore the rest of this section to \autoref{sec:btm-solution}, maintaining for now to recheck the comments}

One intuitive approach involves deploying a smart contract that archives the constraint files and welcomes submissions from any \BBH{} participant.
This contract compensates the submitter with a preset reward, marking the relevant binary as insecure if the submission contains a validated bug (i.e., the submitted model satisfies the constraints).\danfeng{we need to try a bit harder before this to convince the reader that smart contract is a good fit for our goals. Maybe one approach is to first say bug bounty programs are appealing to achieve our goals. But a direct application will not work.}
Conversely, a binary receives a ``secure" designation if no vulnerabilities are detected within a duration.

Despite the potential to curtail validation expenses by harnessing TEEs for off-chain computations~\cite{zhangTown2016,cheng2019ekiden},
this straightforward design still confronts two primary obstacles: the economic motivation and the invisibility of BBH endeavors.
Firstly, the arduous nature of identifying satisfiable models may deter BBHs from participating actively. 
Secondly, the efforts put forth by BBHs go unnoticed, as \unsat verification results come with no evidence of working if no bug is found. \done\danfeng{be more specific?}.
As a verification service, \name ought to relay to users the verification results via the smart contract, detailing the number of attempts a BBH made to verify a binary.
However, users receive no additional feedback from the smart contract when no bug exists, both when BBHs are struggling to find satisfiable models or they simply lack interest, confusing them on whether to trust the binary.
\done\danfeng{this does not sound too bad. Why it matters? Maybe we should say the challenge is to create a proper incentive mechanism, while allowing untrusted participants? Anyway, it was not very clear what are the main technical challenge there at this point.}.
}

\ignore{
As for guarantee, the verification results are sound enough, \name faces two types of attackers.
Some malicious BBHs may try to let a bug remain undetected and make the verification pass, e.g., by submitting falsified constraint-solving results.
System administrators may also render the TEE (or server) offline and cease the submission of potential new bugs.
}

\ignore{
A natural idea is to deploy a smart contract that stores the constraints to be verified and accepts submissions of satisfiable models from anyone working as \BBH{}.
The smart contract pays the submitter a predefined reward and flags the corresponding binary as insecure if the submission contains a validated bug (i.e., the submitted model satisfies the constraints).
A binary will be flagged as secure if no bugs are found within a predefined time frame.

While conceptually correct, the above idea is impractical as storing large constraints and verifying satisfiability using a smart contract is prohibitively expensive. For example, performing 10 million additions in an Ethereum smart contract costs about \$1,110 as of July 2023, while a desktop CPU can compute billions of additions per second.
To reduce the computation cost, one can utilize TEEs to perform computation {\em off-chain}~\cite{cheng2019ekiden} and have smart contracts verify the computation results.
Specifically, to claim the reward, a \BBH{} with a bug $B$ runs a predefined validation program in a hardware TEE and submits the validation result with an attestation report~\cite{intel:sgxexplained}.
This can prove that $B$ indeed satisfies the constraints.

The smart contract needs to verify the attestation instead of evaluating SMT constraints.
Since constraints need not be accessed by the smart contract, it can be stored in a distributed file system (e.g., IPFS \cite{ipfs}) to reduce storage costs.
\ignore{
Then the \BBH{}s can query the tasks from the smart contracts, fetch the constraints, and try to submit their results to the \sat Validator.
The \sat Validator will validate the submissions and update the smart contracts if the \BBH{}s find a bug.
}

However, despite the improvements, this solution still faces three concerns.
First, verifying the attestation by smart contract is expensive as of July 2023 (about \$124) and we should reduce the cost.
\hongbo{move this to later analysis: since constraints are considered unsatisfiable by untrusted SMT solvers (otherwise, they would have been rejected), the probability of them being satisfiable is small, which suggests a bug in SMT solvers}
Second, the bug bounty program should promise profits for \BBH{}s to attract them as monetary compensation is a definitive factor~\cite{akgul2023bug}.
Third, proof of work is necessary for the bug bounty protocol.
If no bugs are submitted, there is no way to tell if there isn't any bug or if no \BBH{}s were attracted to work on the tasks. 
\ignore{
We note these security concerns prevent us from achieving the design goals of the BTM
To effectively overcome these concerns, our solution needs to: 1) ensure the \BBH{}s can submit even if the BTM is offline; 2) incentivize the BBHs facing hard constraint-solving problems; 3) \fanz{unclear what does this mean} \sen{remove diversify} ensure the submissions can enhance the confidence of policy compliance for the binaries.
}
}

\begin{figure}[!t]
	\centering
        \includegraphics[width=0.8\textwidth]{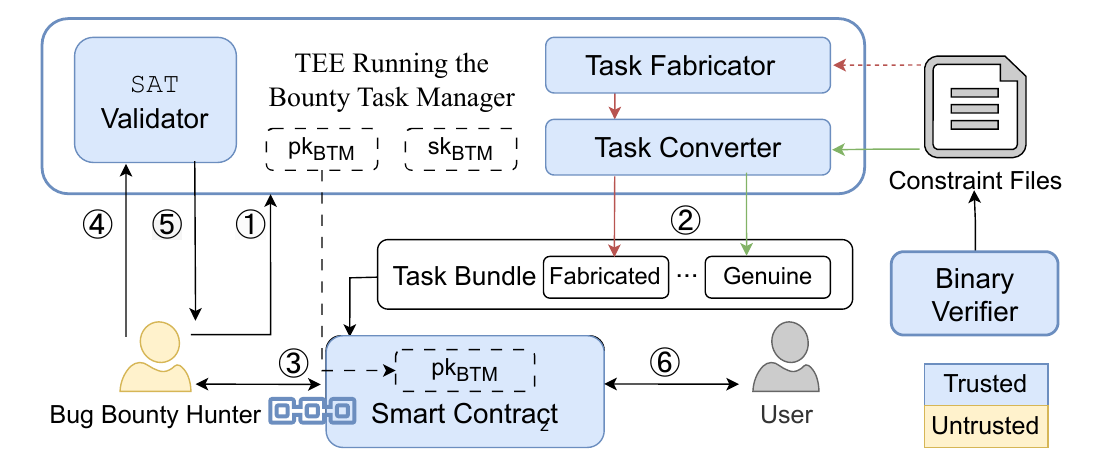}
        \Description{Description placeholder}
	\caption{BTM Workflow for $\forall$-quantified policies.}
	\label{fig:btm-workflow}
\end{figure}


\if 0
\begin{table*}[!htbp]
    \caption{Answer submission cases for the task bundle and their outcomes.}
    \centering
    \footnotesize
    \resizebox{\textwidth}{!}{
\begin{tabular}{l|lll|l}

\hline
             & Genuine Task w/o Bug & Genuine Tasks w/ Bug* & Fabricated Tasks    & Outcome                       \\ \hline
Ground Truth & All \unsat            & All \sat model     & All valid sat model & N/A                           \\
Case 1       & Any                  & Any                  & Exists \unsat or invalid \sat model        & Reject                        \\
Case 2a       & Any                  & Exists valid \sat model      & All valid \sat model & Bug Reward, mark binary buggy \\
Case 2b       & Exists \sat model     & Any                  & All valid \sat model & Reject after validation       \\
Case 3       & All \unsat            & All \unsat            & All valid \sat model & Basic Reward, increase verification counts   \\ \hline
\end{tabular}
    }
    \label{tab:submission-alternative}
* Genuine tasks with bugs may not exist in the task bundle.
\end{table*}
\fi

\subsubsection*{Bug Bounty Protocol}
Before diving into the technical details of addressing the challenges, we first introduce the high-level workflow of the BTM 
according to \autoref{fig:btm-workflow}.
Initially, the BTM receives constraint files from the BV (\autoref{sec:proof-checking}).
Upon a BBH's request (\ding{172}), the \textit{Task Fabricator} and the \textit{Task Converter} collaboratively generate a {\em task bundle} containing multiple constraint files (\ding{173}) and publish it via the smart contract.
The BBH can then fetch the task bundle from the smart contract (\ding{174}).
After solving a task bundle, the BBH submits their answers for every task (i.e., either \unsat or a \sat model) to the BTM (\ding{175}), which then validates the answers using the \sat \textit{Validator}, allocates a reward (\ding{176}), and updates the verification results on the blockchain accordingly.
Finally, users can query the verification results of a binary (\ding{177}),
determining whether to use the binary according to the results.
We explain next how the aforementioned challenges are resolved using the building blocks in \autoref{fig:btm-workflow}.
A formal protocol with detailed algorithms is presented in~\autoref{appx:btm-full-workflow}.

\subsection{Challenge 1: Indication of Efforts}
\label{sec:indication-of-efforts}
Both bug and ``no-bug'' submissions should be validated as submissions are not trusted.
For $\forall$-quantified policies expecting \unsat, ``bug'' is a \sat model, whose successful validation via \sat Validator indicates policy violation.
However, validating \unsat results poses a significant challenge\footnote{Recent advancements in verifiable proof certificates alleviates the problem to some extent~\cite{armand2011modular, ekici2017smtcoq, andreotti2023carcara, barbosa2022flexible}. However, not all solvers/algorithms can emit proof certificates,
and existing techniques lack support for some theories: \texttt{SMTCoq} only support quantifier-free theories~\cite{ekici2017smtcoq}, and veriT can fail to generate checkable proof involving quantifiers~\cite{andreotti2023carcara}.}.


To tackle the challenge, we purposefully generate ``fabricated tasks'' (for which we know the answers) and blend them with genuine tasks (which are \ndsschanges{associated with verification tasks} from the BV) as a \emph{task bundle}.
BBHs fail on fabricated tasks with a high probability if they do not spend effort working on \emph{all} tasks in this bundle.
Therefore, by requiring submissions of answers to the entire task bundle, the BTM can confirm, through a layer of indirection, that a BBH spends effort on all tasks.
Considering the probability of finding bugs is slim thus genuine tasks are likely to produce \unsat when solved, a task bundle should have the following properties:

\begin{enumerate}[(i)]
    \item Fabricated tasks serve as an indication of work, forcing BBHs to solve all tasks in a bundle, as they must provide the known answers to all fabricated tasks.
    \item Fabricated tasks should be indistinguishable from genuine tasks \ndsschanges{in polynomial time}. Otherwise, BBHs can simply submit \unsat to all identified genuine tasks without solving them with any effort.
    Additionally, the BBHs should not receive repeated tasks, since they can reuse previous answers on them. 
    \item BBHs should not know the exact number of fabricated tasks with \sat results in a bundle; otherwise, they can solve tasks till the count of \sat answers is met and then answer the rest as \unsat without solving them.
\end{enumerate}

\ignore{
To tackle the challenge, our insight is to \emph{inject fabricated tasks} and mix them with the {\em genuine tasks} as a {\em task bundle}, which is delivered to BBHs.
Genuine tasks are from the constraint files generated by the BV; they correspond to functions subject to verification and presumably to be \unsat.
Fabricated tasks are crafted via Task Fabricator and designed to be \sat.
\danfeng{why are the designed to be \sat? what goes wrong if the fabricated ones can both \sat and \unsat?}.
\hongbo{we should guarantee the satisfiability of fabricated tasks is known correctly to the BTM, and we cannot guarantee it for \unsat.}
\fanz{this is a good question. where is the answer?} \hongbo{this can be explained later after we explain the reward structure, but I don't know how to explain it thoroughly here. We could say that, because \sat results come with \sat models, which can be validated individually within the TCB but \unsat results come with no evidence, we can only guarantee the production of \sat tasks.}\danfeng{For \unsat fabricated tasks, if BBH says unsat, it is legitimate. if BBT says sat with a model, we can check it. I don't see what's the real issue with this.} \hongbo{validating these \unsat fabricated tasks are the same as genuine tasks. However, they are not necessary, as verifying them does not contribute to the verification of functions but introduces additional overhead. Imagine if all fabricated tasks are \unsat, BBH can just answer \unsat for all tasks and bypass solving. Therefore \sat fabricated tasks are necessary, and the introduction of \sat fabricated tasks renders \unsat fabricated tasks useless.}
}


\begin{figure}[!t]
    \centering
    \includegraphics[width=0.7\textwidth]{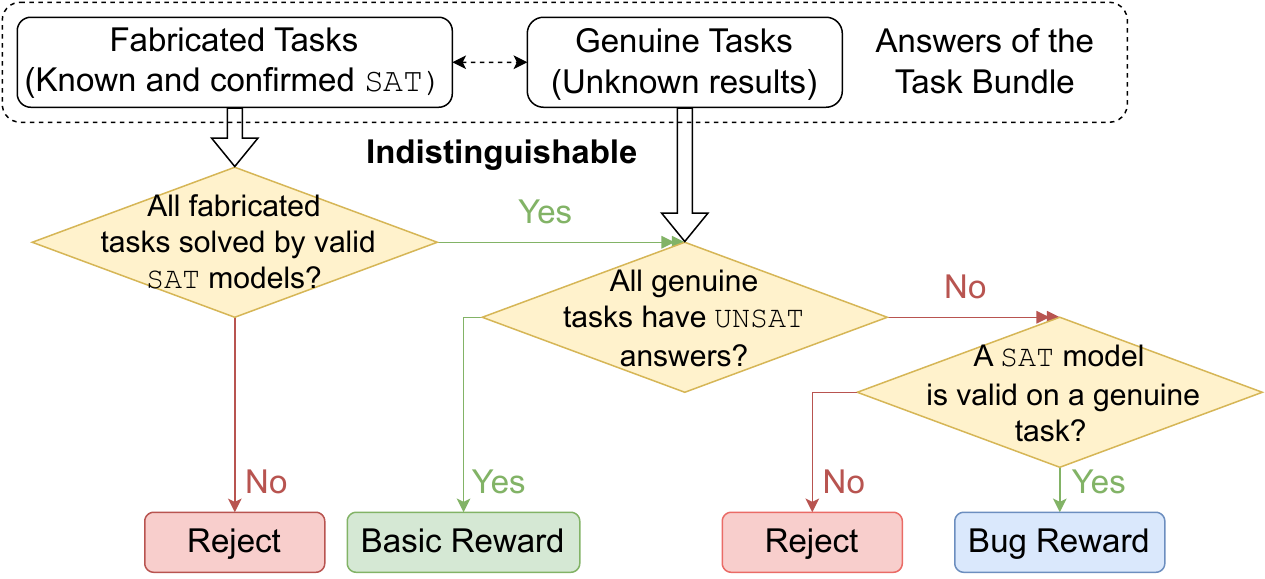}
    \Description{Description placeholder}
    \caption{Validation of the answers of a Task Bundle.}    
    \label{fig:answer-tree}
\end{figure}

To satisfy these properties, we introduce several components and techniques here. 
For property (i), 
the Task Fabricator is introduced to carefully craft constraint files with \sat models as answers, \ndsschanges{which could be validated by the \sat Validator}.
One way to generate fabricated tasks is \ndsschanges{randomly} sampling from the constraint set of a constraint file received from the BV.
The \sat Validator \ndsschanges{assures} the answers of these tasks are validated \sat models.
\autoref{fig:answer-tree} shows how the BTM validates the submission from a BBH.
First, the submission is only considered legitimate if all fabricated tasks are solved with valid \sat models, indicating that the BBH must have solved all tasks in the bundle (with the aforementioned properties).
Next, if all genuine tasks are answered \unsat, the smart contract allocates the BBH a {\em basic reward} and increases the verification counts of functions corresponding to all genuine tasks by one.
Otherwise, if a \sat model is reported for any genuine task, the BTM validates the model and, based on the result, either marks the corresponding binary insecure and issues the BBH a {\em bug reward}, or rejects the whole answer set.

To satisfy property (ii),
the Task Converter converts each task to a series of equivalent formulae~\cite{huth2004logic} (i.e., the original and converted constraints have the same satisfiability) to prevent reuses of answers.
Since no polynomial-time algorithm can confirm the equivalence of two logical proposition sets (i.e., a series of constraints)~\cite{eiter2004simplifying, eiter2007semantical}, the Task Converter essentially ``obfuscates'' \ndsschanges{constraint files before and after the conversion.
Thus, BBHs cannot avoid constraint solving.
Moreover, the conversion also breaks the link (i.e., subset inclusion) between the constraints received from the BV and fabricated tasks derived from them, making them indistinguishable in polynomial time.
Nevertheless, breaking such indistinguishability assumption cannot prevent benign BBHs from receiving task bundles containing buggy tasks.
Thus, new bugs can still be uncovered and submitted.}
Property (iii) is met by randomizing the number of fabricated tasks in each task bundle. Internally, the BTM records which tasks are fabricated and genuine; but the information is never revealed to the BBHs.

\ignore{
\subsubsection*{Validation of Submission}
\autoref{fig:answer-tree} demonstrates the answer validation and reward allocation according to the submission to the task bundle.
Firstly, as mentioned earlier, the \sat Validator validates \sat model for all fabricated tasks, rejecting the submission otherwise.
As \sat model implies a bug in genuine tasks, a validated \sat indicates the task corresponds to a buggy function, the BTM marks the corresponding binary insecure and grants the BBH a premium {\em bug reward}.
If all genuine tasks receive \unsat answers, the smart contract allocates the BBH a {\em basic reward} and increases the verification counts of functions corresponding to all genuine tasks\done\fanz{have we defined verification count?} by one, implying \fanz{not necessarily implying this. it is possible to claim basic reward without checking genuine tasks} the genuine tasks in this bundle are also examined once but no bug is found (i.e., validated as \unsat).
\hongbo{submit bug after reward?}
}


\ignore{
\subsubsection*{\textit{Challenge \textbf{C2}--Indication of Efforts}}
In the open verification ecosystem on the blockchain, it's essential to provide a {\em proof of solving} (PoS) to both display verification results and disperse rewards.
Hence, the BTM must validate the results of constraint solving as BBHs and reference solvers are not trusted. \done\fanz{we have not introduced BTM and the overall architecture yet }
Although there is no trivial way to verify \unsat results, it is worth noting that validating \sat results becomes feasible by re-assessing the satisfiable model.
Providing satisfiable models can then act as the PoS for the bug bounty program, enabling indirect validation to \unsat results via the tasks bundle. \done\fanz{i think you want to say we are going to design a protocol to do this. 
}
As explained in \autoref{fig:answer-tree}, the BBHs must solve all fabricated tasks to qualify for reward.
As Task Converter obfuscates the constraints, BBHs cannot take advantage of other information (e.g., previously solved task bundles) to avoid solving.
Therefore, the optimal strategy for rational BBHs seeking for reward is conducting constraint solving faithfully.
}

\subsection{Challenge 2: Verification Result Assurance}
\label{sec:trustworthy-verification}

\ignore{
Both maintaining the smart contract and the bug bounty program require monetary costs.
The computation power of smart contracts is extremely limited and expensive, so it is formidable to validate constraint solving on the chain.
Thus, we apply an off-chain solution~\cite{} to validate the satisfiable model in the enclave operated by the BTM, as the BTM can attest itself to the public.
}
\ignore{
\begin{table*}[!htbp]
    \caption{The attack vectors of the BTM and our mitigation. \hongbo{do we remove it?}}
    \centering
    \footnotesize
    \resizebox{\textwidth}{!}{
    \begin{tabular}{p{2.5cm}|p{5.5cm}|p{5.8cm}|p{3cm}}
    \toprule
       \textbf{Entity} & \textbf{Attack} & \textbf{Affect} & \textbf{Mitigation} \\
       \hline
       Administrator & The administrator makes the official BTM TEE offline or blocks the submission. & Breaking soundness. If a bug is found but cannot be submitted, it will harm the soundness. & Censorship-resistant submission channel \\
       \hline
       \multirow{2}{*}{BBH} & Lazy BBHs only solve fabricate tasks and claim basic rewards. & Breaking economic sustainability. They claim rewards without contributing any verification results.  & Incentive design. Proved in~\autoref{lemma:solving}. \\
       \cline{2-4}
       & The adversary controls a group of BBHs, hoards many task bundles, and deliberately reports buggy tasks to be \unsat (i.e., free of bugs).
  & Breaking soundness. They may make a buggy binary pass the verification process. & Incentive design and honest BBHs participation. Proved in~\autoref{thm:buggy-program-launderer}. \\
       \bottomrule
    \end{tabular}
    }
    \label{tab:attack-vectors}
\end{table*}
}


\ignore{
\subsubsection*{Deployment}\danfeng{the flow breaks here. Maybe first introduce the major components, which are more interesting, and finally touch the deployment details.}
The BTM is hosted in TEE and is managed by an untrusted administrator.
Upon initialization, the BTM produces a key pair and publicizes its public key.
Following this, the \textit{smart contract} (\smartcontract) is activated, integrated with the public key and the BTM's measurement, ensuring exclusive write privilege for the BTM to update the \smartcontract's states.
Thus, the verification results recorded on the blockchain are tamper-proof.
The BTM and the BV establish a trusted channel after remote attestation.
This connection guarantees that the BTM receives the constraint files from a trustworthy source, and the BV can publish signed binaries on the blockchain.\hongbo{make the description consistent}
In this way, BBHs and binary users can conduct remote attestation with \name and check the key and measurement are consistent with those recorded on the blockchain, assuring the BTM is not compromised.
}


As a verification service, \name must ensure trustworthy results to users, which is based on two premises: 1) the integrity and availability of \name itself, and 2) participation of enough BBHs.

\subsubsection*{Integrity and Availability}
As the BTM and the BV are deployed in TEE accompanying the on-chain smart contract, everyone can conduct remote attestation and inspect the on-chain results, ensuring the whole workflow is not compromised. 
\name also considers DoS attack.
If bug submission is suppressed, users will not be aware of newly submitted bugs on the so-called ``verified binary''.
Note that BBHs submit bugs to the BTM, which marks the corresponding binary as insecure on the smart contracts after confirmation.
Therefore, the malicious system administrator of the BTM can suppress (i.e., censor) bug submissions, threatening the correctness of verification results.
Thus, an alternative censorship-resistant submission channel is essential.
To do so, BBHs can spin up a local BTM instance in a BBH-controlled TEE node.
Note that the local BTM is equally trustworthy from the point of view of \smartcontract{} thanks to attestation (Appendix~\autoref{sec:censorship-resistant-bug-submission}).
Thus, the local BTM can perform validations of tentative bugs, informing users via the \smartcontract{} that censorship is happening and the binaries under verification may contain unsubmitted bugs.

\subsubsection*{Participation of BBHs}
\name also needs to solicit participation from enough BBHs.
To guarantee a lower bound of the assurance, \name deploys some untrusted ``reference BBHs'', each running a unique SMT solver.
Therefore, even in the worst-case scenario where no other BBH participates, the resulting system is at least as trustworthy as distributing trust across those reference SMT solvers, which is \ndsschanges{at least as secure as} traditional verification workflows.

Nevertheless, the participation of other BBHs will strictly elevate the trustworthiness of the system.
To attract an adequate volume of BBHs to submit their solutions, our protocol theoretically guarantees positive \ndsschanges{expected} returns for the BBHs.
We set the basic reward to the computational cost of solving a task bundle.
Although the cost of resolving each task may vary, 
we can refer to the average cost over a while,
because the tasks are universally sampled into task bundles.
For example, such cost can be dynamically measured from the reference BBHs' average cloud server cost of solving a task bundle.
Then, even if the likelihood of discovering a bug can be slim, the expected revenue of BBHs still covers the expected cost, as claiming only basic rewards can offset the computational cost.
Additionally, BBHs may obtain the premium bug reward, turning the expected return positive.

Moreover, this reward setup can deter BBHs who want to make profits by leeching only basic rewards, i.e., BBHs that run SMT solvers already used by reference BBHs make little contribution to the ecosystem.
The expected profit of a \BBH{} claiming only basic rewards is zero, because the expected cost of solving each task bundle is equal to the basic reward.
We formally prove no strategy can render the profit positive for such BBHs.

\begin{proof}
We assume that the cost of solving every task is the same, and the cost of solving the whole bundle is equal to the basic reward. 
Under these assumptions, we know that if BBHs solve the entire task bundle and claim only a basic reward, their expected profit is 0.
Then, because of task conversion, BBHs cannot determine the satisfiability without solving a task.
Therefore, the optimal strategy is to solve $t < n$ tasks and guess the rest.
We note that a BBH can only save cost by guessing a task is genuine (giving \unsat).

Let the basic reward be $\basicreward$, and the least cost of solving $t$ tasks is $\frac{t}{n}\basicreward$. 
If the guess is correct, then the profit is $\frac{n-t}{n}\basicreward$, otherwise, the profit is at most $-\frac{t}{n}\basicreward$.
Now we analyze the expected profit of this strategy. Let $P_{guess}$ represent the probability of guessing correctly. We require $\frac{n-t}{n}P_{guess}\basicreward - \frac{t}{n}(1-p_{guess})\basicreward \leq 0$, which implies $P_{guess} \leq \frac{t}{n}$.

As the probability of a task being genuine is $P$,  $P_{guess} = P^{n-t}$.
Note that $P$ is the parameter that can be set, so it is easy to find such a $P$ that satisfies $P^{n-t} \leq \frac{t}{n}$. One possible parameter is $\frac{1}{2}$.
In the scenario where $t=n-1$, if a BBH correctly solves $n-1$ tasks and gives \unsat to the last task, then $P = \frac{1}{2} < \frac{n-1}{n}$  holds for every $n > 2$.
For scenarios where $t$ ranges from $1$ to $n-2$, $\frac{1}{2}^{n-t} <\frac{t}{n}$ still holds for every $n > 2$. For every $n$, we can set $P$ to ensure that $P^{n-t} \leq \frac{t}{n}$ always holds true. 
Therefore, the BBH cannot make a positive profit by claiming only basic rewards.
\end{proof}

\subsubsection*{Quantitative Metric}
\name, as a verification service, should ensure the delivery of verification results based on quantitative metrics.
The binary's verification counts and upload time are recorded on the blockchain, serving as such metrics for users to quantitatively assess the trustworthiness of a target binary.
Besides, the \vc[s] and constraint files are also published, allowing users to reproduce the entire verification process.
Intuitively, higher verification counts and earlier upload time indicate better security.
We quantitatively analyze the threshold of the verification counts based on the following assumptions: the BV and BTM TEEs are not compromised, all BBHs are rational, all tasks are sampled into bundles with equal probability, the computational cost for solving every task is consistent, and a basic reward is equivalent to the cost of solving a task bundle.

Since different BBHs may adopt the same solving method, not all submissions enhance the verification results.  
A function $F$ is deemed \textit{effectively} verified $T$ times upon being verified by $T$ unique solving methods.
We define the verification count of a function $F$ to be $m$ when there are $m$ ``no-bug'' submissions of task bundles containing genuine tasks derived from $F$ (each $F$ is associated with several genuine tasks). 
Subsequently, we compute the probability that a specific function has been effectively verified for $T$ times after its verification count reaches $m$.

Assuming that $p$ of the ``no-bug'' submissions are from unique solving methods,
then the probability that an arbitrary function's ``no-bug'' submission is an effective verification becomes $p$.
We represent the expected effective verification times of a function as $T$.
Thus, employing the binomial distribution, we can deduce that the probability of a function being effectively verified at least $T$ times is $\prob{X \geq T}=1-\prob{X \leq T-1}=1-\sum_{k=0}^{T-1}{\binom{m}{k}}p^{k}(1-p)^{m-k}$ when its verification count is $m$.


Assuming a user believes that verifying a function at least 10 times ensures sufficient security, we consider an ideal scenario where 70\% of the ``no-bug'' submissions are from unique
solving methods. In this case, the probability that a given function is verified at least 10 times exceeds 99\% when its verification count is 21.

For a binary with $N$ functions, each with a verification count of $m$, the probability that all $N$ functions are verified at least $T$ times is at least $1 - N \left( \sum_{k=0}^{T-1} \binom{m}{k} p^{k} (1-p)^{m-k} \right)$, as deduced by applying the union bound. Consider a binary with $N = 200$ functions. If the verification count for each function reaches 28, the probability that every function of this binary is verified at least 10 times exceeds 99\%.

\subsection{Handling Existentially-Quantified Policies}
\label{sec:btm-sat}
The BTM's workflow for $\exists$-quantified policies is much simpler: 
\ignore{We briefly discuss their discrepancies here. 
For $\exists$-quantified policies, } it sends input constraint files directly to a task bundle without going through the Task Fabricator or the Task Converter. 
Once the constraints are solved with a \sat model and the model is validated by the \sat validator, the $\exists$-quantified policy is verified.

\ignore{
\hongbo{We plan to move Sybil attack to discussion?}
According to \autoref{fig:answer-tree}, with the BTM validating satisfiable models and being fully aware of fabricated tasks, the only potential attack lies in the basic reward case, processing \unsat results that are not directly validated.
While genuine tasks typically expect \unsat answers, undetected bugs could persist if both untrusted solvers and BBHs fail.
Moreover, a malicious BBH might launch a Sybil attack~\cite{douceur2002sybil} to proliferate multiple false identities, answering using reference solvers.
This would artificially inflate verification counts, misleading users into trusting the binary.
Nevertheless, as our reward structure effectively deters lazy BBHs, honest BBHs remain engaged.
Therefore, honest BBHs continue receiving task bundles, increasing the likelihood of detecting concealed bugs.
With the ability of honest BBHs to identify bugs and orthogonal countermeasures against Sybil attacks~\cite{maram2021candid}, any undetected bug becomes increasingly likely to be exposed over time.
Consequently, binary users can gauge the trustworthiness of verification outcomes based on verification counts and elapsed time since the binary's upload.
}

\ignore{
The bug bounty program relies on the BTM, so its availability may also harm verification soundness.
Binary users remain unaware of the bug's existence if the BTM is unable to handle submitted bugs.
To address the threats to BTM's availability (e.g., DoS attack), we institute a censorship-resistant submission channel here.
}

\if 0
The BTM mainly addresses the threats that harm soundness and/or make the service unsustainable, including
1) BBHs attempting to let bugs to bypass the system (\autoref{???});
2) BBHs attempting to make profits without conducting constraint solving tasks (\autoref{???}) \sen{will it affect soundness?};
3) Privileged attackers conducting denial-of-service (DoS) attacks (\autoref{???}).
Other potential attack vectors requires orthogonal solution and we discuss them in \autoref{???} \hongbo{e.g., Sybil attack}.
We assume there are honest BBHs who faithfully report bugs and conduct remote attestation.
Our solution addresses the above three concerns as follows.

\subsubsection*{Reducing Smart Contract Cost}
\fanz{this can be removed/commented out}
To avoid verifying attestations, a common trick (e.g., use by~\cite{zhangTown2016,cheng2019ekiden}) is to bind signing keys to the TEE programs through attestation. Then the smart contract just needs to verify signatures to ensure a message is generated by a given TEE program.
\fi

\if 0
In \name, the system administrator sets up a {\em BTM TEE} and hardcodes in the smart contract its public key so that the validation results can be efficiently verified.
However, it is important that the smart contract still accepts submissions with valid attestations, so that \BBH{}s do not solely rely on this BTM TEE (see \autoref{sec:censorship-resistant bug submission}).
\fi

\if 0
\subsubsection*{Incentivizing Participation}
We attract \BBH{}s to participate in the bug bounty program by giving not only bug rewards but also {\em basic rewards}.
\hongbo{positive reward expectation, mention fabricated tasks in Indication of Efforts}
The idea is to create {\em fabricated tasks} that are deliberately crafted to include bugs and mix them with {\em genuine tasks} generated from the functions being verified.
If all crafted bugs in a task bundle are found, a basic reward is given.
Hence, the BBHs can receive at least the basic rewards if they try to solve all tasks in the bundle.

\subsubsection*{Indication of Efforts}
The basic reward can also indicate whether the BBHs have attempted to solve the tasks or not.
In our design, a \BBH{} cannot determine if a given task is fabricated or genuine without solving it (e.g., using an SMT solver), therefore if a BBH is honest and claims a basic reward, then all tasks in the corresponding task bundle have been attempted by the BBH, even though the \BBH{} may not be able to find bugs in genuine tasks.
\fi
%
%
%

\if 0
\subsubsection{System Setup}
\label{sec:systemsetup}
The system is deployed and managed by an untrusted system administrator. Other entities involved in the system are illustrated in ~\autoref{fig:btm-workflow}.
The BTM deployed on TEE consists of three components: Task Fabricator, Task Converter, and \sat Validator.
\hongbo{Converter in detail}
The smart contract (\smartcontract) is deployed on the blockchain to record verification results and facilitate reward payments to the BBHs.

In the next subsection, we present the overall workflow of the bug bounty program in more detail.
At the end of this section, we conduct a theoretical analysis of the bug bounty protocol.
The algorithm details and mathematical proof can be found in \autoref{appendix}.

A secure channel is established between the BTM and the BV after attestation, ensuring that the BTM only takes input from the trusted BV.
Then, the BTM generates a key pair ($\skbtm{},\pkbtm{}$), and the $\smartcontract{}$ records the address $\addrbtm{}$ derived from $\pkbtm{}$. This allows the $\smartcontract{}$ to verify that a message is sent by the BTM. The $\smartcontract{}$ is programmed so that only the BTM can create and update bug bounty tasks.
Besides, the \smartcontract also records the measurement of the BTM ($\btmhash{}$) to verify attestations (c.f.~\autoref{sec:censorship-resistant bug submission}).

As \name is open source and the measurement of each enclave is reproducible, a user can verify that the system is properly set up before using it. 

\subsubsection{Adversary Model of \BBH{}s}
\label{sec:adversarybbh}
\hongbo{rephrase this model in one sentence and add the attacks considered by the BTM as a separate (sub)section}
Our bug bounty protocol aims to handle two kinds of adversarial \BBH{}s. 
The first type is lazy \BBH{}s, who want to claim basic rewards without making any effort to solve genuine tasks (thus they are lazy). Since all fabricated tasks are satisfiable by the SMT solvers used in the BV, a lazy \BBH{} can run the same SMT solvers to identify fabricated tasks and ignore the rest. 
Lazy \BBH{}s do not contribute to bug discovery.
The second kind aims to make a buggy binary pass the verification process.
E.g., they may hoard many task bundles and deliberately report buggy tasks to be $\unsat$ (i.e., free of bug).

\subsubsection{Effective Incentive Design}
\label{sec:incentive-design}
The basic and bug reward need to be set with care.
\name does not benefit from lazy \BBH{}s, as they will not solve genuine tasks.
We intentionally discourage their participation by setting the basic reward to be lower than the costs of solving the whole task bundle.
Assuming \BBH{}s are rational, the computational cost of solving every task is the same, the cost of solving the whole task bundle is higher than a basic reward, and \BBH{}s cannot determine the satisfiability without solving a task, lazy \BBH{}s will give up as it is unprofitable by solely claiming the basic rewards, as shown in~\autoref{lemma:solving}.

In comparison, the bug reward should be much higher than the basic rewards. 
As genuine tasks are unsatisfiable according to untrusted SMT solvers, a bug in a genuine task implies errors in the SMT solver we used.
Although SMT solvers are out of our TCB, it is unlikely (though not impossible) for them to be erroneous. Finding such an error deserves a high reward.
\fi




\ignore{
\subsection{Bug Bounty Protocol}
\label{sec:btm_workflow}

We present a simplified protocol followed by our bug bounty design here.
A detailed and formal description, along with relevant algorithms, can be found in~\autoref{appx:btm-full-workflow}.

\subsubsection*{Deployment of \name}
The BTM is hosted in TEE and managed by an untrusted administrator.
Upon initialization, the BTM produces a key pair and publicizes its public key.
Following this, the \textit{smart contract} (\smartcontract) is activated, integrated with the public key and the BTM's measurement, ensuring exclusive write privilege for the BTM to update \smartcontract's states.
The BTM and the BV establish a trusted channel after remote attestation.
This connection guarantees that the BTM receives the constraint files from a trustworthy source, and the BV can publish signed binaries on the blockchain.\hongbo{make the description consistent}

\subsubsection*{Task Generation and Publishment}
To begin with, a BBH requests a {\em task bundle} from the BTM (\ding{202}).
In response, the BTM creates the task bundle linked to specific constraint files.
This bundle encompasses genuine tasks (corresponding to functions to be verified and presumably to be \unsat) and fabricated tasks (modified from genuine ones via Task Fabricator but validated to be \sat).
Each task in the bundle is processed through the Task Converter, which converts original constraints to a series of equivalent ones~\cite{huth2004logic}, ensuring that
the original and converted constraints are both \unsat or agree on the same satisfiable models.
Since no polynomial-time algorithm can confirm the equivalence of two logical proposition sets (i.e., a series of constraints)~\cite{eiter2004simplifying, eiter2007semantical}, bypassing constraint solving is not feasible.
The finalized task bundle is then broadcasted on IPFS~\cite{ipfs} (\ding{203}) and the BBH can exclusively fetch it (\ding{204}).
Note that the BTM knows which tasks are fabricated and genuine and retains this mapping confidential.

\subsubsection*{Answer Submission}
Upon receiving the task bundle, the BBH can utilize any method to conduct constraint solving thanks to the open ecosystem.
The BBH submits answers for each task in the bundle (\ding{205})--either marking them as \unsat or furnishing a satisfiable model.
The BTM then validates the answers as described in \autoref{fig:answer-tree}.
For satisfiable models, the \sat Validator re-evaluates the constraints against this model; for \unsat answers, the BTM simply checks their alignment with the previously recorded genuine task.
The BTM then disperses the reward and updates verification results on the blockchain correspondingly (\ding{206}).

\subsubsection*{Verification Result Inquiry}
The binary users can query the \smartcontract about the verification results.
When users prepare to verify the binary, they iterate all functions in the binary and check the presence of bugs.
If not, they can further scrutinize verification counts and the upload date of the tasks, juxtaposing them against personal thresholds to evaluate the binary's trustworthiness.
Moreover, they can initiate remote attestation with the BTM to confirm an uncompromised BTM and consistent public key by comparing those recorded in the \smartcontract.
Any inconsistency is an alert of unsound verification.
Honest BBHs should also have scrutinized the values, indicating and announcing the untrusted BTM to the public if any.

\subsubsection*{Censorship-resistant Submission Channels}
Given the inherent availability concern of the BTM (e.g., system administrators are capable of suppressing bug submissions), an alternative censorship-resistant channel is essential.
In such scenarios, BBHs can instantiate a BTM locally within a TEE, exclusively activating the \sat Validator component.
Remote attestation subsequently vouches for the BTM's authenticity.
The \smartcontract{} cross-checks the attestation quote against the embedded BTM's measurement and verifies the signing certificate of the quote.
If these verifications succeed, the \smartcontract{} trusts this submission and marks the corresponding function as insecure.
}

\ignore{
\subsection{Security Analysis of the BTM}
\label{sec:btm-security}



\sen{only attack surface is case 3, the verification count is not trustworthy,  malicious attacks are to boost this step. Provide a time for user to refer, they will be more trustworthy than others. Lower bound is the distributed solver.}

The BTM needs to ensure soundness in the results and the sustainability of the bug bounty program.
We list the possible attack vectors from different entities that could undermine these two goals in~\autoref{tab:attack-vectors}.
The design of the censorship-resistant submission channel can effectively prevent DoS attacks resulting in soundness problems by the BTM administrator.
In this subsection, we will prove that the malicious BBHs cannot thwart our goals.

Note that a BBH can produce a correct answer by deciding the satisfiability of each task (e.g., using an SMT solver or guessing).
We define the expected profit of a BBH as the expected rewards minus the expected cost of calculating correct answers.
In the next lemma, we prove that the expected profit of only claiming basic rewards is 0. This suggests that lazy BBHs are unlikely to participate.
Given that BBHs are rational, they will not be lazy because no strategy can lead to a positive profit.



\begin{lemma}
For simplicity of computation, we assume the cost of solving every task is the same.
The expected profit of a \BBH{} claiming only basic rewards is 0 if the cost of solving each task bundle is equal to the basic reward. \hongbo{appendix}
\label{lemma:solving}
\end{lemma}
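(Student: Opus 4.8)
The plan is to reduce the lemma to a short expected-value computation resting on the three task-bundle properties (i)--(iii) of \autoref{sec:indication-of-efforts}. First I would fix notation: a bundle has $n$ tasks, each costing $c$ to solve by assumption, so solving the whole bundle costs $nc$, and by hypothesis $\basicreward = nc$. Define a \BBH{}'s expected profit as its expected rewards minus its expected solving cost, and restrict attention to strategies that never submit a valid \sat model for a genuine task --- i.e.\ a \BBH{} that only ever collects a basic reward. The ``solve everything and answer correctly'' strategy pays $nc$ and collects $\basicreward = nc$, so profit $0$ is attainable; the content of the lemma is that nothing does better.

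The substantive part is bounding the ``gambling'' strategies that solve only a proper subset $S$. The relevant facts: (a) a legitimate submission must carry a valid \sat model for every fabricated task; (b) by property (ii) fabricated tasks are polynomial-time indistinguishable from genuine ones, so a valid \sat model for a task cannot be produced without actually solving it (except with negligible probability); and (c) by property (iii) the number of fabricated \sat tasks in a bundle is hidden from the \BBH{}. I would first observe that, for a \BBH{} not pursuing bug rewards, answering \unsat on every unsolved task weakly dominates guessing: an unsolved genuine task either is \unsat (so \unsat is the answer the basic reward requires) or is accepted anyway if it is secretly buggy, while an unsolved fabricated task is lost under any answer. So fix a strategy that solves a subset $S$ with $|S| = s < n$ and answers \unsat elsewhere; it earns the basic reward exactly when all fabricated tasks happen to lie in $S$. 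Because the \BBH{} has no polynomial-time handle on which tasks are fabricated, from its viewpoint the fabricated set of some size $k \ge 1$ is a uniformly random $k$-subset, so this event has probability $q = \binom{s}{k}/\binom{n}{k} = \prod_{i=0}^{k-1}\tfrac{s-i}{n-i} \le \tfrac{s}{n}$, the inequality holding factor by factor since $n \ge s$. Hence the expected profit is $q\,(nc - sc) + (1-q)(-sc) = q\,nc - sc \le \tfrac{s}{n}\,nc - sc = 0$, and averaging over any randomization of $s$ or over the hidden distribution of $k$ preserves the bound (both enter only linearly). Adaptive refinements do not help, since each solved task reveals only \sat or \unsat --- never enough to tell a fabricated task apart from a genuinely buggy one --- so a \BBH{} cannot certify it has caught all fabricated tasks without solving the entire bundle. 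Combining the two cases yields that the maximum expected profit is $0$, which is the claim.

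I expect the main obstacle to be precisely this gambling analysis: making rigorous the step from ``polynomial-time indistinguishable'' to ``may treat the fabricated positions as a uniform random subset and may assume a valid \sat model costs a full solve'', and then confirming that no randomized or adaptive answering scheme can push the size-$s$ success probability above $s/n$. This is exactly where the Task Converter's obfuscation and the randomized count of fabricated tasks do the real work; everything else is bookkeeping against $\basicreward = nc$.
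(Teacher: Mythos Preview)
Your argument is correct and follows the same skeleton as the paper's: reduce to a \BBH{} who solves some $s<n$ tasks and answers \unsat on the rest, then show the success probability is at most $s/n$ so that the expected profit $q\cdot nc - sc \le 0$. The difference is in how that bound is obtained. The paper models each task as genuine independently with probability $P$, so the success probability is $P^{\,n-s}$; it then \emph{chooses} $P=\tfrac12$ and verifies $(\tfrac12)^{n-s}\le s/n$ by cases for $s=1,\dots,n-1$. Your hypergeometric computation---condition on the fabricated set having some size $k\ge 1$, then $\binom{s}{k}/\binom{n}{k}\le s/n$---is parameter-free, handles every $s$ uniformly (including $s=0$, which the paper's case split omits), and does not depend on tuning the fabrication rate. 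You also make explicit the domination step (answering \unsat on unsolved tasks is optimal for a basic-reward-only \BBH{}) and the argument that adaptive solving cannot help, both of which the paper leaves implicit.
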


\begin{proof}
The full proof of~\autoref{lemma:solving} is given in~\autoref{sec:lemmaproof}. The main idea is that even when using an optimal strategy, BBHs cannot achieve a positive profit by solely claiming the basic rewards.
\end{proof}

\ignore{
\begin{lemma}
Assuming the probability of finding a bug is larger than 0 and the cost of solving the whole bundle is equal to the basic reward, the expected profit of honestly solving a task bundle is greater than 0.
\label{lemma:positive-profit}
\end{lemma}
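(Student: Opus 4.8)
The plan is to compute the expected profit of the honest strategy directly and bound it below by a strictly positive quantity. First I would fix notation: let $C$ be the cost a BBH incurs by solving every task in a received bundle. By the uniform per-task cost assumption carried over from \autoref{lemma:solving}, and because fabricated and genuine tasks are indistinguishable in polynomial time (so an honest solver must attempt all of them), $C$ is a fixed number independent of the bundle's composition; by this lemma's hypothesis $C = \basicreward$. Let $q$ denote the probability, over the BTM's randomized construction of the bundle (random sampling of genuine tasks into the bundle and randomized count of fabricated tasks), that the bundle contains at least one genuine task that is actually \sat, i.e.\ witnesses a bug. The hypothesis ``the probability of finding a bug is larger than $0$'' is precisely $q > 0$; this is consistent with the earlier observation that the Task Converter's obfuscation does not prevent benign BBHs from receiving buggy bundles. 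Finally, from the reward design in \autoref{sec:trustworthy-verification} the bug reward is premium, so $\bugreward > \basicreward$.

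Next I would trace the honest strategy through the answer-validation tree of \autoref{fig:answer-tree}. Because the BBH solves every task correctly, all fabricated tasks receive valid \sat models, so the submission is always legitimate and never lands in the ``invalid \sat model / missing answer'' rejection branch. Hence exactly one of two mutually exclusive events occurs: (a) with probability $1-q$, no genuine task in the bundle is \sat, every genuine task is answered \unsat, and the BBH collects the basic reward $\basicreward$; (b) with probability $q$, some genuine task is \sat, the BBH submits its valid \sat model and instead collects the bug reward $\bugreward$ (the binary is then flagged, but that does not affect the BBH's payoff). Combining the two cases,
\[
\mathbb{E}[\text{profit}] \;=\; (1-q)\,\basicreward + q\,\bugreward - C \;=\; (1-q)\,\basicreward + q\,\bugreward - \basicreward \;=\; q\,(\bugreward - \basicreward).
\]
Since $q > 0$ and $\bugreward - \basicreward > 0$, the right-hand side is strictly positive, which proves the lemma.

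The argument is largely bookkeeping; the points that need care are (i) justifying that the honest strategy's cost really is a fixed $C$ regardless of the (to the BBH unknown) bundle composition, which needs both the uniform per-task cost assumption and the indistinguishability of fabricated and genuine tasks; (ii) pinning down the probability space so that ``probability of finding a bug $> 0$'' translates to $q = \Pr[\text{received bundle contains a genuine \sat task}] > 0$; and (iii) explicitly invoking $\bugreward > \basicreward$ from the reward structure rather than assuming it tacitly. None of these is a genuine obstacle, but omitting any of them would leave a gap in an otherwise short derivation.
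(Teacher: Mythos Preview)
Your proposal is correct and follows essentially the same approach as the paper's own (very brief) proof: the basic reward exactly offsets the solving cost, and the strictly positive probability of earning the bug reward yields strictly positive expected profit. Your version is more explicit---you compute $\mathbb{E}[\text{profit}] = q(\bugreward - \basicreward)$ and spell out the assumptions needed (uniform per-task cost, indistinguishability, $\bugreward > \basicreward$)---whereas the paper simply states that basic rewards equal costs and that a nonzero bug probability makes the expected bug-reward contribution positive.
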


\begin{proof}
As the honest BBHs will try to solve all tasks in the bundle, they can claim basic rewards, which are equal to their costs. Since the probability of finding a bug is greater than 0, the expected profit from claiming a bug bounty reward is also greater than 0.
\end{proof}

\hongbo{from Lemma 1, we should claim that the expected revenue for BBHs is positive, considering bug reward}
\sen{It may be another theorem or lemma, positive expected reward?}
}




Given that the basic reward offsets the cost of solving the entire task bundle, the expected profit for honest BBHs becomes positive when the probability of finding a bug exceeds 0, meaning that BBHs are capable of finding bugs.
Due to this positive expected profit, honest BBHs could be strongly motivated to participate~\cite{akgul2023bug}.

In the following~\autoref{thm:buggy-program-launderer}, we will show that the malicious BBHs cannot get a buggy binary to pass verification because of the participation of honest BBHs. Furthermore, the lower bound of the soundness guarantee aligns with the use of multiple trusted solvers.  As the Task Converter converts the constraint files, the malicious BBHs are unable to pinpoint the specific task to answer with \unsat. So their attack strategy would involve initiating a Sybil attack, generating multiple false identities, and each one answers using the untrusted SMT solver that \name employs\hongbo{maybe call it reference solver for short?}. 

\begin{theorem}
Assuming bugs exist, the probability of finding a bug is greater than 0, and honest BBHs participate in bug bounty programs.
If the cost of solving the entire bundle equals the basic reward, malicious BBHs cannot make a buggy binary pass verification.
\label{thm:buggy-program-launderer}
\end{theorem}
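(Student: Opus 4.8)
The plan is to show that any strategy of the malicious \BBH{}s leaves the bug detectable, so that --- given the participation of honest \BBH{}s --- a buggy binary is marked insecure with probability tending to one and hence cannot ``pass''. First I would pin down what ``pass verification'' means here: a binary with a positive verification count that has never been flagged insecure by the \smartcontract{}, recalling that the insecure flag is monotone (once set it stays set). Because the binary is buggy, the BV emits at least one genuine task $\tau$ --- derived from the offending function --- whose constraint set is satisfiable, and since the Bug Validator checks a submitted model by direct re-evaluation, \emph{any} valid \sat{} model of $\tau$, once submitted by any \BBH{}, permanently flags the binary. So the whole argument reduces to showing that such a report is eventually submitted and that the malicious \BBH{}s cannot prevent it.

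Next I would argue that malicious \BBH{}s cannot \emph{suppress} $\tau$. Fix a task bundle containing an obfuscated copy of $\tau$ alongside the fabricated \sat{} tasks. By the answer-validation procedure of \autoref{fig:answer-tree}, a submission counts toward verification only if every fabricated task is answered with a valid \sat{} model, so a malicious \BBH{} wanting credit must solve all tasks first. Having solved them it knows the set of satisfiable tasks, but by the Task Converter's obfuscation --- whose soundness rests on the hardness of deciding equivalence of propositional constraint sets (\autoref{sec:satcheck}) --- together with properties~(ii) and~(iii) of a task bundle, it cannot tell in polynomial time which satisfiable task is genuine nor how many are fabricated. Hence the only action that is not rejected outright is to return valid \sat{} models for \emph{all} satisfiable tasks, which exposes $\tau$; any attempt to conceal $\tau$ by answering \unsat{} on a satisfiable task is a gamble that, if that task is fabricated, causes the whole submission to be rejected (no reward, no change of status). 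The residual option --- a Sybil attack where every fake identity runs only the reference solvers used by the BV --- merely inflates the verification count, because those solvers return \unsat{} on $\tau$ (that is exactly why the bug survived the BV in the first place); it neither removes $\tau$ from the task pool nor prevents a later correct report.

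Then I would close with the honest-participation argument: honest \BBH{}s keep requesting bundles, since once a bug exists their expected return is strictly positive by the incentive analysis (Appendix~\autoref{sec:lemmaproof}); every task is sampled into bundles with equal probability, so each honest request contains $\tau$ with some fixed probability $q>0$ that hoarding by malicious \BBH{}s cannot diminish (since $\tau$ remains a genuine task until the binary is resolved); thus after $n$ independent honest requests the probability that none contained $\tau$ is at most $(1-q)^n \to 0$, and conditioned on receiving such a bundle an honest \BBH{} submits a valid \sat{} model with positive probability. Combining with the first step, the binary is flagged insecure with probability tending to one, contradicting ``passing''; and in the degenerate case where only the reference \BBH{}s participate, the bound reduces to the classical guarantee that the binary is flagged whenever at least one reference solver detects the bug, i.e.\ the soundness lower bound matches that of distributing trust across the reference SMT solvers.

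The step I expect to be the main obstacle is the suppression argument of the second paragraph: making rigorous the claim that the obfuscation genuinely prevents a polynomial-time \BBH{} from separating the genuine $\tau$ from the fabricated \sat{} tasks, and that therefore no strategy other than guessing (which risks rejection) both earns bundle credit and hides $\tau$. This needs a careful reduction from distinguishing obfuscated-genuine versus obfuscated-fabricated tasks to deciding constraint-set equivalence, together with an argument that property~(iii) --- the randomized, hidden count of fabricated \sat{} tasks --- removes the counting shortcut; once that is established, the ``eventually detected'' probabilistic part is routine.
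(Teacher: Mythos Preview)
Your proposal is correct but substantially more elaborate than the paper's own argument, which is a three-sentence sketch. The paper simply invokes the lemma that lazy \BBH{}s (those aiming only at basic rewards) cannot profit and therefore do not participate, then observes that honest \BBH{}s---who by hypothesis do participate and have positive probability of finding the bug---cannot be blocked by malicious \BBH{}s; the conclusion follows immediately. There is no explicit probabilistic limit, no formal definition of ``pass,'' and no suppression analysis.

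The main difference is your second paragraph. You spend effort arguing that a malicious \BBH{} holding a bundle containing $\tau$ cannot simultaneously earn credit and conceal $\tau$, by appealing to the indistinguishability of obfuscated genuine versus fabricated tasks. The paper does not need this step in its framing: whatever a malicious \BBH{} does with its own bundle---report the bug, guess wrong and be rejected, or abstain---the buggy task $\tau$ remains in the BTM's pool and continues to be sampled into fresh bundles requested by honest \BBH{}s. Hence the paper's argument is purely about honest participation being unpreventable, not about the adversary's inability to hide $\tau$ within its own submission. Your suppression argument is a valid and interesting strengthening (it rules out malicious \BBH{}s inflating the verification count while concealing the bug), but it is orthogonal to what the theorem as stated requires and what the paper proves. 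The obstacle you flag---making the obfuscation reduction rigorous---is accordingly one the paper never confronts; it simply takes the theorem's premises at face value and concludes in one line.
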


\begin{proof}
\autoref{lemma:solving} indicates that lazy BBHs will not participate in the bug bounty program.
Given that the probability of finding a bug is greater than 0 and honest BBHs will engage in the bug bounty program\quan{grammar error}.
As a result, malicious BBHs cannot prevent these honest BBHs from detecting bugs in buggy binaries.
\end{proof}
}

\section{Implementation}
\label{sec:implementation}

We implement the BV and part of the BTM, supporting the verification of SFI, IFC, and LVI side-channel mitigation policies.

\subsubsection*{The Binary Verifier}
The BV is developed in Rust with a few crates.
It employs the (dis)assembly framework, \texttt{iced-x86}, to disassemble binary code and model the semantics of instructions.
We use \texttt{Pest} as the \vc parser.
To reduce TCB, we independently implement CFG generation using \texttt{petgraph} and SSA transformation.

\ignore{
We also introduce optimizations to simplify the generated assertions and eliminate quantifiers in SFI enforcement and LVI mitigation policies.
For example, a predicated assertion $\tt FnPtr ~ rbx.1$ is replaced with an equality check on all possible values such that the predicate returns true, in the form of $\tt rbx.1 = FnPtr1000 \lor \tt rbx.1=FnPtr2000 \lor \cdots$.
Therefore, functions are no longer needed to model predicates.
}


\ignore{
The BV is compatible with additional analyzers and verifying some policies may require them.
For example, WASM binaries compiled by Lucet use the stack to transfer important abstract values (e.g., \texttt{HeapBase}), so we track the values stored on the stack.
This analyzer is integrated into the SSA transformation and assigns the values on the stack a version number (i.e., subscript).
In other words, we regard the stack addresses as similar to registers in SSA but indexed by the offsets.
Note this tracking is more accurate for compilers (e.g., Lucet) whose stack pointer adjustment amount is never determined dynamically
}



\ignore{
\vspace{3pt}\noindent\textbf{Constraint Generation}.
\hongbo{need help here}
To be compatible with various policies, we implement a policy \texttt{Matcher} trait that inserts assertions based on the assembly code, specifications, and hints.
We implement the policy \texttt{Matcher}s for LVI mitigation, which inserts \texttt{lfence} instructions after load instructions, and SFI for WebAssembly, which checks memory accesses are bounded and indirect calls are secure.
We leverage incremental solving in \texttt{smt2} file such that each assertion is checked in a dedicated stack and does not pollute the environment.
}

\ignore{
\vspace{5pt}\noindent {\it Type Checking.}
We also realize a type checker for the proof language, as the proof, semantics, and constraints share similar syntax but operate on different types of values.
The SMT solvers impose strict restrictions on the sort (i.e., type) of values that can occur in the constraints, so the BV must generate well-typed constraints.
This type checker also enables the BV to reject problematic proof at an early stage.
}

\subsubsection*{The Bounty Task Manager}
Realizing the full-fledged BTM is not technically novel but demands substantial engineering efforts (e.g., term rewriting system).
We only realize the unique components in \name, namely the smart contract and the \sat Validator, leaving the implementation of Task Fabricator and Task Converter as future work.
The smart contract is written in Solidity~\cite{solidity} and tested on Ethereum~\cite{wood2014ethereum}.
The BTM interfaces with Ethereum using \texttt{web3py}.
Binary users can query verification results from the smart contracts by leveraging Ethereum node services combined with the Web3 API.
We adopt IPFS~\cite{ipfs}, a decentralized storage solution, to house tasks, \vc[s], and certified binaries.
We leverage the open-source implementation by Automata~\cite{automata2025dcapattestation} to enable on-chain verification of attestations in our smart contract.

\subsubsection*{\vcgen[initcap]}
We prototype two \vcgen[plural] for SFI enforcement and LVI mitigation policies, plus the assertion generator for IFC enforcement is simply the unmodified \confllvm compiler.
The \vcgen for SFI is built on VeriWASM.
We tailored its source code, which conducts binary analysis to verify SFI based on abstract interpretation.
The \vc[s] are therefore generated using the abstract states.
Besides, we also implement a simple binary analyzer following the instrumentation strategy to mitigate LVI attack purposed by GNU assembler to generate \vc[s].
The analyzer produces \vc of load buffer alterations for all sensitive load and $\tt lfence$ instructions.
\begin{table}[!t]
  \begin{center}
  \caption{TCB size breakdown in LoC of \name, angr, VeriWASM, ConfVERIFY, Deflection, and Cedilla.}
  \label{tab:tcb}
    \begin{tblr}{
      hline{3-7,9-11} = {solid},
      hline{2,8,12} = {solid},
      vline{2,5,6} = {solid},
      columns = {c},
      cell{2}{1-4} = {r=4}{},
      cell{6}{1-4} = {r=2}{},
      cell{1}{5} = {c=2}{c},
      cell{2-Z}{5} = {}{l,teal!20!},
      cell{2-Z}{6} = {}{r,teal!20!},
      rows = {belowsep=0pt, abovesep=1pt, m},
      row{1} = {font=\bfseries},
      column{1} = {font=\bfseries},
    }
                          & {General\\Utilities} & {Binary\\Verifier} & {BTM/Solver/\\Checker}  & {Policy-specific\\Code}            &        \\
      \name               & 3.6K                 & 848                & 6.1K                      & SFI-VeriWASM                 & 625    \\
                          &                      &                    &                           & IFC-ConfVERIFY               & 335    \\
                          &                      &                    &                           & SFI-Deflection               & 500    \\
                          &                      &                    &                           & LVI                          & 77     \\
      angr                & 6.6K                 & 67K                & 532K                      & HeapHopper~\cite{heaphopper} & 2.1K   \\
                          &                      &                    &                           & MemSight~\cite{memsight}     & 3.8K   \\
      VeriWASM            & 2.5K                 & 984                & -                         & SFI                          & 1.9K   \\
      ConfVERIFY          & \mytilde 40K         & -                  & -                         & IFC                          & 1.5K   \\
      Deflection          & \mytilde 9.6K        & -                  & -                         & SFI                          & \mytilde700   \\
      Cedilla             & \mytilde 16.5K       & 4.0K               & 1.4K                      & Type-Safety                  & 3.2K   \\
    \end{tblr}
  \end{center}
\end{table}

\section{Evaluation}
\label{sec:evaluation}

In this section, we first compare the TCB of \name with other binary verification tools.
Then, we evaluate the performance of the BV and constraint solving.
Finally, we present the estimated cost of running the bug bounty protocol on blockchain.


\subsection{TCB Size Analysis}
\label{sec:tcb-analysis}



\subsubsection*{Breakdown}
\autoref{tab:tcb} details the TCB size of the components in \name and 4 other representative tools or frameworks.
\spchanges{\emph{General Utilities} is the backbone of a verifier, including intermediate representation, encoding of semantics, and binary utilities.}
\emph{Binary verifier} is the tiny core of verification, thanks to the innovative design of \name that offloads the most sophisticated analyses.
Counting utilities, the BV (excluding policy-related code) is roughly 4.5 KLOC plus dependencies.
The BTM removes trust in SMT solvers.
Since a full-fledged BTM is not realized (see \autoref{fig:btm-workflow}), we estimate the TCB of all components, including the unimplemented Task Converter based on term rewriting~\cite{trs-rs} and Task Fabricator, totaling to \mytilde 6 KLOC.
Note that the \vc parser and disassembler are excluded from the TCB, because incorrect \vc is rejected in \vc validation, and \name certifies the re-assembled binary, rather than the original binary being disassembled. 
Taking advantage of formally verified ELF builders~\cite{compcertELF}, the assembler can be excluded from the TCB.


\subsubsection*{Comparison with Other Systems}
We compare the TCB size with VeriWASM, \confllvm, angr, and Cedilla, as broken down in~\autoref{tab:tcb} with policy-specific code size shaded.
The component breakdowns are roughly mapped to each other with our best effort, given that each framework uses different techniques and code modularization is often imperfect.

Compared with angr~\cite{angr}, a binary analysis framework based on symbolic execution, the TCB of \name is significantly smaller in all aspects.
Although TCB under general utilities is not very meaningful since angr models more instructions than \name does, TCB in other aspects is still important.
We cannot find direct SFI, IFC, and LVI mitigation via angr. However, implementing additional policies on angr likely will require more code than the counterpart on \name, according to the examples\cite{heaphopper, memsight}.

In VeriWASM, its policy-specific code is roughly three times the counterpart in \name, suggesting the extensibility of \name. 
Overall, the TCB of VeriWASM and \name under general utilities and binary analysis are comparable, even though VeriWASM supports only WebAssembly SFI and it would be strenuous to extend to new policies.
The strength of VeriWASM is that it requires no constraint solving.
However, VeriWASM does not achieve an open and auditable verification service, and we note that this is in general a trade-off between generality and complexity.

\confllvm is accompanied by a verifier called \confverify~\cite{confllvm}, which verifies IFC in binaries compiled by \confllvm. The TCB of \confverify includes itself (1.5 KLOC) and a disassembler (\mytilde 40 KLOC) used for CFG construction and semantics lifting. \name-based verifier only requires 335 lines of additional policy-specific code, much smaller than \confllvm's code excluding disassembler. Moreover, since IFC requires a \sat model, Task Converter and Task Fabricator are unneeded (see \autoref{sec:btm-sat}), further reducing its overall TCB size to be close to \confverify's.


\spchanges{
Deflection~\cite{sgx:deflection} verifies enclave binaries emitted from a modified compiler with SFI enforcing instrumentation, thus permitting untrusted binaries to be executed via a customized loader after rewriting.
We extract its verified policy in \name with 500 LoC.
The instrumentation strategy allows Deflection to verify SFI via pattern matching, thus resulting in a small TCB similar to \name's counterpart.
}

The PCC-based Java certifying compiler, Cedilla~\cite{colby2000certifying}, also features a relatively small TCB size, but without support for versatile policies.
It still relies on a complex verification condition generator, consisting of a program analyzer (symbolic executor), a policy plug-in for Java constructs, and utilities for a built-in type system.



\subsubsection*{TCB Size Reduction Revisit}
The BV and the BTM successfully offload binary analysis and constraint solving, respectively. 
The BV allows binary analysis to be highly complex but excluded from the TCB.
Moreover, \name's support for multiple \vcgen[plural] reduces the TCB size substantially in verifying different policies.
On the other hand, the BTM gains more than one order of magnitude of TCB size reduction (typical SMT solvers contain 200K to 500K LOC~\cite{de2008z3, dutertre2014yices, barbosa2022cvc5}).
Besides, \name not only creates an open verification ecosystem, but also provides a verification service with auditability and reproducibility, which are not simultaneously achieved by any other tools.

\subsection{Evaluation of SFI}
\label{sec:sfi-eval}




We conduct the evaluation on a 32-core@2GHz server for \name's version of VeriWASM.
We employ the same compilation toolchain and SPEC 2006 benchmark as VeriWASM used, compiling the WASM modules into x86 ELFs.
We compile ten programs at \texttt{-O3} and \texttt{-O0} optimization levels and test \name on 20 binaries.
To ensure that \name generates verifiable constraints and evaluate the performance of \name, we use CVC4 1.8, CVC5 1.0.5, and bitwuzla 1.0 to solve constraints and measure the time elapsed.

\subsubsection*{Verification Result}
\name's results align with VeriWASM, as all functions verified by VeriWASM also pass \name's.
However, some functions cannot be verified by either system, even though \cite{veriwasm} states that all binaries in SPEC 2006 can be verified. 
We speculate that the inconsistencies are due to slight variations in the compilation toolchains.



\subsubsection*{Execution Time}
The execution time of \vc generation and the BV workflow is roughly proportional to the size of the verified function.
The BV runs 9.4 times faster than \vc generation, which mainly performs offloaded binary analysis.
The BV processes 31K instructions per second.
We believe this performance is sufficient to handle large software.
However, constraint solving is significantly slower, especially for certain functions.
In our benchmarks, 99\% of functions contain less than 8171 instructions, and constraint solving using CVC4 is completed in 991 seconds\footnote{CVC4 timeouts on the largest function \texttt{encode\_one\_macroblock} of 63049 instructions in \texttt{h264ref}. We exclude them from the evaluation. }.



\begin{figure}[!t]
	\centering
	\begin{subfigure}[t]{0.49\textwidth}
		\centering
		\includegraphics[width=\textwidth]{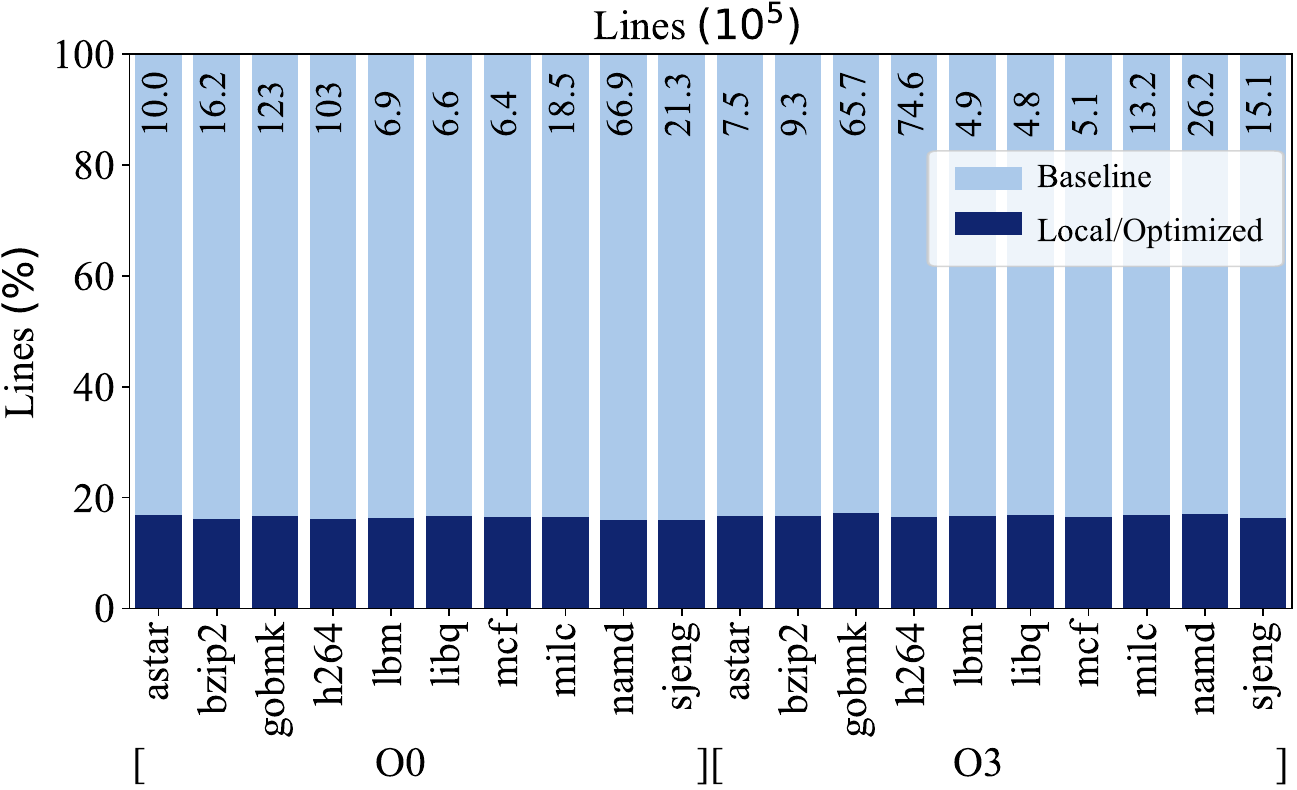}
            \Description{Description placeholder}
		\caption{Constraint file (task) size.}
		\label{fig:eval-lines}
	\end{subfigure}
	\hfill
	\begin{subfigure}[t]{0.49\textwidth}
		\centering
		\includegraphics[width=\textwidth]{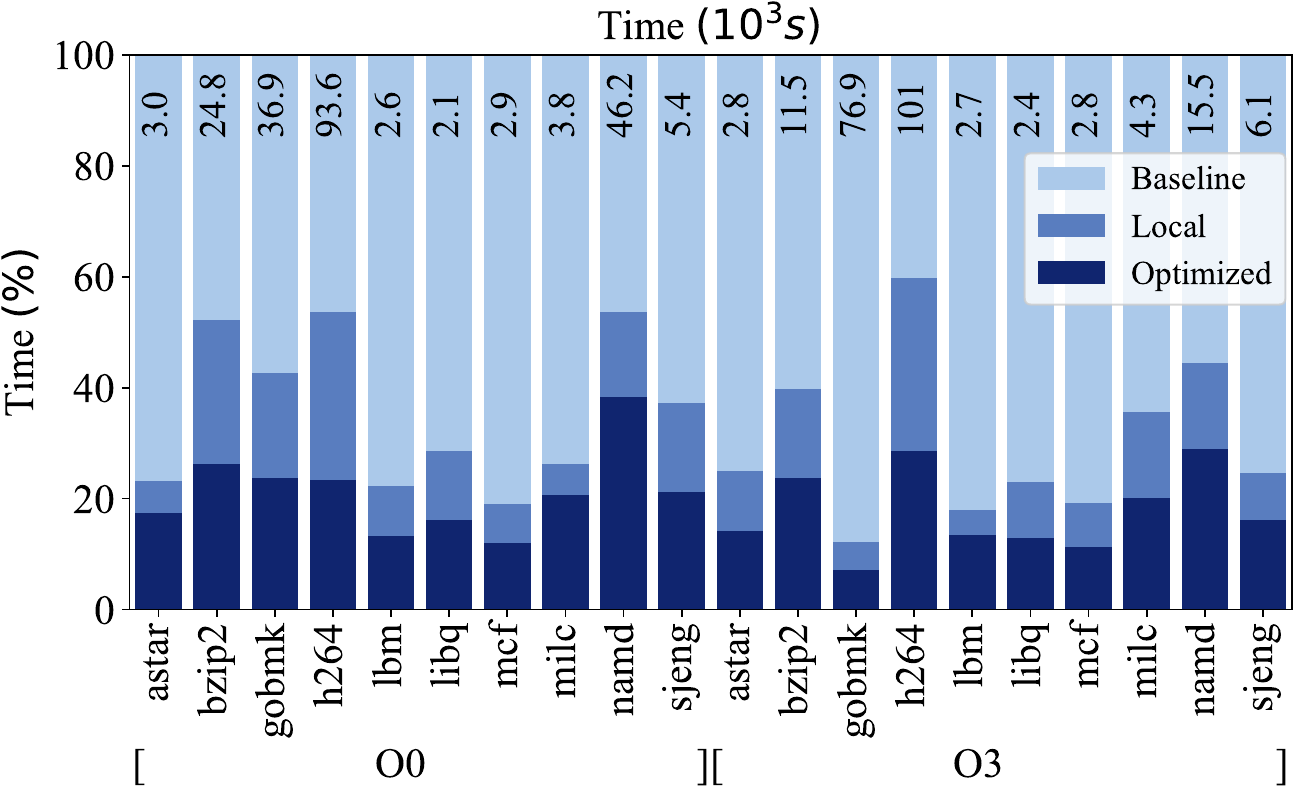}
            \Description{Description placeholder}
		\caption{Average constraint solving time.}
		\label{fig:eval-time}
	\end{subfigure}
        \Description{Description placeholder}
	\caption{Performance evaluation of the Binary Verifier on SPEC 2006.}
	\label{fig:eval}
\end{figure}

\subsubsection*{Constraint Solving}
Now, we evaluate the performance of our design (\textsc{Local}), comparing it with the \textsc{Baseline} setting without instruction-level \vc validation and optimizations.
From \autoref{fig:eval-lines}, the constraint file sizes of \textsc{Baseline} and \textsc{Local} show that local validation effectively removes semantics irrelevant to proving policy compliance.
This achieves an average of 84\% and 83\% task size reduction on \texttt{-O0} and \texttt{-O3} optimizations flags, respectively.
\textsc{Local} and \textsc{Optimized} have identical sizes due to only eliminating the disjunctions when checking memory access. 
We also assess the performance of constraint solving on different solvers.
\autoref{fig:eval-time} demonstrates the geometric mean solving time of 3 solvers for each binary.
We also include \textsc{Optimized} setting where the memory access category hint is enabled.
On average, local validation reduces 
solving time by 70\%, and hint optimization further reduces the time by 40\% (82\% overall reduction).
These reductions in task size and solving time indicate that our design can effectively prune policy-agnostic semantics and enhance performance.





\ignore{
Compared to the baseline version with no localized proof validation, the optimization reduces the constraint file size and solving time by roughly 28.02\% and 32.04\%, respectively on average.
The unoptimized BV generates 32,255,386 and 16,068,252 lines of constraints for all tests in SPEC 2006, respectively at compiler optimization level \texttt{-O0} and \texttt{-O3}.
Whereas the optimized BV generates 23,333,226 and 11,449,893 lines.
The proof file size reduces 27.66\% and 28.74\% for \texttt{-O0} and \texttt{-O3}, respectively.
\autoref{fig:optimization} demonstrates the constraint solving times for different SMT solvers, categorized by the compiler optimization levels.
We observe an average solving time reduction at 33.05\% and 30.78\% for \texttt{-O0} and \texttt{-O3}, respectively across all solvers, in which XXX has the best performance and XXX receives the greatest performance improvement.
}

\if lvi
\subsection{LVI Mitigation}

We also evaluate \name on Lfence policy, which verifies LVI mitigation.
The experiment setting is identical to the previous one, with the exception of switching the compilation toolchain to GNU \texttt{gcc/g++} 9.4 and \texttt{as} 2.36.1.
This policy is much simpler than LucetSFI: we implement the policy in 77 LOC.
Extending the semantics to incorporate the behavior of load buffers adds less than 100 LOC to the PC.
This minor modification effort exemplifies the extensibility and flexibility of \name.

We evaluate the BV and conduct constraint solving on the binaries from SPEC 2006.
Since the proof only contains relations of the load buffer, both proof validation and constraint generation are much simpler and faster than for LucetSFI.
All binaries are successfully verified in less than 870 seconds.
\fi

\subsection{Evaluation of IFC and LVI Mitigation}
\label{sec:lvi-evaluation}

For IFC and LVI Mitigation, we validate the correctness of \name-based verifier. The experiment setting is identical to the previous one, except that the binary being verified is generated by \confllvm and GNU \texttt{gcc/g++} 9.4 and \texttt{as} 2.36.1 (with option \texttt{-mlfence-after-load=yes}) for IFC and LVI Mitigation, respectively.
We run the BV and conduct constraint solving on the binaries from NGINX and OpenLDAP for IFC, and SPEC 2006 for LVI. 

All of the binaries are successfully verified by \name-based verifiers. Besides, we intentionally tampered with binaries for IFC check to maliciously leak secret information via modifying random magic sequences as well as segment registers. \name-based IFC verifier successfully identifies the violations and rejects those tampered binaries.

\ignore{

}
\subsection{Bug Bounty Protocol Cost Estimation}

We first evaluate the gas cost of deploying and interacting with the smart contracts.
We use Solidity v0.8.28 to compile our smart contracts and deploy them on the Sepolia Testnet~\cite{ethereumnetworks}.
The gas cost for deploying the BTM contract is 2,219,771, while it stands at 214,758 for publishing a task bundle, and 33,862 for recording the verification result. Regarding the censorship-resistant bug submission, the gas cost of verifying an ECDSA attestation quote is 5,546,083. The gas cost of queries is 0 for binary users.

Then we evaluate the cost of using the bug bounty protocol for developers who want all functions to be verified with high confidence (e.g., the probability that each function is verified at least 10 times by honest BBH is 99\%).
We create a task bundle including 200 tasks.
According to experiments, it takes 20 minutes on average to complete this bundle using a single CPU (c.f. \textsc{Optimized} in \autoref{sec:sfi-eval}). Considering the AWS instance with 4 CPUs and 8GB memory, which costs \$0.254 hourly, we set the basic reward at \$0.021.

Based on the quantitative analysis in~\autoref{sec:trustworthy-verification},
if $p$ of the ``no-bug'' submissions are from unique solving methods, $q$ of the tasks in a task bundle are genuine, then the cost of basic rewards
is depicted in~\autoref{fig:btm-cost}.
In the ideal scenario discussed in~\autoref{sec:trustworthy-verification}, the protocol costs \$1.18 to verify a binary with 200 functions.
Even in an extreme scenario where only 10\% of ``no-bug'' submissions are associated with unique solving methods, the protocol costs \$11.05 to achieve the same outcome.
The cost to submit a bug using the censorship-resistant channel is \$29.4 as of February 2025, in Ethereum.
The total transaction costs can be reduced to less than \$0.01 by considering an alternative to Ethereum, such as Arbitrum~\cite{arbitrum}.

\begin{figure}[!tbp]
	\centering
	\includegraphics[width=0.6\textwidth]{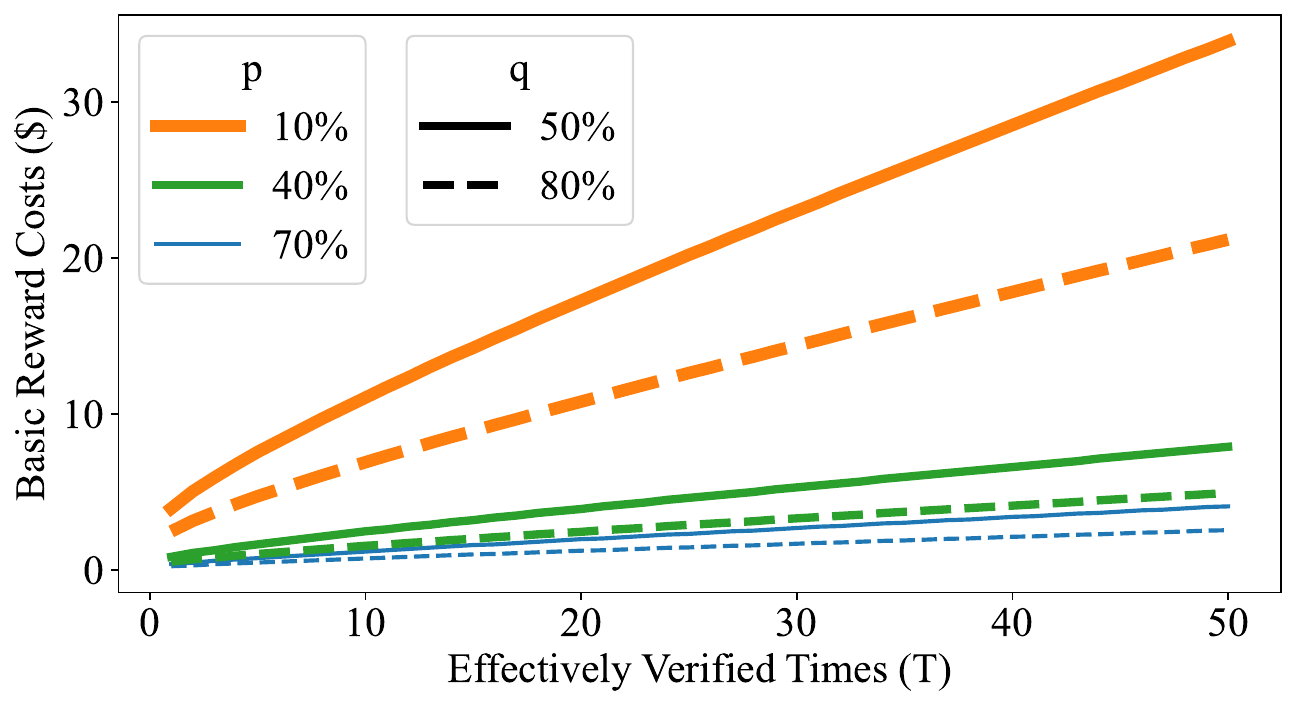}
        \Description{Description placeholder}
	\caption{Cost of basic rewards to ensure that 200 functions are effectively verified for T times over 99\% likelihood.}
\label{fig:btm-cost}
\end{figure}






\if 0
\subsubsection*{Future Work}
First, we plan to implement a full-fledged BTM with proprietary algorithms protection, i.e., code confidentiality.
The current design of the Task Mutator may raise the bar of reverse engineering the assembly code from the proof.
However, it does not achieve this goal.
This way, the software vendor need not worry about leaking proprietary algorithms, and the binary users can query the verification result with the measurement but without access to the binary code.
A more sophisticated obfuscation strategy might be employed to achieve this with a theoretical guarantee.
Second, we plan to extend the supported policies, such as introducing information flow control.
We could achieve this by creating a policy-specification language and a corresponding policy interpreter.
Last, we plan to further reduce the TCB of \name, by conducting formal verification of its components, such as assembler to IR transformation.
\fi

\section{Limitation and Discussion}

\if 0
\quan{I have split the ``Conclusion and Future Work'' section into two different sections. In this section, we will address the comments from the reviewers: explicitly list and discuss the limitations of several aspects of \name. Below is the previous version, which we will extend to meet the revision requirements.}

\textit{\textcolor{gray}{Revision requirements:
The reviewers thank the authors for the detailed response, which addressed all of the reviewers' concerns. This paper is accepted on the condition that the proposed changes in the author response and the following changes are implemented in the revision. Congratulations.
\begin{enumerate}
    \item Add a discussion section, including:
    \begin{enumerate}
        \item For implementation status, clearly distinguish which components (e.g., Task Fabricator, Task Converter, SAT Validator) are fully implemented vs. conceptual or in progress. Reflect this in both the architecture and evaluation sections.\\
        \quan{I think the reviewer meant we should maybe redraw the figures and change text in related sections. For example, maybe changing the Task Converter to a shadowed/striped blue to represent it's in design concept but not implemented?}
        \item For architectural extensibility, add a paragraph detailing the challenges and feasibility of extending Agora to ARM and RISC-V. Clarify how using untrusted disassemblers affects the TCB and extensibility.\\
        \quan{Here are some of my thoughts on this one: We also note that \name currently only supports x86 binaries. It is totally possible to extend the support to other architectures such as ARM and RISC-V, which are emerging as popular options in various cloud providers. The main challenge of supporting new architectures is to implement new IR lifting for them. However, given their root in the RISC ISA design, both ARM and RISC-V present less demands than x86 for support of complex disassembling strategy and hard-to-get-right side-effect semantics. In general, since we consider the TCB of the verification framework as components that we use, the IR support for RISC ISA's would further reduce the TCB of \name. Not only that, we can even consider bringing the disassembler into the TCB and remove the need of the an additional assembler on the backend, and directly certify the original binary. This opens the chance of a simplified workflow for those architectures.}
        \item To clarify UNSAT handling, explicitly state that UNSAT results are not formally validated. Describe how Agora relies on consensus among multiple BBHs and the role of reference BBHs in improving assurance.
        \quan{As mentioned in \autoref{sec:}, for must policies, solvers return \unsat to signify security compliance. While there are various efforts in the field of SMT to provide \unsat core as validatable evidence of \unsat results, their generality of support are still very limited. Given there are no reliable way of validating \unsat results of arbitrary SMT files, \name relies on the consensus among multiple BBHs. That is, \unsat results are non-binary, the more BBHs report \unsat for a specific SMT file, the more reliability of the binary or function related to the SMT can be viewed as. However, one lone validated \sat model revokes all \unsat results for the same file. \\
        Moreover, we deploy many "reference BBHs" that run different popular SMT solvers which are commonly used in verification toolchains. These officially deployed BBHs work under the same workflow and assumptions as regular BBHs. With them deployed by default, \name provides a framework that is at least as trustworthy as a generic verification framework which trusts all such solvers.}
    \end{enumerate}
    \item Improve the explanation of how Agora ensures the integrity of certified binaries and prevents tampering. Clarify how users can trust that the binaries they receive match the verified ones.
    \item Add a limitation section, including x86-only support, challenges in scaling to large codebases, and partial implementation. Briefly elaborate on how Agora plans to scale verification (e.g., via parallelism) and reduce gas costs (e.g., by migrating to chains like Arbitrum).
\end{enumerate}
}}

\quan{Hongbo, Sen and I drafted the new subsections below which address the reviewers comments.}
\fi

\subsubsection*{Comprehensive Implementation}
As a prototype, \name is not yet integrated with a formally verified assembler (e.g., CompCertELF~\cite{compcertELF}) to generate verified binaries.
Furthermore, the BTM also lacks a complete implementation of the Task Fabricator and Task Converter (see \autoref{fig:btm-workflow}) due to the significant engineering effort required.
We posit that designs from similar systems, such as term rewriting systems~\cite{trs}, could be adapted for this purpose.

\subsubsection*{Architectural Extensions}
\name's frontend uses iced-x86's IR as an interface to binaries, and the rest of Agora is architecture-independent. To extend \name's frontend to other architectures such as ARM and RISC-V, which are prominent platforms for emerging TEEs~\cite{arm:cca, riscv:penglai},
we can either lift their assembly code to our IR, or employ existing multi-arch disassembler toolchains (e.g., Capstone~\cite{capstone_web} and yaxpeax~\cite{yaxpeax}).
Because they are based on RISC ISA design, we believe that supporting ARM and RISC-V is less demanding than supporting x86.
Since the disassembler is untrusted in Agora, an architectural extension would not substantially affect the overall TCB of Agora. 

\subsubsection*{Reliability of \unsat Result} With \sat Validator in BTM, solvers return \sat to signify security compliance of may policies can be formally validated. However, as mentioned in \autoref{sec:policy}, for must policies, \name cannot formally validate \unsat results from solvers due to the challenge of validating \unsat results. While the SMT community has made significant efforts to utilize \unsat cores as verifiable proofs, a unified proof format with broad support across different theories remains lacking~\cite{besson2011flexible, hoenicke2022simple, andreotti2023carcara, czerner2024resolution}. Instead, \name crowdsources the solving tasks to BBHs, which provide a statistical security guarantee (\autoref{sec:trustworthy-verification}). That is, the confidence in \unsat results—and thus in the software's policy compliance—grows as more independent BBHs corroborate the finding. To address the concern that an insufficient number of BBHs can undermine the reliability of these results, \name's operator can deploy ``reference BBHs'' that run a diverse set of well-established SMT solvers for a minimum level of statistical security guarantee.
Our future work includes designing a dedicated \unsat verifier with a small TCB size or integrating an emerging one into \name.



\ignore{
\name currently has several limitations that we plan to address in the future.
Firstly, our implementation of the BV is not yet integrated with a formally verified assembler (e.g., CompCertELF~\cite{compcertELF}).
Secondly, the current modeling of the semantics is coarse-grained since the BV approximates some semantics using the information from \texttt{iced-x86}.
While this does not affect soundness, it may reject \vc[s] containing lines for instructions without detailed semantics modeling.
Lastly, \name currently only supports x86 binaries, but our design could be extended to other architectures.
}

\subsubsection*{Scalable Deployment}

Several opportunities exist to improve scalability and reduce costs for large-scale verification.
First, since verification tasks are typically independent, performance can be scaled by running multiple BV and BTM instances in parallel.
The open design of \name also encourages community contributions in the form of novel constraint-solving algorithms and efficient program analyses, which can further facilitate the verification of large systems.
Second, to manage transaction costs induced by the blockchain, which scale linearly with the number of verified binaries, smart contracts can be deployed on Ethereum Virtual Machine-compatible blockchains with lower fees, such as Arbitrum~\cite{arbitrum}. Such a migration would require no modifications to the smart contracts while benefiting from significantly reduced transaction costs.

\section{Conclusion}
\label{sec:conclusion}
In this work, we introduce \name, a verification service requiring minimal trust from users while open to versatile policies, untrusted contributions, and public audits.
Our design effectively offloads the heaviest components of program verification, namely the static analyzers and SMT solvers, to the untrusted domain.
Additionally, we implement a prototype of the BV and conduct evaluations on its verifiability and performance when checking SFI, IFC, and LVI mitigation policies.
We show \name's suitability for diverse verification tasks of confidential computing on the cloud.

\section*{Data-Availability Statement}
As explained in \autoref{sec:implementation}, we implemented the majority of \name, including the Binary Verifier as well as the smart contract and bug validator of the Bounty Task Manager.
Our artifact can be found on GitHub: \url{https://github.com/ya0guang/agora} or Zenodo~\cite{aedoi}.
\name requires hardware TEEs and participants to achieve its claimed security guarantees.
However, evaluating our artifact does not necessarily depend on them.
We plan to submit \name for Artifact Evaluation and make it available to the community.

\ignore{
\section{Research Ethics}
\label{sec:ethics}

The authors of \name carefully reviewed related documents and examined our work. 
In its design and implementation, \name is a generic verification framework aimed at proving the safety properties of programs. As a verification tool, we believe that there is no ethical concern that lies in \name.

However, one of the verification framework's job is to uncover new bugs in programs. During our case study, we identified a few unexpected events when verifying one of the security policies (see discussion in \autoref{sec:sfi-eval}). While these problems do not incur any security concerns on real-world programs (thus, no CVEs should be raised), we reported our findings to the authors of the corresponding work. 

\section{Open Science}
\label{sec:openscience}

To support the open science policy, the authors are willing to open source the implementation of \name once the paper is accepted. Any dataset and evaluation results of corresponding case studies (excluding those under licensing restrictions) will also be shared.
}

\begin{acks}
We would like to express our gratitude to the anonymous reviewers for their insightful feedback and suggestions. 
We also appreciate the insights from Prof. Ning Luo.
This research was sponsored by National Science Foundation grants CNS-2207231, CNS-2401496, CNS-2401182, CNS-2207214 and a gift from Intel. 
Any opinions, findings, and conclusions or recommendations expressed in this material are those of the author(s) and do not necessarily reflect the views of the National Science Foundation and Intel Corporation.
\end{acks}

\appendix

\newcommand{\onrecv}{\textcolor{mygreen}{\textbf{On receive}}}
\newcommand{\oninit}{\textcolor{mygreen}{\textbf{On initialization}}}
\newcommand{\func}[1]{\textcolor{idealfun}{\text{#1}}}
\newcommand{\msg}[1]{\textcolor{myviolet}{\text{``#1''}}}
\newcommand{\cmt}[1]{\textcolor{mygrey}{\text{// #1}}}

\label{sec:appendix}

\if 0

\section{Verifying SFI (Continued)}
\label{appx:sfi}

\subsection{Memory Access Safety}
\label{appx:mem-access}

\subsubsection*{Stack Access Safety}
\label{appx:stack}
The safety of stack access is largely the same as a heap access, with a few minor differences.
The base-stack pointer $\tt rbp$ at function start is an anchor point of its stack frame where there is an 8KB read-only section above it (holding the return address and spilled arguments of the function), as well as a 4KB read and write region below it (holding the function's local variables). 
The stack pointer $\tt rsp$ is modified via instructions such as $\tt pop$ and $\tt push$ in the function.

To verify memory access instructions comply with stack access safety, the following rules must be checked:
\begin{align*}
    \tt{STACKR\text{\ding{51}}} \equiv~ &\tt{rbp.0} - 4\tt{K} < \tt{Addr} \land \tt{Addr} < \tt{rbp.0} + 8\tt{K} \\
    \tt{STACKW\text{\ding{51}}}\equiv~ &\tt{rbp.0} - 4\tt{K} < \tt{Addr} \land \tt{Addr} < \tt{rbp.0}
\end{align*}
where ${\tt Addr}$ is the target of the validated memory access instruction.
Following the same workflow in verifying heap access safety, we can verify stack safety. 


\if 0
\begin{lstlisting}[label={code:stack}, caption=Stack write example and its proofs.]
==== SSA-like Assembly ====
  0x95e0: push %rbp.0  // rsp.1 = rsp.0 - 0x8
                       // q[rsp.1] = rbp.0
  0x95e1: mov  %rsp.1,%rbp.1
  0x95e4: sub  $0x20,%rsp.1 // dest = %rsp.2
  0x95e8: mov  %r13.0,[%rsp.2+0x8]
  
==== SSA Proof ====
  0x95e0: rsp.1 = rsp.0 - 0x8
  0x95e4: rsp.2 = rsp.1 - 0x20
\end{lstlisting}

\autoref{code:stack} shows a function prologue and a few instructions following it. There are two stack writes in this snippet. At \addr{95e0}, the $\tt push$ instruction first decreases $\tt rsp$ by 8 bytes and writes $\tt rbp.0$ onto the stack. The proof at \addr{95e0} provides sufficient information for $\tt{STACKWRITE\text{\ding{51}}}$ to verify at this instruction. Similar for the stack write at \addr{95e8}, the write address $\tt rsp.2 + \addr{8}$ also complies with the $\tt{STACKWRITE\text{\ding{51}}}$ check.
\fi

\subsubsection*{Global Access Safety} 
\label{appx:global}
The safety of global access is also similar to heap access, with a few changes. 

In Lucet-compiled code, the global memory region is a continuous space of 4KB. However, unlike heap memory, its base address is stored at a specific memory location: 32 bytes below the heap base.  To track which register stores the global base address, we introduce a distinguished symbol $\tt GB$.
\if false
------------ MODIFY ------------
Upon encountering an instruction of the form \verb|mov [Addr],Reg| (e.g., \verb|mov [%rdi-0x20],%rax|), which loads the global base address into a register, the BV validates the assertion ({\tt GlobalBaseLoc} {\tt Addr}) and adds it to the fact set $\mathcal{F}$.
To encode this event, Polich Checker provides a special derivation rule:
\[
  ({\tt GlobalBaseLoc}~{\tt Addr}) \rightarrow ({\tt GlobalBaseAddr}~{\tt Reg})
\]
------------ MODIFIED ------------
\fi
%
%
Upon encountering an \vc $\tt Reg=GB$ associated with an instruction $\tt mov ~[MemAddr],Reg$, the following proof obligation must be validated:
\[
  ({\tt rdi.0 - 32} = {\tt MemAddr}) \rightarrow ({\tt Reg}={\tt GB})
\]

To verify that a memory access instruction complies with global access safety, \name checks the following rule:
\begin{align*}
  \tt{GLOBAL\text{\ding{51}}} \equiv~ &{\tt GB} < {\tt Addr} \land {\tt Addr} < {\tt GB + 4{\tt K}}
\end{align*}
where ${\tt Addr}$ is the target of the validated memory access instruction. The rest of the verification process is the same as checking heap safety.
\autoref{code:global} shows a simple example of a global memory read and its safety \vc. At \addr{3fb0}, the heap base is loaded into $\tt r12.1$. The next instruction and its \vc can be validated through the implication above, letting the verifier know that $\tt rcx.1$ holds the global memory base $\tt GB$. Eventually, a global read is performed on the address $\tt rcx.2 + \addr{8}$, which is effectively ${\tt GB} + \addr{18}$. This is consistent with the $\tt{GLOBAL\text{\ding{51}}}$ rule, thus verifying the global access safety of this instruction.

\begin{lstlisting}[label={code:global}, caption={Global access example and its assertions.}, float]
==== SSA-like Assembly ====
  0x3fb0: mov  %rdi.0, %r12.1
  0x3fb3: mov  [%r12.1 - 0x20],%rcx.1
  0x3fb8: add  0x10, %rcx.1  // dest = %rcx.2
  0x3fbc: mov  [%rcx.2 + 0x8], %r13.1
  
==== SSA Assertions ====
  0x3fb0: r12.1 = rdi.0
  0x3fb3: rcx.1 = GB
  0x3fb8: rcx.2 = rcx.1 + 0x10
\end{lstlisting}

\if 0
The safety of stack access is largely the same as a heap access above, with a few minor differences.

In Lucet-compiled code, the stack grows down. 
The base-stack pointer $\tt rbp$ at function start is an anchor point of its stack frame where there is an 8KB read only section above it (holding the return address and spilled arguments of the function), as well as a 4KB read and write region below it (holding the function's local variables). 
The stack pointer $\tt rsp$ is modified via instructions such as $\tt pop$ and $\tt push$ in the function.

In order to check memory access instructions comply with stack access safety, the following needs to be checked:
\begin{align*}
    \tt{STACKR\text{\ding{51}}} \equiv~ &\tt{rbp.0} - 4\tt{K} < \tt{Addr} \land \tt{Addr} < \tt{rbp.0} + 8\tt{K} \\
    \tt{STACKW\text{\ding{51}}}\equiv~ &\tt{rbp.0} - 4\tt{K} < \tt{Addr} \land \tt{Addr} < \tt{rbp.0}
\end{align*}
where ${\tt Addr}$ is the target of the validated memory access instruction.
Following the same workflow in verifying heap access safety, we can verify stack safety. 


\if 0
\begin{lstlisting}[label={code:stack}, caption=Stack write example and its proofs.]
==== SSA-like Assembly ====
  0x95e0: push %rbp.0  // rsp.1 = rsp.0 - 0x8
                       // q[rsp.1] = rbp.0
  0x95e1: mov  %rsp.1,%rbp.1
  0x95e4: sub  $0x20,%rsp.1 // dest = %rsp.2
  0x95e8: mov  %r13.0,[%rsp.2+0x8]
  
==== SSA Proof ====
  0x95e0: rsp.1 = rsp.0 - 0x8
  0x95e4: rsp.2 = rsp.1 - 0x20
\end{lstlisting}

\autoref{code:stack} shows a function prologue and a few instructions following it. There are two stack writes in this snippet. At \addr{95e0}, the $\tt push$ instruction first decreases $\tt rsp$ by 8 bytes and writes $\tt rbp.0$ onto the stack. The \vc at \addr{95e0} provides sufficient information for $\tt{STACKWRITE\text{\ding{51}}}$ to verify at this instruction. Similar for the stack write at \addr{95e8}, the write address $\tt rsp.2 + \addr{8}$ also complies with the $\tt{STACKWRITE\text{\ding{51}}}$ check.
\fi
\fi

\subsection{Control Flow Safety} 
\label{appx:control-flow}
\subsubsection*{Indirect Jump Safety} 
\label{appx:jump}
Similar to indirect calls, Lucet compiles indirect jumps in a specific pattern so that verification is feasible. In this case, valid jump offset entries are stored in 4-byte aligned jump tables, baked directly into the code region, immediately after the indirect jump instructions. 

For each indirect jump, Lucet first calculates an index to its corresponding jump table (starting at $\tt JT$) and checks that the index smaller than the entry size. Next, the jump table is accessed at address $\tt JT + index * 4$ to load the offset, say $\tt off$ from $\tt JT$. Finally, an indirect jump instruction jumps to $\tt JT+\tt off$. To verify indirect jump safety, the policy specification introduces four predicates:
\begin{itemize}
    \item $\tt JT ~ Addr$, stating that $\tt Addr$ starts at \emph{a} jump table.
    \item $\tt JTS ~ Addr ~ s
    $, stating that for the jump table at $\tt Addr$, its number of entries is $\tt s$.
    \item $\tt JmpOff ~ Addr ~ off$, stating that $\tt off$ is a valid offset in the jump table at $\tt Addr$.
    \item $\tt JmpTgt ~ v$, stating that $v$ is a valid jump target.
\end{itemize}

Assume a binary with two jump tables at $\addr{64f3}$ and at $\addr{64c0}$ with 9 and 3 entries respectively, the BV first establishes the initial set of $\mathcal{F}$ as
\begin{align*}
    \mathcal{F} = \{
        &{\tt JT}~\addr{64f3},
        {\tt JTS}~\addr{64f3}~{\tt 9},  \\
        &{\tt JT}~\addr{66c0},
        {\tt JTS}~\addr{66c0}~{\tt 3}
    \}
\end{align*}

To validate an \vc ${\tt JmpOff ~\addr{64f3}} ~{\tt Reg}$ (i.e., $\tt Reg$ holds a valid offset for the jump table at \addr{64f3}), a derivation rule handles the instruction $\tt mov~[MemAddr], Reg$:
\begin{align*}
    &({\tt JT}~\addr{64f3}) \land \addr{64f3} \leq {\tt MemAddr} ~\land \\
    &~{\tt MemAddr} < \addr{64f3} + ({\tt JTS}~\addr{64f3}) * {\tt 4} ~ \land \\
    &~({\tt MemAddr} - \addr{64f3}) ~ {\tt BITAND} ~ \addr{3} = \addr{0}
\end{align*}
where the first clause verifies that \addr{64f3} is indeed a jump table head. The next two clauses ensure that the address is within the designated table. The last clause checks that the address is 4-byte aligned from $\tt JT$.

To validate the next \vc ${\tt JmpTgt ~ Dst}$ on an instruction $\tt add ~ Src, Dst$ equivalent to $\tt Dst' = Src + Dst$, another derivation rule is provided:
\begin{align*}
    &({\tt JT} ~ {\tt Src}) \land ({\tt JmpOff} ~ {\tt Src} ~ {\tt Dst})
\end{align*}

When encountering $\tt Jmp ~ Reg$ instructions, the Obligation Generator generates proof obligations according to the rule:
\[
    {\tt IND\_JUMP\text{\ding{51}}} \equiv {\tt JmpTgt ~ Reg}
\]

\if 0
\begin{lstlisting}[label={code:jump}, caption=Indirect jump example and its proofs]
==== SSA-like Assembly ====
  ... // calculating %eax.1
  0x64d7: cmp  $0x9, %eax.1
  0x64da: jae  0x6517
  0x64e0: mov  %eax.1, %edi.1 // zext to %rdi.1
  0x64e2: mov  $0x64f3, %rsi.1
  0x64e9: mov  [%rsi.1 + %rdi.1 * 0x4], %rdi.2
  0x64ee: add  %rdi.2, %rsi.1 // dest = %rsi.2
  0x64f1: jmp  %rsi.2
  0x64f3: ... // the jump table
  ...
  0x6517: ud2

==== SSA Proof ====
  0x64d7: cf.1 := (eax.1 < 0x9)
  0x64e0: rdi.1 := eax.1
  0x64e2: rsi.1 := 0x64f3
  0x64e9: JmpOff 0x64f3 rdi.2
  0x64ee: JmpTgt rsi.2
\end{lstlisting}
\fi

\subsubsection*{Return Address Safety}
\label{appx:return}
The rule to be checked at a return site (assuming the current version of $\tt rsp$ is $\tt rsp.i$) is simply
\[
    {\tt RETURN\text{\ding{51}}} \equiv {\tt rbp.0 = rsp.i}
\]
The same methodology used in checking stack safety can be applied, as both are mainly handling $\tt rsp$ and $\tt rbp$.






\fi

\section{The Workflow of BTM}
\label{appx:btm-full-workflow}

In this section, we elaborate on the detailed workflow of the BTM. We formally outline the \smartcontract's functionality in~\autoref{fig:smart-contract} and the BTM TEE's workflow in ~\autoref{fig:btm-protocol}.

\begin{figure}[!t]
\protocol{Smart Contract}{
\onrecv{} ($\msg{Create}, \pkbtm, \btmhash{}$): \\
\quad Store $\pkbtm$ and $\btmhash{}$ \\
\quad $\functions \gets \{\}$, $\taskbundles \gets \{\}$ \\
\onrecv{} ($\msg{PublishTaskBundle}, \text{msg}, \text{sig})$: \\
\quad \pcif{$\verifySignature(\text{msg}, \text{sig},\pkbtm{})$} \\
\quad\quad $\bundlehash{}, B, \addrbbh, deadline \gets \text{msg}$ \\
\quad\quad $\taskbundles[\bundlehash{}] \gets (B, \addrbbh{}, deadline, \false, \bot, \bot)$ \\
\onrecv{} ($\msg{ClaimReward}, \text{msg}, \text{sig}, \text{quote})$: \\
\quad\pcif{$\text{sig} \neq \bot$}  \\
\quad\quad $verified \gets \verifySignature(\text{msg}, \text{sig},\pkbtm{})$ \\
\quad \pcelse \cmt{censorship-resistant submission channel} \\
\quad\quad $verified \gets \verifyAttestation(\text{msg}, \text{quote}, \btmhash{})$   \\
\quad $\bundlehash{}, \addrbbhbtm{}, reward, \verifiedtasks, \buggytasks \gets \text{msg}$ \\
\quad $\_, \addrbbhsc{}, deadline, \_,\_,\_ \gets \taskbundles[\bundlehash{}]$ \\
\quad \pcif{$verified \land (\blocktime{} > deadline \lor 
  \addrbbhsc{} = \addrbbhbtm{})$}  \\
\quad\quad pay $reward$ to $\addrbbhbtm{}$ \\
\quad\quad $\taskbundles[\bundlehash{}] \gets (\_,\_,\_, \true,\verifiedtasks, \buggytasks)$ \\
\onrecv{} ($\msg{UpdateFunctions}, \text{msg}, \text{sig}$): \\
\quad \pcif{$\verifySignature(\text{msg}, \text{sig},\pkbtm{})$} \\
\quad\quad $\updatedfunctions \gets \text{msg}$ \\
\quad\quad traverse $\updatedfunctions$ to update $\functions$ \\
\onrecv{} ($\msg{QueryFunction}, \functionhash$): \\
\quad\quad \pcreturn $\functions[\functionhash{}]$ \\
\onrecv{} ($\msg{QueryTaskBundle}, \bundlehash{}$): \\
\quad\quad \pcreturn $\taskbundles[\bundlehash{}]$
\quad
}
\Description{Description placeholder}
\caption{The pseudocode of the \smartcontract{}. $\verifySignature{}$ use the embedded $\pkbtm{}$ to verify signatures and $\verifyAttestation{}$ use the embedded $\btmhash{}$ to verify TEE quotes from remote attestation. $\blocktime$ is a blockchain built-in variable representing the current time.}
\label{fig:smart-contract}
\end{figure}

\begin{figure}[!htbp]
\centering
\protocol{BTM TEE}{
\oninit{}:\\
\quad $\skbtm \sample \bin^{256}$ \cmt{signing key of the BTM or $\bot$} \\
\quad derive $\pkbtm$ from $\skbtm$ \\
\quad initialize $M$ \cmt{task to function mapping}\\
\onrecv{} ($\msg{ValidateSubmittedAnswer}$, \\
\qquad $(a_1, \dots, a_n), \bundlehash{}, \addrbbh{}$) \cmt{input from BBH} \\
\quad fetch task bundle $B=(T_1, \dots, T_n)$ from SC using $\bundlehash{}$ \\
\quad $(g_1, \cdots, g_n) \gets \determine(B)$ \\
\quad $reward \gets 0, \verifiedtasks \gets [], \buggytasks \gets []$ \\
\quad \pcfor{$i \gets 1$ \textbf{to} $n$} \\
\qquad \pcif{$g_i = \false$} \\
\quad\qquad \pcif{$a_i = \unsat \lor \neg\validate(a_i, T_i)$} \\
\qquad\qquad $reward = 0$; break \\
\quad\qquad \pcelse ~ $\verifiedtasks.push(T_i)$ \\
\quad \pcif{$reward \neq 0$} \\
\qquad \pcfor{$i \gets 1$ \textbf{to} $n$} \\
\quad\qquad \pcif{$g_i = \true \land a_i \neq \unsat$} \\
\qquad\qquad  \pcif{$\validate(a_i, T_i)$} \\
\quad\qquad\qquad $\buggytasks.push(T_i)$ \\
\quad\qquad\qquad $reward = reward + \bugreward$  \\
\qquad\qquad \pcelse ~ $reward = 0$; break \\
\quad \pcif{$reward \neq 0$} \\
\qquad $\text{msg} \gets (\bundlehash{}, \addrbbh{}, reward, \verifiedtasks, \buggytasks)$ \\
\qquad \pcif{$\skbtm \neq \bot$} \\
\quad\qquad use $\skbtm{}$ to sign msg and get sig \\
\qquad \pcelse ~ $\text{sig} \gets \bot$ \\
\quad conduct attestation and get quote \\
\quad send (msg, sig, quote) to \BBH{} \\
\onrecv{} ($\msg{UpdateVerificationResult}$, $(H_{{\text{B}_{0}}}, \dots, H_{{\text{B}_{m}}})$): \\
\quad $\updatedfunctions \gets \{\}$ \cmt{verified task bundle hashes}\\
\quad \pcfor{$i \gets 1$ \textbf{to} $m$} \\
\qquad fetch $\verifiedtasks_i$, $\buggytasks_i$ from SC using $H_{{\text{B}_{i}}}$ \\
\qquad \pcfor{$T$ \textbf{in} $\verifiedtasks_i$} \\
\quad\qquad $F \gets M[T]$; update $\updatedfunctions$ with $F$ \\
\qquad \pcfor{$T$ \textbf{in} $\buggytasks_i$} \\
\quad\qquad $F \gets M[T]$; update $\updatedfunctions$ with $F$ \\
\quad \pcif{$\skbtm \neq \bot$} \\
\qquad $\text{msg} \gets \updatedfunctions$ \\
\qquad use $\skbtm{}$ to sign msg and get sig \\
\quad \quad send ($\msg{UpdateFunctions}$, msg, sig) to SC
}
\Description{Description placeholder}
\caption{The pseudocode of \BTM{} TEE}
\label{fig:btm-protocol}
\end{figure}

\subsection{System Setup}
\label{appx:systemsetup}
The system is deployed and managed by an untrusted administrator, and other entities are illustrated in~\autoref{fig:btm-workflow}.
The BTM deployed on TEE consists of three components: Task Fabricator, Task Converter, and \sat Validator.
The smart contract (SC) deployed on the blockchain records verification results and facilitates reward payments.

After attestation, the BTM and the BV establish a secure channel, ensuring that the BTM only takes input from the trusted BV.
Then, the BTM generates a key pair ($\skbtm{},\pkbtm{}$), and the SC records its $\pkbtm{}$.
This allows the SC to verify that the BTM sent a message.
The SC is programmed so that only the BTM can create and update bug bounty tasks.
Besides, the SC also records the measurement of the BTM ($\btmhash{}$) to verify attestations (c.f.~\autoref{sec:censorship-resistant-bug-submission}).

\subsection{Generating and Publishing Tasks}

To begin with, BBHs request {\em task bundles} from the BTM (\ding{172}).
Once the BTM receives a request, it generates a task bundle $B=(T_1, \dots, T_n)$ where $T_i$ is a fabricated or genuine task with probability $Q$.
To generate a genuine task, the BTM randomly picks a set of constraints that correspond to a function to be verified.
To generate a fabricated task, the Task Fabricator removes certain constraints from the original constraint files to deliberately introduce bugs.

Both the genuine and fabricated tasks undergo conversion by the Task Converter.
Based on term rewriting system~\cite{trs}, the Task Converter converts a set of constraints to a series of equivalent constraints~\cite{huth2004logic} (for a constraint set $P$, $\bigwedge P \leftrightarrow \bigwedge\mathtt{mutate}(P)$), which \textit{always} have the same satisfiability).
Note that the constraints can be regarded as logical propositions, and there is no polynomial time algorithm to determine if two sets of logical propositions are equivalent~\cite{eiter2004simplifying, eiter2007semantical}.
Hence, solving past tasks does not aid BBHs in solving new tasks, and BBHs cannot determine if a task is genuine (i.e., $\unsat$) without solving it.

For any given function $F$, a task bundle includes at most one genuine task derived from $F$.
The BTM maintains a mapping $M$ from genuine tasks to the functions from which they are derived.
This mapping is concealed from the BBH as $M$ may help \BBH{}s to give \unsat answers without conducting constraint solving.

Each task bundle $B$ has a corresponding Boolean array $G_B=(g_1, \dots, g_n)$ stored in the BTM, recording if $T_i$ in $B$ is genuine.
The function $\determine(B)$ returns the stored corresponding array $G_B$.

The BTM publishes the task bundle $B$ with its hash $\bundlehash{}$ on the blockchain and sets a deadline before which only the requester can submit the solutions for $B$ (\ding{173}). Any unresolved task bundle becomes available to all BBHs after the time threshold, preventing malicious \BBH{} from hoarding task bundles but not solving them. The task bundle is recorded in the SC after verifying the BTM's signature (\msg{PublishTaskBundle} in \autoref{fig:smart-contract}).

The \BBH{} can now fetch $B$ using $\bundlehash{}$ from the SC (\ding{174}) and work on it to produce and submit an {\em answer} to the SC to claim rewards, as we discuss next.

\subsection{Answer Submission}

For a given task bundle $B=(T_1,\dots,T_n)$, a BBH needs to produce an {\em answer} in the form of $A=(a_1, \dots, a_n)$ where $a_i$ is either $\unsat$ result or a satisfiable model to $T_i$.
The BBH submits the answer $A$, the task bundle hash $\bundlehash{}$, and her address $\addrbbh{}$ (to receive reward payment) to the BTM (\ding{175}) for answer validation.

As described in~\autoref{fig:btm-protocol} (\msg{ValidateSubmittedAnswer}), once the BTM receives the answer validation request, it first fetches task bundle $B$ from the SC using $\bundlehash{}$. The function $\validate(a_i, T_i)$ returns true if and only if $a_i$ satisfies $T_i$. 
An answer is {\em correct} if all fabricated tasks are solved correctly. 
BBHs who submit a correct answer will receive a basic reward of $\basicreward{}$. Additionally, if the answer includes a satisfying model for a genuine task, it will be rewarded with a bug reward of $\bugreward{}$. Finally, the BTM returns the signed validation result to the \BBH{} (\ding{176}).

The \BBH{} sends the validation result to the SC. Once the signature is verified, the SC pays the BBH rewards and labels the verified and buggy tasks in the task bundle (see~\autoref{fig:smart-contract}, \msg{ClaimReward}). 

As described in~\autoref{fig:btm-protocol} (\msg{UpdateVerificationResult}), the BTM updates the verification results on the blockchain after answer validation. The SC handles the update request after verifying the BTM's signature (described in~\autoref{fig:smart-contract}, \msg{UpdateFunctions}). For each function $F$ to be verified, the SC stores the counts of submissions $F$ that have been verified and any confirmed bugs in $F$.

\subsection{Querying the Result}
Binary users query the smart contract for verification results and compare the obtained verification counts (\ding{177}) with their threshold counts to determine binary's trustworthiness.


\subsection{Censorship-resistant Submission Channels}
\label{sec:censorship-resistant-bug-submission}

We provide a censorship-resistant submission channel as an alternative when the official BTM TEE is unavailable.
In this case, the \BBH{}s run the BTM in a local TEE.
The locally deployed BTM can attest its identity to the \smartcontract{} via remote attestation rather than using a signature.
The only difference in \autoref{fig:btm-protocol} (\msg{ValidateSubmittedAnswer}) is that $\skbtm$ is set to $\bot$, resulting in an empty signature.
When the SC receives the verification result, it validates the result by comparing the attestation quote with the previously embedded measurement ($\btmhash$) and verifying the certificates signed for the quote, confirming the BTM TEE used by the BBH is authentic.
This guarantees that bugs can be reported even if the BTM is unavailable or deliberately refusing service.

\ignore{
\autoref{fig:eval-lines} showcases the constraint file sizes of binaries from SPEC 2006 under different settings, whereas 
\autoref{fig:eval-time} exhibits the average constraint solving time.}

\if 0
\section{Quantitative Analysis of the BTM}
\label{appx:btm-cost-evaluation}

\begin{figure}[t]
	\centering
	\includegraphics[width=0.95\columnwidth]{figures/btm-cost.pdf}
	\caption{Assuming that $p$ of the basic rewards are claimed by BBHs who use unique solving methods and that $q$ of the tasks are genuine tasks, the costs of basic rewards required to verify every function in a binary at least $X$ times with a 99\% probability.}
\label{fig:btm-cost}
\end{figure}

Suppose the BV and BTM TEEs are not compromised, all BBHs are rational, the computational cost for solving every task is consistent, and a basic reward is equivalent to the cost of solving a task bundle. Under these conditions, we estimate the cost of the bug bounty protocol in various scenarios.

Since different BBHs may adopt the same solver to resolve the task bundles, not all submissions will enhance the verification results.
We define a function $F$ as having received $m$ submissions when $m$ task bundles, containing genuine tasks derived from $F$, are claimed with basic rewards.
A function $F$ is deemed verified $T$ times upon receiving $T$ submissions from the BBHs who use unique solving methods. Subsequently, we compute the probability that a specified function has been verified $T$ times after receiving $m$ submissions.


Assuming that $p$ of the basic rewards are claimed by the BBHs who use unique solving methods, then the probability of a function $F$ undergoing verification is $p$. We represent the expected number of counts a function is verified as $T$. Thus employing the binomial distribution, we can deduce that the probability of a function being verified at least $T$ times after receiving $m$ submissions is $Pr(X \geq T)=1-Pr(X \leq T-1)=1-\sum_{k=0}^{T-1}{\binom{m}{k}}p^{k}(1-p)^{m-k}$.

For a binary with $N$ functions, after each function has received $m$ submissions, the probability that all $N$ functions are verified $T$ times is at least $1-N(\sum_{k=0}^{T-1}{\binom{m}{k}}p^{k}(1-p)^{m-k})$, as deduced by applying the union bound.


If a task bundle comprises 200 tasks, of which $q$ are genuine, and a user seeks to verify a binary containing 200 functions, then the cost of basic rewards—varying with the diversity of solving methods by BBHs and the desired number of verification counts by the binary user—is depicted in~\autoref{fig:btm-cost}.

Assuming a user believes that verifying a function at least 10 times ensures sufficient security, we consider an ideal scenario where 70\% of the basic rewards are claimed by the BBHs who use unique solving methods. In this case, the probability that a given function is verified at least 10 times exceeds 99\% after receiving 28 submissions.

Under this assumption, if the binary being verified comprises 200 functions and a task bundle contains 200 tasks, of which 50\% are genuine, then the total basic rewards amount to \$1.18. The transaction cost on Ethereum stands at \$617.79. However, this expense can be mitigated by utilizing alternative blockchains. For example, transacting on Arbitrum~\cite{arbitrum} costs only \$0.0022. In an extreme scenario where only 10\% of the basic rewards are claimed by the BBHs who use unique solving methods, the basic reward costs are \$11.05, and the transaction costs are \$5,802.77 using Ethereum or \$0.02 using Arbitrum.
\fi


\bibliographystyle{ACM-Reference-Format}
\bibliography{ref}

\end{document}